\def\VersionLong{}
	\newcommand{\LongVersion}[1]{#1}
	\newcommand{\ShortVersion}[1]{}
	\newcommand{\LongVersion}[1]{}
	\newcommand{\ShortVersion}[1]{#1}
\newenvironment{ienumeration}
	{\ifdefined\VersionLong\begin{enumerate}\else\begin{inparaenum}[\itshape i\upshape)]\fi}
	{\ifdefined\VersionLong\end{enumerate}\else\end{inparaenum}\fi}
\newenvironment{oneenumeration}
	{\ifdefined\VersionLong\begin{enumerate}\else\begin{inparaenum}[1)]\fi}
	{\ifdefined\VersionLong\end{enumerate}\else\end{inparaenum}\fi}
	\definecolor{mygreen}{rgb}{0,0.6,0}
	\definecolor{mygray}{rgb}{0.5,0.5,0.5}
	\definecolor{mymauve}{rgb}{0.58,0,0.82}
	\definecolor{weborange}{RGB}{255,165,0}
\lstdefinestyle{log}{
	backgroundcolor=\color{white},   %
	basicstyle=\scriptsize,        %
	breakatwhitespace=false,         %
	breaklines=true,                 %
	captionpos=b,                    %
	commentstyle=\color{mygreen},    %
	deletekeywords={...},            %
	escapeinside={\%*}{*)},          %
	extendedchars=true,              %
	frame=single,	                   %
	keepspaces=true,                 %
	keywordstyle=\color{red!70!black}\bfseries,       %
	morekeywords={@, open, close, update},            %
	numbers=left,                    %
	numbersep=5pt,                   %
	numberstyle=\tiny\color{mygray}, %
	rulecolor=\color{black},         %
	showspaces=false,                %
	showstringspaces=false,          %
	showtabs=false,                  %
	stepnumber=1,                    %
	stringstyle=\color{mymauve},     %
	tabsize=2,	                   %
	classoffset=1, %
	otherkeywords={@},
	morekeywords={@},
	keywordstyle=\color{weborange},
	classoffset=0,
}
\footnotesize\printfield{doi}}
\definecolor{USPNcobalt}{HTML}{293358}
\definecolor{USPNocre}{HTML}{8b7d6d}
\definecolor{USPNblanc}{HTML}{ffffff}
\definecolor{USPNceruleen}{HTML}{354878}
\definecolor{USPNsable}{HTML}{ad947e}
\crefname{line}{\text{line}}{\text{lines}} %
\crefname{assumption}{\text{Assumption}}{\text{Assumptions}} %
\tikzstyle{every node}=[initial text=]
	\tikzstyle{location}=[circle, minimum size=12pt, draw=black, fill=blue!10, inner sep=1pt] %
\tikzstyle{final}=[double]
\tikzstyle{accepting}=[final]
\tikzstyle{transition}=[->]
\tikzstyle{PTPMOPT}=[,dashed,color=red,semithick]
\tikzstyle{pta}=[auto, ->, >=stealth']
\tikzstyle{invariant}=[draw=black, dashed, inner sep=1pt]
\tikzstyle{urgent}=[dotted, draw=red, very thick]
\newcommand{\styleact}[1]{\ensuremath{\textcolor{coloract!80!black}{\mathrm{#1}}}}
\newcommand{\styleclock}[1]{\ensuremath{\textcolor{colorclock!80!black}{#1}}}
\newcommand{\styleloc}[1]{\ensuremath{\mathrm{#1}}}
\newcommand{\styleparam}[1]{\ensuremath{\textcolor{colorparam!80!black}{#1}}} %
\newcommand{\stylepathvar}[1]{\ensuremath{\textcolor{colorvar!80!black}{#1}}}
	\definecolor{coloract}{rgb}{0.50, 0.70, 0.30}
	\definecolor{colorclock}{rgb}{0.4, 0.4, 1}
	\definecolor{colorconst}{rgb}{0.50, 0.20, 0.00}
	\definecolor{colordisc}{rgb}{1, 0, 1}
	\definecolor{colorloc}{rgb}{0.4, 0.4, 0.65}
	\definecolor{colorparam}{rgb}{1, 0.6, 0.0}
	\definecolor{colorvar}{rgb}{0.6, 0.7, 1}
	\definecolor{colorlvar}{rgb}{0.4, 0.4, .5}
	\definecolor{colordparam}{rgb}{.9, 0.8, 0.0}
\newif\iftikzgnuplot
\pgfplotsset{compat=1.12}
\newcommand{\init}{_0}
\newcommand{\set}[1]{\ensuremath{\left\{#1\right\}}}
\newcommand{\A}{\ensuremath{\mathcal{A}}}
\newcommand{\Actions}{\Sigma}
\newcommand{\action}{\ensuremath{\sigma}}
\newcommand{\Clock}{\mathbb{C}} %
\newcommand{\ClockCard}{H} %
\newcommand{\clock}{c} %
\newcommand{\clockval}{\nu} %
\newcommand{\ClocksZero}{\vec{0}_{\Clock}}
\newcommand{\compOp}{\bowtie}
\newcommand{\productOp}{\times}
\newcommand{\composeOp}{\mathbin{||}}
\newcommand{\domain}{\mathbf{dom}}
\newcommand{\duration}{\mathit{Dur}}
\newcommand{\edge}{e}
\newcommand{\Edges}{E}
\newcommand{\longuefleche}[1]{\stackrel{#1}{\longrightarrow}}
\newcommand{\longueflecheRel}[1]{\stackrel{#1}{\mapsto}}
\newcommand{\flecheRel}{{\rightarrow}}
\newcommand{\guard}{g}
\newcommand{\Init}{\mathit{Init}}
\newcommand{\invariant}{I}
\newcommand{\Label}{\Lambda} %
\newcommand{\loc}{\ell} %
\newcommand{\locinit}{\loc\init}
\newcommand{\Loc}{L} %
\newcommand{\LocInit}{\Loc\init}
\newcommand{\lterm}{\mathit{lt}}
\newcommand{\ltermNN}{\mathit{lt}_{\ge 0}}
\newcommand{\ltermCnt}{\mathit{cnt}}
\newcommand{\ltermCntNN}{\mathit{cnt}_{\ge 0}}
\newcommand{\Param}{\mathbb{P}} %
\newcommand{\param}{p} %
\newcommand{\paramOrInt}{\ensuremath{\gamma}}
\newcommand{\parami}[1]{p_{#1}} %
\newcommand{\ParamCard}{M} %
\newcommand{\PVal}[1][\Param]{({\setQnn})^{#1}}
\newcommand{\pval}{v} %
\newcommand{\R}{{\mathbb{R}}}
\newcommand{\reduce}{\mathit{reduce}}
\newcommand{\Rgeqzero}{\R_{\geq 0}}
\newcommand{\Time}{\mathbb{T}} %
\newcommand{\Traces}{\mathit{Paths}}
\newcommand{\emptyruns}{\runs_{\emptyset}}
\newcommand{\props}{\ensuremath{\mathit{pr}}}
 \newcommand{\removedEdges}[2]{{\mathcal{D}^{#2}_{#1}}}
\newcommand{\runs}{\Pi}
\newcommand{\vrun}{\pi}
\newcommand{\pathOrder}[1][\runs]{\mathbin{\trianglelefteq_{#1}}}
\newcommand{\npathOrder}[1][\runs]{\mathbin{\ntrianglelefteq_{#1}}}
\newcommand{\pathOrderToN}{\mathcal{N}_{\pathOrder}}
\newcommand{\PathVar}{\mathcal{V}}
\newcommand{\trace}{\rho}
\newcommand{\sinit}{s\init} %
\newcommand{\state}{\ensuremath{s}} %
\newcommand{\States}{S} %
\newcommand{\SInit}{S\init} %
\newcommand{\suffixEqOp}{\succeq}
\newcommand{\suffixOp}{\succ}
\newcommand{\TTS}[1][\A]{T_{#1}} %
\newcommand{\TTSLabel}{\Lambda} %
\newcommand{\setN}{{\mathbb N}}
\newcommand{\setNpos}{\setN_{>0}}
\newcommand{\setQ}{{\mathbb Q}}
\newcommand{\setQnn}{\setQ_{\geq 0}}
\newcommand{\setR}{{\mathbb{R}}}
\newcommand{\setRnn}{\ensuremath{\setR_{\geq 0}}}
\newcommand{\setZ}{{\mathbb Z}}
\newcommand{\partfun}{\nrightarrow}
\newcommand{\pluseq}{\mathrel{{+}{=}}}
\newcommand{\resets}{R}
\newcommand{\project}[2]{\ensuremath{#1{|_{#2}}}}
\newcommand{\reset}[2]{\ensuremath{[#1]_{#2}}}
\newcommand{\UpdateCountval}[3]{[{#1}]_{{#2} - {#3}}}
\newcommand{\UpdateRecordval}[3]{[{#1}]_{{#2} - {#3}}}
\newcommand{\valuate}[2]{\ensuremath{#2(#1)}}
\newcommand{\tuple}[1]{\ensuremath{\langle #1 \rangle}}
\newcommand{\disjointUnion}{\sqcup}
\newcommand{\powerset}[1]{\mathcal{P}(#1)}
\newcommand{\CTLA}{\ensuremath{\forall}}
\newcommand{\CTLE}{\ensuremath{\exists }}
\newcommand{\CTLF}{\ensuremath{\Diamond}}
\newcommand{\CTLG}{\ensuremath{\square}}
\newcommand{\CTLU}{\ensuremath{\mathcal{U}}}
\newcommand{\CTLEF}{\ensuremath{\CTLE \CTLF}}
\newcommand{\CTLAF}{\ensuremath{\CTLA \CTLF}}
\newcommand{\fml}{\varphi}
\newcommand{\fullFml}{\psi}
\newcommand{\Boolean}{\mathcal{B}}
\newcommand{\LAST}{\mathit{LAST}}
\newcommand{\COUNT}{\mathit{COUNT}}
\newcommand{\LASTExpr}{\textsc{LASTExpr}}
\newcommand{\COUNTNNExpr}{\ensuremath{\textsc{CountExpr}_{\ge 0}}}
\newcommand{\COUNTModExpr}{\ensuremath{\textsc{CountExpr}_{\mathrm{mod}}}}
\newcommand{\CountActions}{\Actions_{\COUNT}}
\newcommand{\RISING}{\mathit{Rising}}
\newcommand{\Vars}{\mathit{Vars}}
\newcommand{\EUntilOp}[1]{\mathbin{\exists\CTLU_{#1}}}
\newcommand{\AUntilOp}[1]{\mathbin{\forall\CTLU_{#1}}}
\newcommand{\UntilOp}[1]{\mathbin{\CTLU_{#1}}}
\newcommand{\HEUntilFml}[4][\changed{\ge 0}]{\exists{#2}.\, \allowbreak{#3} \allowbreak \UntilOp{#1}\allowbreak {#4}} %
\newcommand{\HAUntilFml}[4][\changed{\ge 0}]{\forall{#2}.\, {#3} \UntilOp{#1} {#4}} %
\newcommand{\Release}[1]{\mathbin{\mathcal{R}_{#1}}}
\newcommand{\HEReleaseFml}[4][\changed{\ge 0}]{\exists{#2}.\, {#3} \Release{#1} {#4}} %
\newcommand{\HAReleaseFml}[4][\changed{\ge 0}]{\forall{#2}.\, {#3} \Release{#1} {#4}} %
\newcommand{\WeakUntilOp}[1]{\mathbin{\mathcal{W}_{#1}}}
\newcommand{\HEWeakUntilFml}[4][\changed{\ge 0}]{\exists{#2}.\, {#3} \WeakUntilOp{#1} {#4}} %
\newcommand{\HAWeakUntilFml}[4][\changed{\ge 0}]{\forall{#2}.\, {#3} \WeakUntilOp{#1} {#4}} %
\newcommand{\DiaOp}[1]{\CTLF_{#1}}
\newcommand{\HEDiaFml}[3][\changed{\ge 0}]{\exists{#2}.\, \DiaOp{#1} {#3}} %
\newcommand{\HADiaFml}[3][\changed{\ge 0}]{\forall{#2}.\, \DiaOp{#1} {#3}} %
\newcommand{\BoxOp}[1]{\CTLG_{#1}}
\newcommand{\HEBoxFml}[3][\changed{\ge 0}]{\exists{#2}.\, \BoxOp{#1} {#3}} %
\newcommand{\HABoxFml}[3][\changed{\ge 0}]{\forall{#2}.\, \BoxOp{#1} {#3}} %
\newcommand{\countval}{\eta}
\newcommand{\approxcountval}{\tilde{\eta}}
\newcommand{\CountZero}{\vec{0}_{\mathrm{cnt}}}
\newcommand{\recordval}{\theta}
\newcommand{\RecordZero}{\vec{0}_{\mathrm{rec}}}
\newcommand{\PExists}{\styleMacro{\ensuremath{\tilde{\exists}}}}
	\newcommand{\styleMacro}[1]{\textcolor{USPNocre!60!black}{#1}}
	\newcommand{\styleMacro}[1]{#1}
\newcommand{\PTCTL}{\styleMacro{PTCTL}}
\newcommand{\ExistsPTCTL}{\styleMacro{$\PExists$\PTCTL{}}}
\newcommand{\HyperPTCTL}{\styleMacro{Hyper\PTCTL{}}}
\newcommand{\ExtHyperPTCTL}{\styleMacro{Ext-\HyperPTCTL{}}}
\newcommand{\ExtPTCTL}{\styleMacro{Ext-\PTCTL{}}}
\newcommand{\ExtOrNotHyperPTCTL}{\styleMacro{(Ext-)\HyperPTCTL{}}}
\newcommand{\NFHyperPTCTL}{\styleMacro{Nest-Free \HyperPTCTL{}}}
\newcommand{\NFExtHyperPTCTL}{\styleMacro{Nest-Free \ExtHyperPTCTL{}}}
\newcommand{\NFExtPTCTL}{\styleMacro{Nest-Free \ExtPTCTL{}}}
\newcommand{\NFExtHyperPTCTLBDR}{\styleMacro{Nest-Free RP\ExtHyperPTCTL{}}}
\newcommand{\NFExistsExtHyperPTCTLBDR}{\styleMacro{Nest-Free $\PExists$RP\ExtHyperPTCTL{}}}
\newcommand{\ExistsHyperPTCTL}{\styleMacro{$\PExists$\HyperPTCTL{}}}
\newcommand{\NFExistsHyperPTCTL}{\styleMacro{Nest-Free \ExistsHyperPTCTL{}}}
\newcommand{\NFExistsExtReachHyperPTCTL}{\styleMacro{Nest-Free $\PExists$-$\exists\Diamond$\ExtHyperPTCTL{}}}
\newcommand{\styleComplexity}[1]{\textsc{#1}}
\newcommand{\EXPTIME}{\styleComplexity{ExpTime}}
\newcommand{\NEXPTIME}[1]{\styleComplexity{#1\EXPTIME}}
\newcommand{\EXPSPACE}{\styleComplexity{ExpSpace}}
\newcommand{\PSPACE}{\styleComplexity{PSpace}}
\newcommand{\observerof}[1]{\mathcal{O}_{#1}}
\newcommand{\rmextfrom}[1]{{#1}_{\mathrm{noext}}}
\newcommand{\extof}[1]{\mathrm{Ext}({#1})}
\newcommand{\arbitrarilyMany}{\ensuremath{*}}
\newcommand{\recallResult}[2]
{%
	\smallskip

	\noindent\fcolorbox{black}{green!15}{
		\begin{minipage}{.95\columnwidth}
			\noindent\textbf{\cref{#1} (recalled).}
			{\em{}#2}
		\end{minipage}
	}

	\smallskip
}
\definecolor{vertfonce}{rgb}{0.0, 0.5, 0.0}
\definecolor{rougefonce}{rgb}{1, 0.0, 0.0}
\theoremstyle{plain}
\newtheorem{lemma}{Lemma}
\newtheorem{proposition}[lemma]{Proposition}
  \newtheorem{theorem}[lemma]{Theorem}
\theoremstyle{definition}
\newtheorem{definition}[lemma]{Definition}
\newtheorem{example}[lemma]{Example}
\theoremstyle{remark}
\newtheorem{remark}[lemma]{Remark}
\newcommand{\changed}[1]{#1}
\newcommand{\ourTool}{\textsf{HyPTCTLchecker}} %
\newcommand{\imitator}{\textsf{IMITATOR}}
\newcommand{\Running}{\textsf{ClkGen}}
\newcommand{\Coffee}{\textsf{Coffee}}
\newcommand{\STAC}{\textsf{STAC}}
\newcommand{\WFAS}[2][]{\ensuremath{\textsf{WFAS}^{#1}_{#2}}}
\newcommand{\ATM}{\textsf{ATM}}
\newcommand{\FIFO}{\textsf{FIFO}}
\newcommand{\Priority}{\textsf{Priority}}
\newcommand{\RoundRobin}{\textsf{R.R.}}
\newcommand{\ClockDeviation}{\textsf{Deviation}}
\newcommand{\Opacity}{\textsf{Opacity}}
\newcommand{\Unfairness}{\textsf{Unfair}}
\newcommand{\RobustObservationalNonDeterminism}{\textsf{RobOND}}
\newcommand{\EF}[1]{\ensuremath{\textsf{EF}_{#1}}}
\newcommand{\TIMEOUT}{{\color{red}\textbf{T.O.}}}
\newcommand{\eg}{e.g.,\xspace}
\newcommand{\ie}{i.e.,\xspace}
\newcommand{\st}{s.t.\xspace}
\newcommand{\resp}{resp.\xspace}
\tikzstyle{rqanswer} = [
\newcommand{\rqanswer}[2]{%
	\smallskip

	\noindent%
	\begin{tikzpicture}%
	\draw node[rqanswer]{\textbf{Answer to {#1}}:{ #2}};%
	\end{tikzpicture}%
}
\newcommand{\defProblem}[3]
{%
	\smallskip

	\noindent%
	\begin{tikzpicture}%
	\draw node[rqanswer]{
		\small%
		\textbf{#1 problem:}\\
		\textsc{Input}: #2\\
		\textsc{Problem}: #3
	};%
	\end{tikzpicture}%
	\smallskip
}
\begin{document}

\title{Hyper parametric timed CTL}

\author{Masaki~Waga and
Étienne André
\thanks{This work is \LongVersion{partially }supported by JST PRESTO Grant No.\ JPMJPR22CA, JST CREST Grant No.\ JPMJCR2012, JSPS KAKENHI Grant No.\ 22K17873, ANR ProMiS (ANR-19-CE25-0015), and ANR BisoUS (ANR-22-CE48-0012).}%
\thanks{M.\ Waga is with Kyoto University, Kyoto, Japan. \'E.\ Andr\'e is with Université Sorbonne Paris Nord, LIPN, CNRS UMR 7030, F-93430 Villetaneuse, France.}%
\LongVersion{%
\thanks{This is the author (and extended) version of the manuscript of the same name published in IEEE Transactions on Computer-Aided Design of Integrated Circuits and Systems (TCAD).
The final version is available at \url{https://ieeexplore.ieee.org/}.}}
}

\LongVersion{%
  \date{}
}

\ShortVersion{%
\markboth{Journal of \LaTeX\ Class Files,~Vol.~14, No.~8, August~2021}%
{Shell \MakeLowercase{\textit{et al.}}: A Sample Article Using IEEEtran.cls for IEEE Journals}
}

\maketitle

\setcounter{footnote}{0}

\LongVersion{
	\thispagestyle{plain}
}

\ifdefined \VersionWithComments
	\textcolor{red}{\textbf{This is the version with comments. To disable comments, comment out line~3 in the \LaTeX{} source.}}
	\textcolor{orange}{\LongVersion{This is the long version.}
	\ShortVersion{This is the short version.}
	}
\fi

\begin{abstract}
	Hyperproperties enable simultaneous reasoning about multiple execution traces of a system and are useful to reason about non-interference, opacity, robustness, fairness, observational determinism, etc.
	We introduce \emph{hyper parametric timed computation tree logic (HyperPTCTL)}, extending hyperlogics with timing reasoning and, notably, parameters to express unknown values.
	We mainly consider its nest-free fragment, where temporal operators cannot be nested.
	However, we allow extensions that enable counting actions and comparing the duration since the most recent occurrence of specific actions.
	We show that our nest-free fragment with this extension is sufficiently expressive to encode properties, \eg{} opacity, (un)fairness, or robust observational (non-)determinism.
	We propose semi-algorithms for model checking and synthesis of parametric timed automata (an extension of timed automata with timing parameters) against this nest-free fragment with the extension via reduction to PTCTL model checking and synthesis.
	While the general model checking (and thus synthesis) problem is undecidable, we show that a large part of our extended (yet nest-free) fragment is decidable, provided the parameters only appear in the property, not in the model.
	We also exhibit additional decidable fragments where parameters within the model are allowed.
	We implemented our semi-algorithms on top of the \imitator{} model checker, and performed experiments.
	Our implementation supports most of the nest-free fragments (beyond the decidable classes).
	The experimental results highlight our method's practical relevance.
\end{abstract}

\section{Introduction}\label{section:introduction}
Parametric timed automata (PTAs)~\cite{AHV93} is an extension of finite-state automata for modeling and verification of real-time systems, where the timing constraints are not fixed but parameterized.
\changed{PTAs} extend the concept of timed automata (TAs)~\cite{AD94} by introducing parameters into time bounds, allowing for analyzing a system \LongVersion{behavior }across a range of timing scenarios.

Hyperproperties enable reasoning simultaneously about multiple execution traces of a system and turn useful to reason about non-interference, opacity, fairness, robustness, observational determinism, etc.
We introduce \emph{hyper parametric timed computation tree logic} (\HyperPTCTL{}), extending hyperlogics with not only timing reasoning but also timing parameters able to express unknown values.
\HyperPTCTL{} can be used typically to reason about multiple traces on PTAs.

After defining the syntax and semantics of general \HyperPTCTL{}, we mainly consider the nest-free fragment, where temporal operators cannot be nested.
However, we extend \HyperPTCTL{} with additional predicates that enable counting actions and comparing the duration since the most recent occurrence of specific actions using diagonal constraints of the form $\LAST(\action_{\vrun_1}) - \LAST(\action_{\vrun_2})$\changed{, where $\action$ is a proposition and $\vrun_1$ $\vrun_2$ represent two paths}.
Even without the nesting of temporal operators, we demonstrate that this extension enables encoding classical properties such as opacity, (un)fairness, or observational (non-) determinism---in a timed and parametric setting.
For example, we can use a simple \HyperPTCTL{} formula to encode a \emph{robust observational non-determinism}: ``By giving the same sequence of inputs at the same timing to the system, it is possible to get the same sequence of outputs but with large time difference''.
A timing parameter in the formula is used to leave the time difference unspecified, and, for example, the feasible values can be synthesized (by our semi-algorithm).
\LongVersion{The concrete formula is shown in \cref{example:observational-non-determinism}.}
We denote this nest-free but extended fragment by \NFExtHyperPTCTL{}.

We consider two problems over parametric formulas and/or models:
\begin{itemize}
	\item The \emph{model checking} problem asks whether there exists a valuation for which the model satisfies the formula; and %
	\item The \emph{synthesis} problem asks for the exact valuations set for which the model satisfies the formula. %
		Ideally, this representation should be given symbolically, \eg{} in a decidable logical formalism. %
\end{itemize}
\LongVersion{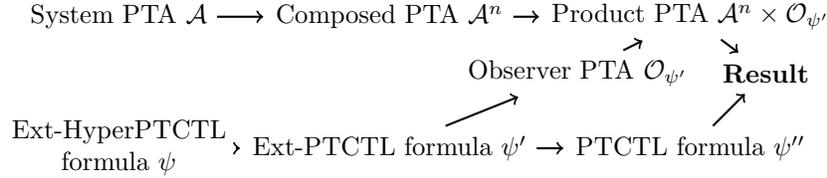
\begin{figure*}[tp]}%
\ShortVersion{\begin{figure}[tp]}%
\centering
\scalebox{\ShortVersion{.8}\LongVersion{1}}{%
	\begin{tikzpicture}[shorten >=1pt,every node/.style={transform shape}, thick]
	\node[align=center] (system) at (0.4,1.75) {System PTA $\A$};
	\node[align=center] (spec) at (0.4,0.0) {\ExtHyperPTCTL{}\\ formula $\fullFml$};

  \node[align=center] (composed system) at (4.0,1.75) {Composed PTA $\A^n$};
  \node[align=center] (extptctl) at (4.0,0.0) {\ExtPTCTL{} formula $\fullFml'$};

  \node[align=center] (observer) at (6.5,1.0) {Observer PTA $\observerof{\fullFml'}$};

  \node[align=center] (product system) at (8.0,1.75) {Product PTA $\A^n \productOp \observerof{\fullFml'}$};

  \node[align=center] (ptctl) at (8.0,0.0) {\PTCTL{} formula $\fullFml''$};

  \node[align=center] (result) at (9.0,1.0) {\textbf{Result}};

  \path[->] (extptctl) edge node{} (observer)
            (spec) edge node{} (extptctl)
            (system) edge node{} (composed system)
            (observer) edge node{} (product system)
            (composed system) edge node{} (product system)
            (extptctl) edge node{} (ptctl)
            (ptctl) edge node{} (result)
            (product system) edge node{} (result)
            ;
\end{tikzpicture}
}

\caption{Our reduction: \NFExtHyperPTCTL{} synthesis (\resp{} model checking) is reduced to \ExtPTCTL{} synthesis (\resp{} model checking) via self-composition; the extended predicates in \ExtPTCTL{} are evaluated by an observer PTA, which is composed with the PTA $\A^n$.}%
\label{figure:workflow}
\LongVersion{\end{figure*}}
\ShortVersion{\end{figure}}
We\LongVersion{ constructively} show that \NFExtHyperPTCTL{} model checking (\resp{} synthesis) of PTAs is reducible to PTCTL model checking (\resp{} synthesis) of PTAs.
\cref{figure:workflow} outlines our reduction. We show a more concrete working example later in \cref{section:example}.
First, we reduce \NFExtHyperPTCTL{} model checking (\resp{} synthesis) to \NFExtPTCTL{} model checking (\resp{} synthesis) by taking the self-composition $\A^n$ of the system PTA $\A$, where $n$ is the number of quantified path variables (\ie{} the number of simultaneously reasoned execution traces) in the \ExtHyperPTCTL{} formula $\fullFml$.
Then, we construct an \emph{observer} PTA $\observerof{\fullFml'}$~\cite{ABBL03} to evaluate the extended predicates in the given \NFExtPTCTL{} formula~$\fullFml'$.
We show that the result of \PTCTL{} model checking (\resp{} synthesis) for the product PTA $\A^n \times \observerof{\fullFml'}$ is the same as the result of the original problem.
Thus, the original problem is reduced to \PTCTL{} model checking (\resp{} synthesis).
By integrating this reduction with a semi-algorithm for PTCTL model checking (\resp{} synthesis), we derive a semi-algorithm for Nest-Free Ext-HyperPTCTL model checking (\resp{} synthesis).

While \NFExtHyperPTCTL{} model checking of PTAs is trivially undecidable due to the undecidability of reachability-emptiness of PTAs~\cite{AHV93},
we show that they are decidable for a large part of \NFExtHyperPTCTL{}, provided the parameters only appear in the property, not in the model.
We also exhibit additional decidable fragments where parameters in the model are allowed.

We implemented our approach on top of the existing \imitator{} parametric timed model checker~\cite{Andre21} and performed\LongVersion{ a set of} experiments. Our implementation \ourTool{} supports most of the nest-free fragment (beyond the decidable classes too---in which case at the risk of non-termination or approximated result). The experimental results show that our approach can handle various properties if the PTA has a moderate size.

\LongVersion{%
	\subsection{Contributions}
}%
	Our contributions are summarized as follows:
\begin{enumerate}
	\item We introduce \HyperPTCTL{} and its extension \ExtHyperPTCTL{} to count actions and to measure the time since their last occurrence\LongVersion{ and define their syntax and semantics} (\cref{section:hyper-ptcl});
	\item We propose semi-algorithms for \NFExtHyperPTCTL{} model checking (\resp{} synthesis) of PTAs\LongVersion{ via reduction to PTCTL model checking (\resp{} synthesis) of PTAs} (\cref{section:reduction});
	\item While \NFExtHyperPTCTL{} model checking and synthesis are trivially undecidable, we exhibit several decidable subclasses, with parameters either in the PTA or in the \NFExtHyperPTCTL{} formula (\cref{section:decidability});
	\item We implemented our approach and performed experiments. The experimental results suggest the practical relevance of our approach\LongVersion{, particularly for the PTAs with moderate size} (\cref{section:experiments}).
\end{enumerate}

To the best of our knowledge, our work is not only the first one extending TCTL into hyperlogics, but also the first one to allow for timing parameters in such a TCTL hyper-extension.

\section{Related works}\label{section:related}

\subsection{Model checking parametric timed formalisms}
First, model checking PTAs against the non-parametric nest-free fragment (without nested operators) of PTCTL is already undecidable, as reachability-emptiness (also called $\CTLEF{}$-emptiness, \ie{} the emptiness over the valuations set for which a given location can be reached) is undecidable for general PTAs over dense or discrete time~\cite{AHV93}\LongVersion{, including without any closed inequality~\cite{Doyen07}}.
Unavoidability(\CTLAF{})-emptiness is undecidable too~\cite{JLR15}.
\LongVersion{%
	As a consequence, model checking (P)TCTL against PTAs is undecidable in general too, and synthesis is intractable.
}

Reachability-emptiness over discrete time for PTAs with 2 parametric clocks\footnote{%
	A parametric clock is a clock compared to a parameter in at least one guard or invariant.
}, arbitrarily many non-parametric clocks and 1 parameter is \LongVersion{decidable~\cite{BO17} and }\EXPSPACE{}-complete~\cite{GH21}.

\LongVersion{%

	In~\cite{ET99}, a Parameterized Real-Time Computation Tree Logic (``PRTCTL'') is defined, with a decidable polynomial model checking algorithm for a fixed number of parameters; the timed model is however severely restricted when compared to~TAs. %
}

\LongVersion{%
	In~\cite{BR07}, model checking PTAs with a single parametric clock (and arbitrarily many non-parametric clocks, but this is equivalent over discrete time~\cite{AHV93}) against parametric TCTL is proved to be undecidable over discrete time.
	Also note that this negative result holds even when adding some further syntactic restrictions (see \cite[Remark~3.5]{BR07}).
	On the positive side, for the ``F-PTCTL'' fragment (in which equalities are forbidden in all operators, and inequalities with $\geq$ and~$>$ are forbidden in the $\forall U$ operators), model checking PTAs with a single parametric clock is proved to be decidable \cite[Corollary~3.11]{BR07}, and synthesis can be achieved \cite[Corollary~3.12]{BR07} thanks to a characterization of the valuations set using a Presburger formula.
	Emptiness, universality or finiteness (``is the valuations set finite?'')\ follow immediately. %
}

In~\cite{BDR08}, model checking non-parametric TAs against parametric TCTL (with integer-valued parameters) is considered over both discrete\LongVersion{ time} and dense time.
\ExistsPTCTL{} is defined as the existential fragment (over parameters) of PTCTL\footnote{%
	Of the form $\PExists \param_1, \cdots, \param_n : \fml$ with $\fml$ without quantifiers over parameters.
	Note that, \LongVersion{throughout}\ShortVersion{in} this paper, we use $\PExists$ to distinguish between existential quantification over parameters ($\PExists$) and over paths ($\exists$).
}.
Both the discrete time \cite[Corollary~7.3]{BDR08} and dense time \cite[Theorem~7.5]{BDR08} model checking problems are in \NEXPTIME{5} in the product of the model and the formula, and are in \NEXPTIME{3} for the \ExistsPTCTL{} fragment \cite[Propositions~7.4 and~7.6]{BDR08}.\LongVersion{%
	\footnote{%
		Exact complexity is not given in~\cite{BDR08}, but it is noted that discrete and dense time model checking \ExistsPTCTL{} is at least \textsc{PSpace-Hard}~\cite{ACD90,AL02}, while discrete and dense time model checking PTCTL is at least \textsc{2NExpTime-Hard}~\cite{FR74}.
	}
}

Model-checking subclasses of PTAs against TCTL (beyond reachability) is notably considered in~\cite{ALR18FORMATS}: on the one hand, even for the severely restricted \changed{class} of U-PTAs (\changed{a subclass of PTAs} in which parameters can only be compared to a clock as an upper bound~\cite{BlT09}), and even without invariants, the emptiness is undecidable for nested TCTL\LongVersion{ (the proof uses the ``$\exists \square \CTLAF_{=0}$'' formula)}.
\LongVersion{%
	This undecidability result comes with two flavors:
	\begin{ienumeration}%
		\item over unbounded (possibly integer-valued) parameters, the proof of which uses 5~clocks and 2~parameters; and %
		\item over bounded rational-valued parameters, the proof of which uses 4~clocks and 1~parameters. %
	\end{ienumeration}%
}%
On the other hand, it is then shown in~\cite{ALR18FORMATS} that nest-free TCTL is decidable for L/U-PTAs \changed{(a subclass of PTAs in which parameters are partitioned between lower-bound and upper-bound parameters~\cite{HRSV02})} without invariants.

\subsection{Hyperproperties}
Hyperproperties drew recent attention, and various hyperlogics have been introduced by extending conventional temporal logics (\eg{}~\cite{CFKMRS14,HZJ21,BPS22,GDABB23,BMNN23,BPS20}).

One of the closest works to our timed hyperlogics (without parameters) is HyperMITL~\cite{HZJ21}, a timed extension of HyperLTL~\cite{CFKMRS14}.
In general, the model checking problem is undecidable, even with very restricting timing constraints; it becomes decidable under certain conditions, notably absence of alternation.
For decidable subcases, \LongVersion{the authors}\ShortVersion{they} use a construction based on \emph{self-composition}, which we also use in \cref{section:self-composition}.
However, their construction is primarily for untimed models, while our reduction is for PTAs.

\changed{
Another closely related work is HyperMTL~\cite{BPS20}, another timed extension of HyperLTL.\@
If the time domain is discrete, \ie{} the timestamps are integers, HyperMTL model checking is decidable even with quantifier alternation.
Although their algorithm covers many interesting properties, it is limited to discrete-time and non-parametric settings.
}

Both amplitude and timing parameters are considered in~\cite{BMNN23} for HyperSTL, but the goal is \changed{requirement mining from traces} rather than model checking.
Quantifier alternation is allowed.

Parametric probabilities are considered in a setting~\cite{ABBD20} orthogonal to our timed setting.

\section{Preliminaries}\label{section:preliminaries}

For a set $X$, we denote its powerset by $\powerset{X}$.
\LongVersion{%
  For a finite set $X$, we denote its size by $|X|$.
}%
For sets $X, Y$,
we denote a partial function $f$ from $X$ to $Y$ by $f \colon X \partfun Y$ and denote its domain by $\domain(f) \subseteq X$.
\LongVersion{For $x \in \setZ$ and $N \in \setN$, we let $x \bmod N \in \{0,1,\dots, N - 1\}$ be such that $x \bmod N + y N = x$ for some $y \in \setZ$.}

\LongVersion{
\subsection{Clocks, clock guards, and parameters}
}

We let $\Time$ be the domain of the time, which will be either non-negative reals $\setRnn$\LongVersion{ (continuous-time semantics)} or naturals $\setN$\LongVersion{ (discrete-time semantics)}.
Let $\Clock = \{ \clock_1, \dots, \clock_\ClockCard \}$ be a set of \emph{clocks}, \ie{} variables that evolve at the same rate.
A clock valuation is a function
$\clockval : \Clock \rightarrow \Time$.
We write $\ClocksZero$ for the clock valuation assigning $0$ to all clocks.
Given $d \in \Time$, $\clockval + d$ 
denotes the valuation \st{} $(\clockval + d)(\clock) = \clockval(\clock) + d$, for all $\clock \in \Clock$.
Given $\resets \subseteq \Clock$, we define the \emph{reset} of a valuation~$\clockval$, denoted by $\reset{\clockval}{\resets}$, as follows: $\reset{\clockval}{\resets}(\clock) = 0$ if $\clock \in \resets$, and $\reset{\clockval}{\resets}(\clock)=\clockval(\clock)$ otherwise.

We assume a set~$\Param = \{ \param_1, \dots, \param_\ParamCard \} $ of \emph{parameters}, \ie{} unknown constants.
A parameter \emph{valuation} $\pval$ is  a function
$\pval\colon \Param \rightarrow \setQnn$.%
\footnote{%
	\changed{We choose $\setQnn$ by consistency with most of the PTA literature, but also because, for classical PTAs, choosing~$\Rgeqzero$ leads to undecidability~\cite{Miller00}.}
}
We assume ${\compOp} \in \{<, \leq, =, \geq, >\}$.
A \emph{(clock) guard}~$\guard$ is a constraint over $\Clock \cup \Param$ defined by a conjunction of inequalities of the form $\clock \compOp \paramOrInt$ with $\paramOrInt \in \Param \cup \setN$.
For simplicity, we often use intervals instead of a conjunction of inequalities.
Given~$\guard$, we write~$\clockval\models\pval(\guard)$ if %
the expression obtained by replacing each~$\clock$ with~$\clockval(\clock)$ and each~$\param$ with~$\pval(\param)$ in~$\guard$ evaluates to true.
\changed{For a finite set $X = \{x_1, x_2, \dots, x_N\}$ of size $N \in \setN$, a \emph{linear term} $\lterm$ (\resp{} \emph{non-negative linear term} $\ltermNN$)
is of the form $\sum_{1 \leq i \leq N} \alpha_i x_i + d$, with
	$\alpha_i, d \in \setZ$ (\resp{} $\alpha_i, d \in \setN$).}

\LongVersion{
\subsection{Parametric timed automata}
}

Parametric timed automata (PTAs)~\cite{AHV93} extend timed automata~\cite{AD94} with parameters within guards and invariants in place of integer constants.

\begin{definition}[PTA]%
 \label{definition:PTA}
 A \emph{parametric timed automaton (PTA)} $\A$ is an 8-tuple $\A = (\Actions, \Loc, \LocInit, %
 \Clock, \Param, \invariant, \Edges, \Label)$, where:
 \begin{itemize}
  \item $\Actions$ is a finite set of atomic propositions;
  \item $\Loc$ is a finite set of locations;
  \item $\LocInit \subseteq \Loc$ is the set of initial locations,
  \item $\Clock$ is a finite set of clocks;
  \item $\Param$ is a finite set of parameters;
  \item $\invariant$ is the invariant, assigning to every $\loc\in \Loc$ a clock guard $\invariant(\loc)$;
  \item $\Edges$ is a finite set of edges  $\edge = (\loc, \guard, \resets, \loc')$,
		where~$\loc, \loc' \in \Loc$ are the source and target locations, $\resets\subseteq \Clock$ is a set of clocks to be reset, and $\guard$ is the transition guard;
  \item $\Label\colon \Loc \to \powerset{\Actions}$ is the labeling function \changed{assigning the atomic propositions satisfied at each location}.
 \end{itemize}
\end{definition}
Given a parameter valuation~$\pval$, 
we denote by $\valuate{\A}{\pval}$ the non-parametric structure where all occurrences of a parameter~$\param_i$ have been replaced by~$\pval(\param_i)$.
We refer as a \emph{timed automaton (TA)} to any structure $\valuate{\A}{\pval}$, by assuming a rescaling of the constants: by multiplying all constants in $\valuate{\A}{\pval}$ by the least common multiple of their denominators, we obtain an equivalent (integer-valued) TA%
, as defined in~\cite{AD94}.

\begin{example}
	The PTA in \cref{figure:system_running_example} contains 1~clock $\styleclock{\clock}$ and one parameter~$\styleparam{\param_1}$.
	The invariant of $\loc_0$ is ``$\styleclock{\clock} \leq \styleparam{\param_1}$'' and the transition to~$\loc_1$ is guarded by ``$\changed{\styleparam{\param_1} - 1 < }\styleclock{\clock} < \styleparam{\param_1}$'', and resets~$\styleclock{\clock}$.
	Atomic propositions $\styleact{H}$ and $\styleact{L}$ are associated with $\loc_0$ and~$\loc_1$, respectively.
 This PTA models a clock generator with drift: the digital signal switches between high ($\styleact{H}$) and low ($\styleact{L}$) states in a near periodic manner but with some timing deviation, depending on the value of parameter $\styleparam{\param_1}$.
\end{example}
\begin{figure}[tbp]
 \centering
 \begin{tikzpicture}[pta,scale=0.65,every node/.style={transform shape,initial text=}]
  \node[initial,location] (a) at (0,0)  [align=center]{$\styleloc{\loc_0}$\\ $\{\styleact{H}\}$};
  \node[location] (b) at (7.5,0) [align=center]{$\styleloc{\loc_1}$\\ $\{\styleact{L}\}$};

  \node[invariant,yshift=-.6em] (a_invariant) at (0,1.0) {$\styleclock{c} \leq \styleparam{\param_1}$};
  \node[invariant,yshift=-.6em] (b_invariant) at (7.5,1.0) {$\styleclock{c} \leq 3$};

  \path[->]
  (a) edge [bend left=5] node[above] {$\changed{\styleparam{\param_1} - 1 < }\styleclock{c} < \styleparam{\param_1}/\styleclock{c} \Coloneq 0$} (b)
  (b) edge [bend left=5] node[below] {$\changed{2 < }\styleclock{c} < 3$} (a)
  ;
 \end{tikzpicture}
 \caption{Drifted clock generator example: PTA~$\A$}%
 \label{figure:system_running_example}
\end{figure}
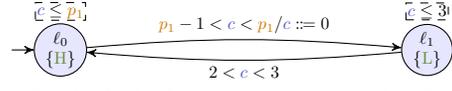

Let us now recall the concrete semantics of TAs.

\begin{definition}[Semantics of a TA]
	For a PTA $\A = (\Actions, \Loc, \LocInit, \Clock, \Param, \invariant, \Edges, \Label)$
	and a parameter valuation~\(\pval\),
	the semantics of the TA $\valuate{\A}{\pval}$ is given by the timed transition system (TTS) $\TTS = (\States, \SInit, \flecheRel)$, with
	\begin{itemize}
		\item $\States = \big\{ (\loc, \clockval) \in \Loc \times \Time^\ClockCard \mid \clockval \models \invariant(\loc) \big\}$,
		\item $\SInit = \{(\locinit, \ClocksZero) \mid \locinit \in \LocInit\} $,
		\item  $\flecheRel$ consists of the discrete and (continuous) delay transition relations:
		\begin{itemize}
			\item[discrete transitions] $(\loc, \clockval) \longueflecheRel{\edge} (\loc', \clockval')$, %
				if
				$(\loc, \clockval) , (\loc', \clockval') \in \States$, and
				there exists $\edge = (\loc, \guard, \resets, \loc') \in \Edges$, such that $\clockval' = \reset{\clockval}{\resets}$, and $\clockval \models \guard$.
			\item[delay transitions] $(\loc, \clockval) \longueflecheRel{d} (\loc, \clockval + d)$, with $d \in \Time$, if $\forall d' \in [0, d], (\loc, \clockval+d') \in \States$.
		\end{itemize}
	\end{itemize}
\end{definition}

Moreover, we write $(\loc, \clockval)\longuefleche{(d, \edge)} (\loc', \clockval')$ for a combination of delay and discrete transitions if
$\exists  \clockval'' :  (\loc, \clockval) \longueflecheRel{d} (\loc, \clockval'') \longueflecheRel{\edge} (\loc', \clockval')$.
We let $\TTSLabel((\loc, \clockval)) = \Label(\loc)$.

Given a TA~$\valuate{\A}{\pval}$ with concrete semantics $(\States, \SInit, \flecheRel)$, we refer to the states~$\States$ as the \emph{concrete states} of~$\valuate{\A}{\pval}$.
For $\state = (\loc, \clockval) \in \States$ and $d \in \Time$, we let $\state + d = (\loc, \clockval + d)$.
A \emph{path} of~$\valuate{\A}{\pval}$ from a concrete state $\state$ is an alternating \emph{infinite} sequence of concrete states of $\valuate{\A}{\pval}$ and pairs of edges and delays starting from $\state$ of the form
$\state_0 (= \state), (d_0, \edge_0), \state_1, \ldots$
with
$\sum_{i = 0}^{\infty} d_i = +\infty$, 
for each $i = 0, 1, \dots$, $d_i \in \Time$, $\edge_i \in \Edges$, and
$\state_i \longuefleche{(d_i, \edge_i)} \state_{i+1}$.
We denote the set of paths of $\valuate{\A}{\pval}$ from $\state$ by $\Traces(\valuate{\A}{\pval}, \state)$.
We let $\Traces(\valuate{\A}{\pval}) = \bigcup_{\state \in \States} \Traces(\valuate{\A}{\pval}, \state)$.
For a path $\state_0, (d_0, \edge_0), \state_1, \ldots$ of $\valuate{\A}{\pval}$,
a \emph{position} is a concrete state $\state$ satisfying $\state = \state_i + d$ for some $i \in \setN$ and $d \leq d_i$.
For a path $\trace$, we denote its initial position $\state_0$ by $\Init(\trace)$.
For a position $\state = \state_i + d$ of a path $\trace = \state_0, (d_0, \edge_0), \state_1, \ldots$,
the \emph{duration} $\duration_{\trace}(\state)$ is $\duration_{\trace}(\state) = d + \sum_{j = 0}^{i-1} d_j$.
If the path is clear from the context, we just write $\duration(\state)$.
For positions $\state = \state_i + d$ and $\state' = \state_j + d'$ of a path $\state_0, (d_0, \edge_0), \state_1, \ldots$,
we let $\state < \state'$ if we have $i < j$ or $\duration(\state) < \duration(\state')$.
We let $\state \leq \state'$ if we have $\state < \state'$ or $\state = \state'$.
For paths $\trace, \trace'$, we write $\trace \suffixEqOp \trace'$ if $\trace'$ is a suffix of $\trace$, \ie{} for $\trace = \state_0, (d_0, \edge_0), \state_1, \ldots$, $\trace'$ is such that $\state_i + d, (d_i - d, \edge_i), \state_{i+1}, \ldots$ for some $i \in \setN$ and $d \in [0, d_i]$.
We let $\removedEdges{\trace'}{\trace}$ be such $i$.
We let $\trace \suffixOp \trace'$ if we have $\trace \suffixEqOp \trace'$ and $\trace \neq \trace'$.
For paths $\trace, \trace'$ satisfying $\trace \suffixEqOp \trace'$, we let $\duration(\trace - \trace')$ be the duration of the initial position of $\trace'$ in $\trace$.

For PTAs $\A^1$ and $\A^2$, we define both parallel composition $\A^1 \composeOp \A^2$ and synchronized product $\A^1 \productOp \A^2$.
Intuitively, the parallel composition is to juxtapose two PTAs without synchronization, whereas
the synchronized product is to compose two PTAs synchronizing the edges with the propositions.
The parallel composition will be used when taking self-composition of the systems to handle multiple paths simultaneously, while
the synchronized product will be used when composing the systems with observers encoding extended predicates\LongVersion{ (\cref{ss:observers})}.

For PTAs $\A^1 = (\Actions^1, \Loc^1, \LocInit^1, \Clock^1, \Param^1, \invariant^1, \Edges^1, \Label^1)$
and
$\A^2 = (\Actions^2, \Loc^2, \LocInit^2, \Clock^2, \Param^2, \invariant^2, \Edges^2, \Label^2)$,
their \emph{parallel composition} is $\A^1 \composeOp \A^2 = \big(\Actions^1 \disjointUnion \Actions^2, \Loc^1 \times \Loc^2, \LocInit^1 \times \LocInit^2, \Clock^1 \disjointUnion \Clock^2, \Param^1 \cup \Param^2, \invariant, \Edges, \Label \big)$,
with
$\disjointUnion$ denoting disjoint union,
$\invariant\big((\loc^1, \loc^2) \big) = \invariant^1(\loc^1) \land \invariant^2(\loc^2)$,
$\Edges = \big\{ \big( (\loc^1, \loc^2), \guard, \resets, ({\loc^1}', \loc^2) \big) \mid (\loc^1, \guard, \resets, {\loc^1}') \in \Edges^1, \loc^2 \in \Loc^2 \big\} \cup  \big \{ \big((\loc^1, \loc^2), \guard, \resets, (\loc^2, {\loc^2}') \big) \mid (\loc^2, \guard, \resets, {\loc^2}') \in \Edges^2, \loc^1 \in \Loc^1 \big\} \cup \big\{ \big( (\loc^1, \loc^2), \guard^1 \land \guard^2, \resets^1\cup\resets^2, ({\loc^1}', {\loc^2}') \big) \mid (\loc^1, \guard^1, \resets^1, {\loc^1}') \in \Edges^1, (\loc^2, \guard^2, \resets^2, {\loc^2}') \in \Edges^2 \big\}$, and
$\Label \big((\loc^1, \loc^2)\big) = \Label^1(\loc^1) \disjointUnion \Label^2(\loc^2)$.
For TAs $\valuate{\A_1}{\pval_1}$ and $\valuate{\A_2}{\pval_2}$ satisfying $\pval_1(\param) = \pval_2(\param)$ for any $\param \in \Param_1 \cap \Param_2$, and
paths $\trace_1$ and $\trace_2$ of $\valuate{\A_1}{\pval_1}$ and $\valuate{\A_2}{\pval_2}$, respectively,
we let $\trace_1 \composeOp \trace_2$ be the path of $\valuate{\A_1 \composeOp \A_2}{(\pval_1 \cup \pval_2)}$ obtained by parallel composition of $\trace_1$ and $\trace_2$, where $\pval_1 \cup \pval_2$ is the parameter valuation such that $(\pval_1 \cup \pval_2)(\param) = \pval_1(\param)$ if $\param \in \pval_1$ and otherwise $(\pval_1 \cup \pval_2)(\param) = \pval_2(\param)$.
Conversely, for a path $\trace$ of $\valuate{\A_1 \composeOp \A_2}{(\pval_1 \cup \pval_2)}$ and $i \in \{1,2\}$,
we let $\project{\trace}{i}$ be the path of $\valuate{\A_i}{\pval_i}$ obtained by removing the locations, clock valuations, and edges from $\A_{3 - i}$.

For PTAs $\A^1 = (\Actions^1, \Loc^1, \LocInit^1, \Clock^1, \Param^1, \invariant^1, \Edges^1, \Label^1)$
and
$\A^2 = (\Actions^2, \Loc^2, \LocInit^2, \Clock^2, \Param^2, \invariant^2, \Edges^2, \Label^2)$,
their \emph{synchronized product} is $\A^1 \productOp \A^2 = \big(\Actions^1 \cup \Actions^2, \Loc^1 \times \Loc^2, \LocInit, \Clock^1 \disjointUnion \Clock^2, \Param^1 \cup \Param^2, \invariant, \Edges, \Label \big)$,
with
$\LocInit = \{(\locinit^1, \locinit^2) \in \LocInit^1 \times \LocInit^2 \mid \Label^1({\locinit^1}) \cap \Actions^2 = \Label^2({\locinit^2}) \cap \Actions^1 \}$,
$\invariant\big((\loc^1, \loc^2) \big) = \invariant^1(\loc^1) \land \invariant^2(\loc^2)$,
$\Edges = \big\{ \big( (\loc^1, \loc^2), \guard, \resets, ({\loc^1}', \loc^2) \big) \mid (\loc^1, \guard, \resets, {\loc^1}') \in \Edges^1, \loc^2 \in \Loc^2, \Label^1({\loc^1}') \cap \Actions^2 = \Label^2(\loc^2) \cap \Actions^1\} \cup  \big \{ \big((\loc^1, \loc^2), \guard, \resets, (\loc^1, {\loc^2}') \big) \mid (\loc^2, \guard, \resets, {\loc^2}') \in \Edges^2, \loc^1 \in \Loc^1, \Label^1(\loc^1) \cap \Actions^2 = \Label^2({\loc^2}') \cap \Actions^1 \big\} \cup  \big \{ \big((\loc^1, \loc^2), \guard^1 \land \guard^2, \resets^1 \disjointUnion \resets^2, ({\loc^1}', {\loc^2}') \big) \mid (\loc^1, \guard^1, \resets^1, {\loc^1}') \in \Edges^1, (\loc^2, \guard^2, \resets^2, {\loc^2}') \in \Edges^2, \Label^1({\loc^1}') \cap \Actions^2 = \Label^2({\loc^2}') \cap \Actions^1 \big\}$, and
$\Label \big((\loc^1, \loc^2)\big) = \Label^1(\loc^1) \cup \Label^2(\loc^2)$.
\changed{For a finite set $I = \{1,2,\dots,n\}$ of indices, we let $\bigtimes_{i \in I} \A^i = \A^1 \times \A^2 \times \dots \times \A^n$, where each $\A^i$ is a PTA.}

\section{Hyper parametric timed CTL}\label{section:hyper-ptcl}

Here, we introduce \emph{hyper parametric timed computation tree logic (\HyperPTCTL{})}.
\HyperPTCTL{} is a generalization of PTCTL~\cite{BDR08} with quantifiers over paths to represent hyperproperties.

\subsection{Syntax of \ExtOrNotHyperPTCTL{}}
\begin{definition}
 [Syntax of \HyperPTCTL{}]%
 \label{definition:syntax_hyperptctl}
 For atomic propositions $\Actions$ and parameters $\Param$,
 the syntax of \HyperPTCTL{} formulas of the temporal level~$\fml$ and the top level~$\fullFml$ are defined as follows, where
 $\action \in \Actions$,
 \changed{$\PathVar$ is the set of path variables,}
 $\vrun, \vrun_1, \vrun_2, \dots, \vrun_n \in \PathVar$,
 $\paramOrInt \in \Param \cup \setN$,
 $\param \in \Param$,
 ${\compOp} \in \{<, \leq, =, \geq, >\}$, and 
 $\ltermNN$ is a \changed{non-negative} linear term over $\Param$: %
{%
\begin{align*}
 \fml \Coloneq & \top \mid \action_{\vrun} \mid \neg \fml \mid \fml \lor \fml \mid \HEUntilFml[\compOp \paramOrInt]{\vrun_1,\vrun_2,\dots,\vrun_n}{\fml}{\fml}
 \\
 & \mid \HAUntilFml[\compOp \paramOrInt]{\vrun_1,\vrun_2,\dots,\vrun_n}{\fml}{\fml}\\
 \fullFml \Coloneq & \fml \mid \param \compOp \ltermNN \mid \neg \fullFml \mid \fullFml \lor \fullFml \mid \PExists \param\, \fullFml
 \end{align*}
}
\end{definition}
As syntax sugar, we utilize the following formulas:
{\ShortVersion{\small}\LongVersion{\scriptsize}
\begin{gather*}
 \bot \equiv \neg \top \quad \fml_1 \land \fml_2 \equiv \neg (\neg \fml_1 \lor \neg \fml_2) \quad \fml_1 \implies \fml_2 \equiv \neg\fml_1 \lor \fml_2\\
 \fml_1 = \fml_2 \equiv (\fml_1 \land \fml_2) \lor ((\neg \fml_1) \land (\neg \fml_2)) \quad
 \fml_1 \neq \fml_2 \equiv \neg (\fml_1 = \fml_2) \\
 \HEReleaseFml[\compOp \paramOrInt]{\vrun_1,\vrun_2,\dots,\vrun_n}{\fml_1}{\fml_2} \equiv \neg \HAUntilFml[\compOp \paramOrInt]{\vrun_1,\vrun_2,\dots,\vrun_n}{\neg \fml_1}{\neg \fml_2} \\
 \HAReleaseFml[\compOp \paramOrInt]{\vrun_1,\vrun_2,\dots,\vrun_n}{\fml_1}{\fml_2} \equiv \neg \HEUntilFml[\compOp \paramOrInt]{\vrun_1,\vrun_2,\dots,\vrun_n}{\neg \fml_1}{\neg \fml_2} \\
 \LongVersion{\HEWeakUntilFml[\compOp \paramOrInt]{\vrun_1,\vrun_2,\dots,\vrun_n}{\fml_1}{\fml_2} \equiv \HEReleaseFml[\compOp \paramOrInt]{\vrun_1,\vrun_2,\dots,\vrun_n}{\fml_2}{(\fml_1 \lor \fml_2)} \\}
 \LongVersion{\HAWeakUntilFml[\compOp \paramOrInt]{\vrun_1,\vrun_2,\dots,\vrun_n}{\fml_1}{\fml_2} \equiv \HAReleaseFml[\compOp \paramOrInt]{\vrun_1,\vrun_2,\dots,\vrun_n}{\fml_2}{(\fml_1 \lor \fml_2)} \\}
 \HEDiaFml[\compOp \paramOrInt]{\vrun_1,\vrun_2,\dots,\vrun_n}{\fml} \equiv \HEUntilFml[\compOp \paramOrInt]{\vrun_1,\vrun_2,\dots,\vrun_n}{\top}{\fml} \\
 \HADiaFml[\compOp \paramOrInt]{\vrun_1,\vrun_2,\dots,\vrun_n}{\fml} \equiv \HAUntilFml[\compOp \paramOrInt]{\vrun_1,\vrun_2,\dots,\vrun_n}{\top}{\fml} \\
 \HEBoxFml[\compOp \paramOrInt]{\vrun_1,\vrun_2,\dots,\vrun_n}{\fml} \equiv \HEReleaseFml[\compOp \paramOrInt]{\vrun_1,\vrun_2,\dots,\vrun_n}{\bot}{\fml} \\
 \HABoxFml[\compOp \paramOrInt]{\vrun_1,\vrun_2,\dots,\vrun_n}{\fml} \equiv \HAReleaseFml[\compOp \paramOrInt]{\vrun_1,\vrun_2,\dots,\vrun_n}{\bot}{\fml}
\end{gather*}
}

\emph{\ExistsHyperPTCTL{}} is a subclass of \HyperPTCTL{} such that the (top-level) formulas are of the form $\PExists \param_1 \PExists \param_2 \dots \PExists \param_n\, \fullFml$, where $\fullFml$ has no quantifiers over parameters.
\emph{PTCTL}~\cite{BDR08} is a subclass of \HyperPTCTL{} with only one path variable~$\vrun$.\footnote{Although the original definition in~\cite{BDR08} does not use path variables, one can easily check that our definition coincides with the one in~\cite{BDR08}.}
\emph{\ExistsPTCTL{}}~\cite{BDR08} is defined analogously.

We extend \HyperPTCTL{} with counters and clocks to constrain the number of occurrences of atomic propositions and to compare the time difference between the occurrences of two atomic propositions, respectively.
Intuitively, for each atomic proposition $\action$ and path~$\vrun$,
$\LAST(\action_{\vrun})$ indicates the \changed{time elapsed since the last switching of~$\action$ from false to true} in~$\vrun$, and
$\COUNT(\action_{\vrun})$ indicates the total number of \changed{switches of~$\action$ from false to true} in~$\vrun$.
To ensure decidability, we only allow specific forms of constraints.
We denote the extended logic as \emph{\ExtHyperPTCTL{}}.

\begin{definition}
 [Syntax of \ExtHyperPTCTL{}]
 We extend the syntax of \HyperPTCTL{} formulas of the temporal level as follows:
 $\action \in \Actions$,
 \changed{$\PathVar$ is the set of path variables,}
 $\vrun \in \PathVar$,
 ${\compOp} \in \{{<}, {\leq}, {=}, {\geq}, {>}\}$,
 $d,N \in \setN$,
 $\lterm$ is a linear term over $\Param$,
 $\CountActions = \{\COUNT(\action_{\vrun}) \mid \action \in \Actions, \vrun \in \PathVar\}$,
 $\ltermCntNN$ is a \emph{non-negative} linear term over $\CountActions$ (\ie{} $\ltermCntNN = \sum_{\action \in \Actions, \vrun \in \PathVar} \alpha_{\action, \vrun} \COUNT(\action_{\vrun})$ for some $\alpha_{\action, \vrun} \in \setN$)\footnote{The restriction of $\ltermCntNN$ is to construct finite observers in \cref{ss:observers}. Our implementation accepts any linear term, which is encoded by variables in \imitator{}~\cite{Andre21}.}, and
 $\ltermCnt$ is a linear term over $\CountActions$ (\ie{} $\ltermCnt = \sum_{\action \in \Actions, \vrun \in \PathVar} \alpha_{\action, \vrun} \COUNT(\action_{\vrun})$ for some $\alpha_{\action, \vrun} \in \setZ$).
 \begin{align*}
  \fml \Coloneq & \top \mid \action_{\vrun} \mid \LAST(\action_{\vrun}) - \LAST(\action_{\vrun}) \compOp \lterm \\
  & \mid \ltermCntNN \compOp d \mid (\ltermCnt \bmod N) \compOp d \mid \cdots
 \end{align*}
\end{definition}

\changed{We call $\LAST(\action_{\vrun}) - \LAST(\action_{\vrun}) \compOp \lterm$, $\ltermCntNN \compOp d$, and $(\ltermCnt \bmod N) \compOp d$ as \LASTExpr{}, \COUNTNNExpr{}, and \COUNTModExpr{}, respectively.}
We let \emph{\ExtPTCTL{}} be the subclass of \ExtHyperPTCTL{} with only one path variable~$\vrun$.
\begin{example}
 [Drift of clock]%
 \label{example:skewed_clock}
 Let $\styleact{H}$ be the proposition showing the ``high'' value of a digital clock.
 For
 $\styleparam{\param} \in \Param$,
 $\mathrm{AtMostOneDiff} \equiv (\COUNT(\styleact{H}_{\vrun_1}) - \COUNT(\styleact{H}_{\vrun_2})) \bmod 4 \in \{0, 1, 3\}$ \changed{denotes that the deviation of $\COUNT(\styleact{H}_{\vrun_1})$ and $\COUNT(\styleact{H}_{\vrun_2})$ is at most by one, if we keep having $\mathrm{AtMostOneDiff}$ in the past%
},
 $\mathrm{SameCount} \equiv (\COUNT(\styleact{H}_{\vrun_1}) - \COUNT(\styleact{H}_{\vrun_2})) \bmod 4 = 0$ \changed{denotes that the number of times the clock became high is identical (mod~4) over two paths;} and
 $\mathrm{LargeDeviation} \equiv \LAST(\styleact{H}_{\vrun_1}) - \LAST(\styleact{H}_{\vrun_2}) \not\in [-\styleparam{\param}, \styleparam{\param}]$ \changed{denotes that the last time the clock became high differs by at least~$\styleparam{\param}$ time units over two paths, \ie{} consists in a ``large deviation''}.
 \changed{Then}
 the following \ExtHyperPTCTL{} formula shows the drift of near periodic clocks of duration at least $\styleparam{\param}$ time units, globally assuming $\mathrm{AtMostOneDiff}$:
 $\HEUntilFml{\vrun_1,\vrun_2}{(\mathrm{AtMostOneDiff})}{(\mathrm{SameCount} \land \mathrm{LargeDeviation})}$.
\end{example}

\LongVersion{Notice that using a larger modulus, one can generalize $\mathrm{AtMostOneDiff}$ to keep track of more than one deviation, whereas it increases the cost of the reduction in \cref{ss:observers}.}

\begin{example}
 [\changed{Execution-time} opacity]\label{example:opacity}
 Let
 $\styleact{Private}$ be the proposition showing the private locations \changed{of a (P)TA} and
 $\styleact{Goal}$ be the proposition showing the goal locations.
 The following \ExtHyperPTCTL{} formula shows a formulation of opacity focusing on the execution time~\cite{ALLMS23},  %
\ie{}
 there are executions of duration \styleparam{\param} with and without visiting any private locations:
 $\HEUntilFml[= \styleparam{\param}]{\stylepathvar{\vrun_1}, \stylepathvar{\vrun_2}}{(\neg \styleact{Goal}_{\stylepathvar{\vrun_1}} \land \neg \styleact{Goal}_{\stylepathvar{\vrun_2}})}{(\styleact{Goal}_{\stylepathvar{\vrun_1}} \land \styleact{Goal}_{\stylepathvar{\vrun_2}} \land}$ $\COUNT(\styleact{Private}_{\stylepathvar{\vrun_1}}) = 0 \land \COUNT(\styleact{Private}_{\stylepathvar{\vrun_2}}) > 0 )$.
 \changed{That is, the goal is not reached until, after exactly $ \styleparam{\param}$ time units (which encodes the unknown execution time), the goal is reached for both paths, and one of them did not visit any private location (``$\COUNT(\styleact{Private}_{\stylepathvar{\vrun_1}}) = 0$'') while the other one did.}
 \changed{Notice that this differs from another style of opacity for TAs in~\cite{Cassez09}.}
\end{example}
\begin{example}
 [\changed{Side-channel timing attack~\cite{BPS20}}]
 \label{example:side_channel_attack}
 \changed{Let $\styleact{Inv}$ and $\styleact{Idle}$ be the propositions denoting the invocation of a process and the idle state.
 For $i \in \{1,2\}$, we let
 $\mathrm{SyncInv} \equiv \styleact{Inv}_{\vrun_1} = \styleact{Inv}_{\vrun_2}$,
 $\mathrm{ImmediateExec}_i \equiv \styleact{Inv}_{\vrun_i} \implies \neg \styleact{Idle}_{\vrun_i}$,
 $\mathrm{ExecBound}_i \equiv \styleact{Idle}_{\vrun_i} \implies (\LAST(\styleact{Idle}_{\vrun_i}) - \LAST(\styleact{Inv}_{\vrun_i}) < \styleparam{\param_1})$, and
 $\mathrm{NearFinish} \equiv (\styleact{Idle}_{\vrun_1} = \styleact{Idle}_{\vrun_2}) \implies (\LAST(\styleact{Idle}_{\vrun_1}) - \LAST(\styleact{Idle}_{\vrun_2}) \in (-\styleparam{\param_2}, \styleparam{\param_2}))$.
 The following \ExtHyperPTCTL{} formula shows that the execution time of each process must be similar, which is necessary to prevent side-channel timing attacks:
 $\HAReleaseFml{\vrun_1,\vrun_2}{(\neg \mathrm{SyncInv})}{(\mathrm{ImmediateExec}_1} \land \mathrm{ImmediateExec}_2 \land \mathrm{ExecBound}_1 \land \mathrm{ExecBound}_2 \land \mathrm{NearFinish})$.
 More precisely, for any paths corresponding to two different sequences of process executions, while each pair of processes has been invoked simultaneously, their execution must be within $\styleparam{\param_1}$, and the execution time of each pair of processes must not differ more than $\styleparam{\param_2}$ time units.}
\end{example}
\begin{example}
 [Unfairness of schedulers]
 \label{example:unfairness}
 Let $\styleact{Sub}^i$ and $\styleact{Run}^i$ be the proposition showing the submission and execution of job $i \in \{1,2\}$.
 For
 $\mathrm{SyncSub} \equiv \styleact{Sub}^1_{\vrun_1} = \styleact{Sub}^2_{\vrun_2}$,
 $\mathrm{SameCount} \equiv (\COUNT(\styleact{Run}^1_{\vrun_1}) - \COUNT(\styleact{Run}^2_{\vrun_2})) \bmod 4 = 0$, and
 $\mathrm{LargeDeviation} \equiv \LAST(\styleact{Run}^1_{\vrun_1}) - \LAST(\styleact{Run}^2_{\vrun_2}) \not\in (-5, 5)$,
 the following \ExtHyperPTCTL{} formula shows unfair scheduling of jobs 1 and 2:
 $\HEUntilFml{\vrun_1,\vrun_2}{(\mathrm{SyncSub})}{(\mathrm{SameCount} \land \mathrm{LargeDeviation})}$.
 That is, given two paths corresponding to two different jobs, if they submit the job at the same time and eventually run, but with a time difference of $\geq 5$ time units, then the scheduler is unfair.
\end{example}
\begin{example}
 [Robust observational non-determinism]
 \label{example:observational-non-determinism}
 Let $\big\{\styleact{In}^i \mid i \in \{1,2,\dots,m\}\big\}$ and $\big\{\styleact{Out}^i \mid i \in \{1,2,\dots,n\} \big\}$ be the set of input and output propositions, respectively.
 $\mathrm{SyncIn} \equiv \bigwedge_{i \in \{1,2,\dots,m\}} (\styleact{In}^i_{\vrun_1} = \styleact{In}^i_{\vrun_2})$
 \changed{denotes that the inputs in two runs $\vrun_1$ and $\vrun_2$ are synchronized},
 $\mathrm{AtMostOneDiff} \allowbreak\equiv\allowbreak \bigwedge_{i \in \{1,2,\dots,n\}} \big(\COUNT(\styleact{Out}^i_{\vrun_1}) - \COUNT(\styleact{Out}^i_{\vrun_2}) \big) \bmod 4 \allowbreak\in\allowbreak \{0,1,3 \}$
 \changed{denotes that the deviation of the $\COUNT(\styleact{Out}^i_{\vrun_1})$ and $\COUNT(\styleact{Out}^i_{\vrun_2})$ is at most one, if we keep having $\mathrm{AtMostOneDiff}$ in the past}, and
 $\mathrm{LargeDeviation} \equiv \bigvee_{i \in \{1,2,\dots,n\}} \bigl((\COUNT(\styleact{Out}^i_{\vrun_1}) - \COUNT(\styleact{Out}^i_{\vrun_2})) \bmod 4 = 0 \allowbreak\land\allowbreak \LAST(\styleact{Out}^i_{\vrun_1}) - \LAST(\styleact{Out}^i_{\vrun_2}) \not\in [-\styleparam{\param}, \styleparam{\param}]\bigr)$ \changed{denotes that there is an output proposition that the number of times of the proposition became true is identical ($\bmod~4$) over two paths but the timing differs at least $\styleparam{\param}$ time units over two paths}.
 \changed{The} following \ExtHyperPTCTL{} formula shows robust observational non-determinism assuming $\mathrm{AtMostOneDiff}$, \ie{}
 even if the inputs are given at the same timing, the output timing may deviate more than \styleparam{\param}:
 $\HEUntilFml{\vrun_1,\vrun_2}{(\mathrm{SyncIn} \land \mathrm{AtMostOneDiff})}{(\mathrm{LargeDeviation})}$.
\end{example}
\subsection{Semantics of \ExtOrNotHyperPTCTL{}}

Before defining the semantics of \HyperPTCTL{}, we formalize the assignments of paths.
In addition to the partial function assigning the paths,
we use a total preorder to fix the order of the discrete transitions at the same time-point.

\begin{definition}
 [Path assignments]%
 \label{definition:path_assignment}
 For path variables $\PathVar$ and a TA $\valuate{\A}{\pval}$,
 a \emph{path assignment} $(\runs, \pathOrder)$ is a pair of 
 a partial function $\runs \colon \PathVar \partfun \Traces(\valuate{\A}{\pval})$ from path variables $\PathVar$ to paths $\Traces(\valuate{\A}{\pval})$ of $\valuate{\A}{\pval}$
 and a total preorder $\pathOrder$ on $\domain(\runs) \times \setN$ such that
 for any $\vrun, \vrun' \in \domain(\runs)$ and $i, j \in \setN$,
 $i < j$ implies $(\vrun, i) \pathOrder (\vrun, j)$ and $(\vrun, j) \npathOrder (\vrun, i)$,
 $(\vrun, i) \pathOrder (\vrun', j)$ implies $\sum_{k = 0}^{i} d^{\vrun}_k \leq \sum_{k = 0}^{j} d^{\vrun'}_k$, and
 $\sum_{k = 0}^{i} d^{\vrun}_k < \sum_{k = 0}^{j} d^{\vrun'}_k$ implies $(\vrun, i) \pathOrder (\vrun', j)$,
 where $d^{\vrun}_k$ and $d^{\vrun'}_k$ are the $k$-th delay in $\runs(\vrun)$ and $\runs(\vrun')$, respectively.
\end{definition}

 We let $(\emptyruns, \pathOrder[\emptyruns])$ be the empty path assignment,
 \ie{} the path assignment satisfying $\domain(\emptyruns) = \emptyset$.
 For path assignments $(\runs, \pathOrder)$ and $(\runs', \pathOrder[\runs'])$, 
 $(\runs', \pathOrder[\runs'])$ is an \emph{extension} of $(\runs, \pathOrder)$ if we have 
 $\domain(\runs) \subseteq \domain(\runs')$ and
 for any $\vrun, \vrun' \in \domain(\runs)$ and $i,j \in \setN$, we have
 $(\vrun, i) \pathOrder (\vrun', j) \iff (\vrun, i) \pathOrder[\runs'] (\vrun', j)$ and
 $\runs(\vrun) = \runs'(\vrun)$.
 For path assignments~$(\runs, \pathOrder)$ and $(\runs', \pathOrder[\runs'])$, we let $(\runs, \pathOrder) \suffixEqOp (\runs', \pathOrder[\runs'])$ (\resp{} $(\runs, \pathOrder) \suffixOp (\runs', \pathOrder[\runs'])$) if $\domain(\runs) = \domain(\runs')$ and there is $d \in \setRnn$ such that
 for any $\vrun \in \domain(\runs)$, we have $\runs(\vrun) \suffixEqOp \runs'(\vrun)$ (\resp{} $\runs(\vrun) \suffixOp \runs'(\vrun)$) and $\duration(\runs(\vrun) - \runs'(\vrun)) = d$, and
 for any $\vrun, \vrun' \in \domain(\runs)$ and $i, j \in \setN$,
 we have
 $(\vrun, i) \pathOrder[\runs'] (\vrun', j) \iff (\vrun, i + \removedEdges{\runs(\vrun)}{\runs'(\vrun)}) \pathOrder (\vrun', j + \removedEdges{\runs(\vrun')}{\runs'(\vrun')})$,
 and if $\removedEdges{\runs(\vrun)}{\runs'(\vrun)} \geq 1$ holds,
 $(\vrun, \removedEdges{\runs(\vrun)}{\runs'(\vrun)} - 1) \pathOrder (\vrun', \removedEdges{\runs(\vrun')}{\runs'(\vrun')})$ and
 $(\vrun', \removedEdges{\runs(\vrun')}{\runs'(\vrun')}) \npathOrder (\vrun, \removedEdges{\runs(\vrun)}{\runs'(\vrun)} - 1)$ also hold.
 We let $\duration(\runs - \runs') = d$ for such $\runs$, $\runs'$, and $d$.

\begin{definition}[Semantics of \HyperPTCTL{}]
 \label{definition:semantics_HyperPTCTL}
Let $\A$ be a PTA over parameters~$\Param_1$;
given a \HyperPTCTL{} formula over~parameters~$\Param_2$,
let $\Param = \Param_1 \cup \Param_2$.
For a parameter valuation $\pval \in \PVal$
a path assignment $(\runs, \pathOrder)$, and 
a concrete state $\state$ of $\valuate{\A}{\pval}$,
the satisfaction relation of the temporal level \HyperPTCTL{} formulas is defined as follows:
\begin{itemize}
  \item $(\runs, \pathOrder), \state \models_{\pval,\A} \action_{\vrun}$ if $\vrun \in \domain(\runs)$ and $\action \in \Label(\Init(\runs(\vrun)))$;
  \item $(\runs, \pathOrder), \state \models_{\pval,\A} \neg \fml$ if we have $(\runs, \pathOrder), \state \not\models_{\pval,\A} \fml$;
  \item $(\runs, \pathOrder), \state \models_{\pval,\A} \fml_1 \lor \fml_2$ if $(\runs, \pathOrder), \state \models_{\pval,\A} \fml_1$ or $(\runs, \pathOrder), \state \models_{\pval,\A} \fml_2$ holds;
  \item $(\runs, \pathOrder), \state \models_{\pval,\A} \HEUntilFml[\compOp \paramOrInt]{\vrun_1,\vrun_2,\dots,\vrun_n}{\fml_1}{\fml_2}$ if for some extension $(\runs^{1}, \pathOrder[\runs^{1}])$ of $(\runs, \pathOrder)$ satisfying $\domain(\runs^{1}) = \domain(\runs) \disjointUnion \{\vrun_1,\vrun_2,\dots,\vrun_n\}$ and $\runs^{1}(\vrun_i) \in \Traces(\valuate{\A}{\pval}, \state)$ for each $i \in \{1,2,\dots,n\}$, there is $(\runs^{2}, \pathOrder[\runs^{2}])$ satisfying $(\runs^{1}, \pathOrder[\runs^{1}]) \suffixEqOp (\runs^{2}, \pathOrder[\runs^{2}])$, $\duration(\runs^{1} - \runs^{2}) \compOp \valuate{\paramOrInt}{\pval}$, $(\runs^{2}, \pathOrder[\runs^{2}]), \Init(\runs^{2}(\vrun_n)) \models_{\pval,\A} \fml_2$, and for any $(\runs^{3}, \pathOrder[\runs^{3}])$ satisfying $(\runs^{1}, \pathOrder[\runs^{1}]) \suffixEqOp (\runs^{3}, \pathOrder[\runs^{3}]) \suffixOp (\runs^{2}, \pathOrder[\runs^{2}])$, $(\runs^{3}, \pathOrder[\runs^{3}]), \Init(\runs^{3}(\vrun_n)) \models_{\pval,\A} \fml_1$ holds.
  \item $(\runs, \pathOrder), \state \models_{\pval,\A} \HAUntilFml[\compOp \paramOrInt]{\vrun_1,\vrun_2,\dots,\vrun_n}{\fml_1}{\fml_2}$ if for any extension $(\runs^{1}, \pathOrder[\runs^{1}])$ of $(\runs, \pathOrder)$ satisfying $\domain(\runs^{1}) = \domain(\runs) \disjointUnion \{\vrun_1,\vrun_2,\dots,\vrun_n\}$ and $\runs^{1}(\vrun_i) \in \Traces(\valuate{\A}{\pval}, \state)$ for each $i \in \{1,2,\dots,n\}$, there is $(\runs^{2}, \pathOrder[\runs^{2}])$ satisfying $(\runs^{1}, \pathOrder[\runs^{1}]) \suffixEqOp (\runs^{2}, \pathOrder[\runs^{2}])$, $\duration(\runs^{1} - \runs^{2}) \compOp \valuate{\paramOrInt}{\pval}$, $(\runs^{2}, \pathOrder[\runs^{2}]), \Init(\runs^{2}(\vrun_n)) \models_{\pval,\A} \fml_2$, and for any $(\runs^{3}, \pathOrder[\runs^{3}])$ satisfying $(\runs^{1}, \pathOrder[\runs^{1}]) \suffixEqOp (\runs^{3}, \pathOrder[\runs^{3}]) \suffixOp (\runs^{2}, \pathOrder[\runs^{2}])$, $(\runs^{3}, \pathOrder[\runs^{3}]), \Init(\runs^{3}(\vrun_n)) \models_{\pval,\A} \fml_1$ holds.
\end{itemize}
For a PTA $\A$, a parameter valuation $\pval \in \PVal$, and a temporal-level \HyperPTCTL{} formula $\fml$,
we write $\A \models_{\pval} \fml$ if
we have $(\emptyruns, \pathOrder[\emptyruns]), \sinit \models_{\pval, \A} \fml$, where $\sinit$ is an initial state of $\valuate{\A}{\pval}$.

For a PTA $\A$ and a parameter valuation $\pval \in \PVal$,
the satisfaction relation of the top-level \HyperPTCTL{} formulas is defined as follows:
\begin{itemize}
  \item $\A \models_{\pval} \param \compOp \ltermNN$ if we have $\valuate{\param}{\pval} \compOp \valuate{\ltermNN}{\pval}$,
\changed{where $\valuate{\ltermNN}{\pval}$ denotes the expression obtained by replacing each~$\param$ with~$\pval(\param)$ in~$\ltermNN$;} %
  \item $\A \models_{\pval} \neg \fullFml$ if we have $\A \not\models_{\pval} \fullFml$;
  \item $\A \models_{\pval} \fullFml_1 \lor \fullFml_2$ if we have $\A \models_{\pval} \fullFml_1$ or $\A \models_{\pval} \fullFml_2$;
  \item $\A \models_{\pval} \PExists \param\, \fullFml$ if there is $\pval' \in \PVal$ satisfying $\pval(\param') = \pval'(\param')$ for any $\param' \in \Param \setminus\{\param\}$ and $\A \models_{\pval'} \fullFml$.
\end{itemize}
\end{definition}
\begin{example}
 Consider the formula $\fml : \PExists \param_2 \big( \param_2 > \param_1 \land \HEUntilFml[= \param_2]{\vrun_1,\vrun_2}{(\styleact{L}_{\vrun_1} \implies \styleact{H}_{\vrun_2})}{(\styleact{H}_{\vrun_1} \land \styleact{H}_{\vrun_2})} \big)$.
 Fix $\pval(\parami{1}) = 1.8$.
 For the PTA $\A$ in \cref{figure:system_running_example}, we have
 $\A \models_{\pval} \fml$ with
 $\pval(\parami{2}) = 2.0$ and 
 $\runs^1(\vrun_1)$ and $\runs^1(\vrun_2)$ are as follows:
 in $\vrun_1$, we jump from $\loc_0$ to $\loc_1$ at 1.5 and jump from $\loc_1$ to \changed{$\loc_0$} at 2.0;
 in $\vrun_2$, we jump from $\loc_0$ to $\loc_1$ at 0.5 and jump from $\loc_1$ to \changed{$\loc_0$} at 1.0.
\end{example}
To define the semantics of \ExtHyperPTCTL{}, we introduce valuations of $\COUNT(\action_{\vrun})$ and $\LAST(\action_{\vrun})$.
A counter valuation is a function %
$\countval\colon \Actions \times \PathVar \to \setN$.
A recording valuation is a function %
$\recordval\colon \Actions \times \PathVar \to \setRnn$.
We write $\CountZero$ and $\RecordZero$ for the counter and recording valuations assigning $0$ to all $(\action, \vrun) \in \Actions \times \PathVar$, respectively.
For a linear term $\ltermCnt$ over $\{\COUNT(\action_{\vrun}) \mid \action \in \Actions, \vrun \in \PathVar\}$,
we let $\countval(\ltermCnt)$ be the inequality obtained by replacing $\COUNT(\action_{\vrun})$ with $\countval(\action, \vrun)$ and 
$\Vars(\ltermCnt) = \{\vrun \in \PathVar \mid \exists \action \in \Actions, \alpha_{\action, \vrun} \neq 0]\}$, with
$\ltermCnt = \sum_{\action \in \Actions, \vrun \in \PathVar} \alpha_{\action, \vrun} \COUNT(\action_{\vrun})$.

For paths $\trace, \trace'$ satisfying $\trace \suffixEqOp \trace'$, we let $\RISING(\action, \trace - \trace')$ be the set of positions $\state$ in $\trace$ satisfying $\Init(\trace) < \state \leq \Init(\trace')$, $\action \in \Label(\state)$, and there is $\delta \in \Time$ such that for any $\state' < \state$, $\duration(\state) - \duration(\state') < \delta$ implies $\action \not\in \Label(\state)$.
Notice that $\RISING(\action, \trace - \trace')$ is finite because $\duration(\trace - \trace')$ is finite and we have no Zeno behavior.
For a counter valuation $\countval$ and path assignments $\runs, \runs'$ satisfying $\runs \suffixEqOp \runs'$,
$\UpdateCountval{\countval}{\runs}{\runs'}$ is the counter valuation such that for any $(\action, \vrun) \in \Actions \times \PathVar$,
$\UpdateCountval{\countval}{\runs}{\runs'}(\action, \vrun) = \countval(\action, \vrun) + |\RISING(\action, \runs(\vrun) - \runs'(\vrun))|$
if $\vrun \in \domain(\runs)$
and
$\UpdateCountval{\countval}{\runs}{\runs'}(\action, \vrun) = \countval(\action, \vrun)$ otherwise.
For a recording valuation $\recordval$ and path assignments $\runs, \runs'$ satisfying $\runs \suffixEqOp \runs'$,
$\UpdateRecordval{\recordval}{\runs}{\runs'}$ is the recording valuation such that for any $(\action, \vrun) \in \Actions \times \PathVar$,
$\UpdateRecordval{\recordval}{\runs}{\runs'}(\action, \vrun) = \duration(\runs - \runs')$
if $\vrun \not\in \domain(\runs)$ or $\RISING(\action, \runs(\vrun) - \runs'(\vrun)) = \emptyset$
and otherwise,
$\UpdateRecordval{\recordval}{\runs}{\runs'}(\action, \vrun)$ is the duration of $\Init(\runs'(\vrun))$ in the suffix of $\runs(\vrun)$ starting from the last position in $\RISING(\action, \runs(\vrun) - \runs'(\vrun))$.
\begin{definition}[Semantics of \ExtHyperPTCTL{}]%
 \label{definition:semantics_ext_hyperptctl}
Let $\A$ be a PTA over parameters~$\Param_1$;
given an \ExtHyperPTCTL{} formula over~parameters~$\Param_2$,
let $\Param = \Param_1 \cup \Param_2$.
For a parameter valuation $\pval \in \PVal$,
a path assignment $(\runs, \pathOrder)$,
 a concrete state $\state$ of $\valuate{\A}{\pval}$, and
 counter and recording valuations $\countval$ and $\recordval$,
the satisfaction relation of the temporal level \ExtHyperPTCTL{} formulas is defined as follows:
\begin{itemize}
  \item $(\runs, \pathOrder), \state, \countval, \recordval \models_{\pval,\A} \action_{\vrun}$ if $\vrun \in \domain(\runs)$ and $\action \in \Label(\Init(\runs(\vrun)))$;
  \item $(\runs, \pathOrder), \state, \countval, \recordval \models_{\pval,\A} \LAST(\action_{\vrun}) - \LAST(\action'_{\vrun'}) \compOp \lterm$ if we have $\vrun, \vrun' \in \domain(\runs)$ and $\recordval(\action, \vrun) - \recordval(\action', \vrun') \compOp \pval(\lterm)$;
  \item $(\runs, \pathOrder), \state, \countval, \recordval \models_{\pval,\A} \ltermCntNN \compOp d$ if we have $\Vars(\ltermCntNN) \subseteq \domain(\runs)$ and $\countval(\ltermCntNN) \compOp d$;
  \item $(\runs, \pathOrder), \state, \countval, \recordval \models_{\pval,\A} (\ltermCnt \bmod N) \compOp d$ if we have $\Vars(\ltermCnt) \subseteq \domain(\runs)$ and $(\countval(\ltermCnt) \bmod N) \compOp d$;
  \item $(\runs, \pathOrder), \state, \countval, \recordval \models_{\pval,\A} \neg \fml$ if we have $(\runs, \pathOrder), \state, \countval, \recordval \not\models_{\pval,\A} \fml$;
  \item $(\runs, \pathOrder), \state, \countval, \recordval \models_{\pval,\A} \fml_1 \lor \fml_2$ if we have $(\runs, \pathOrder), \state, \countval, \recordval \models_{\pval,\A} \fml_1$ or $(\runs, \pathOrder), \state, \countval, \recordval \models_{\pval,\A} \fml_2$;
  \item $(\runs, \pathOrder), \state, \countval, \recordval \models_{\pval,\A} \HEUntilFml[\compOp \paramOrInt]{\vrun_1,\vrun_2,\dots,\vrun_n}{\fml_1}{\fml_2}$ if for some extension $(\runs^{1}, \pathOrder[\runs^{1}])$ of $(\runs, \pathOrder)$ satisfying $\domain(\runs^{1}) = \domain(\runs) \disjointUnion \{\vrun_1,\vrun_2,\dots,\vrun_n\}$ and $\runs^{1}(\vrun_i) \in \Traces(\valuate{\A}{\pval}, \state)$ for each $i \in \{1,2,\dots,n\}$, there is $(\runs^{2}, \pathOrder[\runs^{2}])$ satisfying $(\runs^{1}, \pathOrder[\runs^{1}]) \suffixEqOp (\runs^{2}, \pathOrder[\runs^{2}])$, $\duration(\runs^{1} - \runs^{2}) \compOp \valuate{\paramOrInt}{\pval}$, $(\runs^{2}, \pathOrder[\runs^{2}]), \Init(\runs^{2}(\vrun_n)),\UpdateCountval{\countval}{\runs^{1}}{\runs^{2}}, \UpdateRecordval{\recordval}{\runs^{1}}{\runs^{2}} \models_{\pval,\A} \fml_2$, and for any $(\runs^{3}, \pathOrder[\runs^{3}])$ satisfying $(\runs^{1}, \pathOrder[\runs^{1}]) \suffixEqOp (\runs^{3}, \pathOrder[\runs^{3}]) \suffixOp (\runs^{2}, \pathOrder[\runs^{2}])$, we have $(\runs^{3}, \pathOrder[\runs^{3}]), \Init(\runs^{3}(\vrun_n)), \UpdateCountval{\countval}{\runs^{1}}{\runs^{3}}, \UpdateRecordval{\recordval}{\runs^{1}}{\runs^{3}} \models_{\pval,\A} \fml_1$.
  \item $(\runs, \pathOrder), \state, \countval, \recordval \models_{\pval,\A} \HAUntilFml[\compOp \paramOrInt]{\vrun_1,\vrun_2,\dots,\vrun_n}{\fml_1}{\fml_2}$ if for any extension $(\runs^{1}, \pathOrder[\runs^{1}])$ of $(\runs, \pathOrder)$ satisfying $\domain(\runs^{1}) = \domain(\runs) \disjointUnion \{\vrun_1,\vrun_2,\dots,\vrun_n\}$ and $\runs^{1}(\vrun_i) \in \Traces(\valuate{\A}{\pval}, \state)$ for each $i \in \{1,2,\dots,n\}$, there is $(\runs^{2}, \pathOrder[\runs^{2}])$ satisfying $(\runs^{1}, \pathOrder[\runs^{1}]) \suffixEqOp (\runs^{2}, \pathOrder[\runs^{2}])$, $\duration(\runs^{1} - \runs^{2}) \compOp \valuate{\paramOrInt}{\pval}$, $(\runs^{2}, \pathOrder[\runs^{2}]), \Init(\runs^{2}(\vrun_n)),\UpdateCountval{\countval}{\runs^{1}}{\runs^{2}}, \UpdateRecordval{\recordval}{\runs^{1}}{\runs^2} \models_{\pval,\A} \fml_2$, and for any $(\runs^{3}, \pathOrder[\runs^{3}])$ satisfying $(\runs^{1}, \pathOrder[\runs^{1}]) \suffixEqOp (\runs^{3}, \pathOrder[\runs^{3}]) \suffixOp (\runs^{2}, \pathOrder[\runs^{2}])$, we have $(\runs^{3}, \pathOrder[\runs^{3}]), \Init(\runs^{3}(\vrun_n)), \UpdateCountval{\countval}{\runs^{1}}{\runs^{3}}, \UpdateRecordval{\recordval}{\runs^{1}}{\runs^{3}} \models_{\pval,\A} \fml_1$.
\end{itemize}
For a PTA $\A$, a parameter valuation $\pval \in \PVal$, and a temporal-level \HyperPTCTL{} formula $\fml$,
we write $\A \models_{\pval} \fml$ if
we have $(\emptyruns, \pathOrder[\emptyruns]), \sinit, \CountZero, \RecordZero \models_{\pval, \A} \fml$, where $\sinit$ is an initial state of $\valuate{\A}{\pval}$.
The satisfaction relation of the top-level \ExtHyperPTCTL{} formulas is the same as that of \HyperPTCTL{}.\@
\end{definition}
\subsection{Problems}

Here, we formalize the problems we consider in this paper. 
We consider each problem under both continuous-time and discrete-time semantics, \ie{} $\Time$ is either $\setRnn$ or~$\setN$.
We let $\Param$ be the set of parameters shared between the PTA and the \ExtOrNotHyperPTCTL{} formula.

\defProblem{\ExtOrNotHyperPTCTL{} model checking}{%
PTA $\A$ and a top-level \ExtOrNotHyperPTCTL{} formula~$\fullFml$}{%
Decide if there is $\pval \in \PVal$ satisfying $\A \models_{\pval} \fullFml$
}

\defProblem{\ExtOrNotHyperPTCTL{} parameter synthesis}{%
PTA $\A$ and a top-level \ExtOrNotHyperPTCTL{} formula~$\fullFml$}{%
Return the set $\{\pval \in \PVal \mid \A \models_{\pval} \fullFml\}$
}

The solution to the latter problem can be \emph{effectively computed} whenever its representation is symbolic, and can be represented by decidable formalisms, typically a finite union of polyhedra.

Let $\fullFml$ be a top-level \HyperPTCTL{} formula with no quantifiers over parameters.
The \emph{emptiness} of the parameter valuations to have $\A \models_{\pval} \fullFml$ can be checked by model checking of~$\fullFml$.
The \emph{universality} of the parameter valuations to have $\A \models_{\pval} \fullFml$ can be checked by model checking of $\neg (\PExists \param_1 \PExists \param_2 \dots \PExists \param_n\, \neg \fullFml)$, where $\Param = \{\param_1, \param_2, \dots, \param_n\}$.

In the rest of this paper, we focus on the \emph{nest-free} fragment of \HyperPTCTL{} (\eg{} ``\NFHyperPTCTL'', ``\NFExtHyperPTCTL'', ``\NFExistsHyperPTCTL''…), \ie{} fragments with no nesting of temporal operators.
\LongVersion{%
	See \cref{definition:syntax_next_free} for its formal definition.
}%
Observe that all our \cref{example:opacity,example:skewed_clock,example:unfairness,example:observational-non-determinism} fit into this nest-free fragment.
\changed{The following is the definition of \NFHyperPTCTL{}. The other fragments are defined similarly.

\begin{definition}
 [Syntax of \NFHyperPTCTL{}]%
 \label{definition:syntax_next_free}
 For atomic propositions $\Actions$ and parameters $\Param$,
 the syntax of \NFHyperPTCTL{} formulas of the Boolean level~$\Boolean$, the temporal level~$\fml$, and the top level~$\fullFml$ are defined as follows, where
 $\action \in \Actions$,
 $\vrun, \vrun_1, \vrun_2, \dots, \vrun_n \in \PathVar$,
 $\paramOrInt \in \Param \cup \setN$,
 $\param \in \Param$,
 ${\compOp} \in \{<, \leq, =, \geq, >\}$, and
 $\ltermNN$ is a non-negative linear term over $\Param$:
 \begin{align*}
 \Boolean \Coloneq & \top \mid \action_{\vrun} \mid \neg \Boolean \mid \Boolean \lor \Boolean
\\
 \fml \Coloneq & \HEUntilFml[\compOp \paramOrInt]{\vrun_1,\vrun_2,\dots,\vrun_n}{\Boolean}{\Boolean} \mid \HAUntilFml[\compOp \paramOrInt]{\vrun_1,\vrun_2,\dots,\vrun_n}{\Boolean}{\Boolean}
\\
 \fullFml \Coloneq & \fml \mid \param \compOp \ltermNN \mid \neg \fullFml \mid \fullFml \lor \fullFml \mid \PExists \param\, \fullFml
 \end{align*}
\end{definition}
}

\section{Reduction of Nest-Free \ExtHyperPTCTL{} synthesis to PTCTL synthesis}\label{section:reduction}
\subsection{Reduction of path variables via self-composition of PTAs}\label{section:self-composition}

We show that model checking and parameter synthesis of nest-free \ExtOrNotHyperPTCTL{} is reducible to ones of (Ext-)PTCTL by self-composition of PTAs.
For a PTA $\A = (\Actions, \Loc, \LocInit, \Clock, \Param, \invariant, \Edges, \Label)$ and $n \in \setNpos$, we let $\A^{n} = \underbrace{\A \composeOp \A \composeOp \dots \composeOp \A}_\text{$n$ times}$, and for each $i \in \{1,2,\dots,n\}$ and $\action \in \Actions$ (\resp{} $\clock \in \Clock$), we denote the corresponding atomic proposition in $\Actions^n$ (\resp{} clock in $\Clock^n$) of the $i$-th component as $\action^{i}$ (\resp{} $\clock^{i}$), where $\Actions^{n}$ and $\Clock^{n}$ are the sets of atomic propositions and clocks in $\A^n$.
We generalize the projection of paths to such $n$-compositions, \ie{} for a path $\trace$ of $\A^n$,
we let $\project{\trace}{i}$ be the projection of $\trace$ to the $i$-th component.

We define an auxiliary function $\reduce^n$ to ``compose'' the atomic propositions in \ExtOrNotHyperPTCTL{} formulas.

\begin{definition}
 [$\reduce^n$]
 For $n \in \setNpos$, the function $\reduce^n$ from nest-free temporal-level \ExtHyperPTCTL{} formulas with atomic propositions $\Actions$ to
 nest-free temporal-level \ExtPTCTL{} formulas with atomic propositions $\Actions^n$ is inductively defined as follows, with
 $\reduce^n(\sum_{\action \in \Actions, i \in \{1,2,\dots,n\}} \alpha_{\action, \vrun_i} \COUNT(\action_{\vrun_i})) = \sum_{\action^i \in \Actions^n} \alpha_{\action^i, \vrun} \COUNT(\action_{\vrun}^i)$:
 \begin{itemize}
  \item $\reduce^n(\top) = \top$
  \item $\reduce^n(\action_{\vrun_i}) = \action_{\vrun}^{i}$
  \item $\reduce^n(\LAST(\action_{\vrun_i}) - \LAST(\action_{\vrun_j}) \compOp \lterm) = \LAST(\action_{\vrun}^{i}) - \LAST(\action_{\vrun}^{j}) \compOp \lterm$
  \item $\reduce^n(\ltermCntNN \compOp d) = \reduce^n(\ltermCntNN) \compOp d$
  \item $\reduce^n((\ltermCnt \bmod N) \compOp d) = (\reduce^n(\ltermCnt) \bmod N) \compOp d$
  \item $\reduce^n(\neg \fml) = \neg \reduce^n(\fml)$
  \item $\reduce^n(\fml_1 \lor \fml_2) = \reduce^n(\fml_1) \lor \reduce^n(\fml_2)$
  \item $\reduce^n(\HEUntilFml[\compOp \paramOrInt]{\vrun_1,\vrun_2,\dots,\vrun_n}{\fml_1}{\fml_2}) = \HEUntilFml[\compOp \paramOrInt]{\vrun}{\reduce^n(\fml_1)}{\reduce^n(\fml_2)}$
  \item $\reduce^n(\HAUntilFml[\compOp \paramOrInt]{\vrun_1,\vrun_2,\dots,\vrun_n}{\fml_1}{\fml_2}) = \HAUntilFml[\compOp \paramOrInt]{\vrun}{\reduce^n(\fml_1)}{\reduce^n(\fml_2)}$
 \end{itemize}
We naturally extend $\reduce^n$ to top-level \NFExtHyperPTCTL{} formulas.
\end{definition}
\SetKwFunction{Fsynthesis}{synthesisExtPTCTL}
\SetKwFunction{Freduce}{reduceSynth}
\SetKwProg{Fn}{def}{\string:}{}
\begin{algorithm}[tbp]
 \caption{Outline of our reduction of \NFExtHyperPTCTL{} synthesis to nest-free \ExtPTCTL{} synthesis.}%
 \label{algorithm:reduction}
 \DontPrintSemicolon{}
 \ShortVersion{\small}
 \newcommand{\myCommentFont}[1]{\texttt{\small{#1}}}
 \SetCommentSty{myCommentFont}
 \KwIn{A PTA $\A$ and a \NFExtHyperPTCTL{} formula $\fullFml$}
 \KwOut{The set $\{\pval \in \PVal \mid \A \models_{\pval} \fullFml\}$}
 \Fn{\Freduce{$\A$, $\fullFml$}}{
 \Switch{$\fullFml$}{%
   \lCase{$\top$}{%
     \Return{$\PVal$}\label{algorithm:reduction:top}
   }
   \Case{$\action_{\vrun_i}$ \KwOr{} $\LAST(\action_{\vrun_i}) - \LAST(\action_{\vrun_j})$ \KwOr{} $\ltermCntNN \compOp d$ \KwOr{} $(\ltermCnt \bmod N) \compOp d$}{%
     \tcp{$\fullFml$ does not hold for empty path assignments}\label{algorithm:reduction:bottom}
     \Return{$\emptyset$}
   }
   \Case{$\neg \fullFml$}{%
     \Return{$(\setQnn)^{\Param} \setminus \Freduce(\A, \fullFml)$}\label{algorithm:reduction:negation}
   }
   \Case{$\fullFml_1 \lor \fullFml_2$}{%
     \Return{$\Freduce(\A, \fullFml_1) \cup \Freduce(\A, \fullFml_2)$}\label{algorithm:reduction:disjunction}
   }
   \Case{$\HEUntilFml[\compOp \paramOrInt]{\vrun_1,\vrun_2,\dots,\vrun_n}{\fml_1}{\fml_2}$ \KwOr{} $\HAUntilFml[\compOp \paramOrInt]{\vrun_1,\vrun_2,\dots,\vrun_n}{\fml_1}{\fml_2}$}{%
     \tcp{Use nest-free \ExtPTCTL{} synthesis}
     \Return{$\Fsynthesis(\A^n, \reduce^n(\fullFml))$}\label{algorithm:reduction:temporal}
   }
   \Case{$\param \compOp \ltermNN$}{%
     \Return{$\{\pval \in \PVal \mid \valuate{\param}{\pval} \compOp \valuate{\ltermNN}{\pval}\}$}\label{algorithm:reduction:parameter_constraint}
   }
   \Case{$\PExists \param\, \fullFml$}{%
     $\mathrm{pre} \gets \Freduce(\A, \fullFml)$\;
     \Return{$\{\pval \in \PVal \mid \exists \pval' \in \mathrm{pre}. \forall \param' \in \Param \setminus\{\param\}.\, \pval(\param') = \pval'(\param')\}$}\label{algorithm:reduction:parameter_quantifier}
   }
 }}
\end{algorithm}

\cref{algorithm:reduction} outlines our reduction of the synthesis problem.
The reduction of model checking is similar.
Our reduction is inductive on the structure of the \ExtHyperPTCTL{} formula~$\fullFml$.
Since the path assignment $(\runs, \pathOrder)$ is empty,
for atomic formulas, $\fullFml$ is satisfied (\cref{algorithm:reduction:top}) or violated (\cref{algorithm:reduction:bottom}) independent of $\A$ and $\pval$.
For Boolean cases,
we obtain the result from the result of the reduction of the immediate subformula(s) (\cref{algorithm:reduction:negation,algorithm:reduction:disjunction}).
For the temporal operators,
we use the result of the synthesis for the composed PTA $\A^n$ and the reduced formula $\reduce^n(\fullFml)$ (\cref{algorithm:reduction:temporal}).
For the remaining cases,
the result is independent of $\A$ (\cref{algorithm:reduction:parameter_constraint}) or
obtained by un-constraining the result for $\param$ (\cref{algorithm:reduction:parameter_quantifier}).

The correctness of \cref{algorithm:reduction} is immediate from the following theorem.

\newcommand{\correctnessReductionStatement}{%
 For a PTA $\A$, a temporal-level \NFExtHyperPTCTL{} formulas
 $\fml_{\exists} = \HEUntilFml[\compOp \paramOrInt]{\vrun_1,\vrun_2,\dots,\vrun_n}{\fml_1}{\fml_2}$ and
 $\fml_{\forall} = \HAUntilFml[\compOp \paramOrInt]{\vrun_1,\vrun_2,\dots,\vrun_n}{\fml_1}{\fml_2}$, and
 a parameter valuation $\pval$,
 we have $\A \models_{\pval} \fml_{\exists}$ (\resp{} $\A \models_{\pval} \fml_{\forall}$) if and only if
 we have $\A^n \models_{\pval} \reduce^n(\fml_{\exists})$ (\resp{} $\A^n \models_{\pval} \reduce^n(\fml_{\forall})$).%
}
\begin{theorem}
 \label{theorem:correctness_reduction}
 \correctnessReductionStatement{}
\end{theorem}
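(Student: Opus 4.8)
The plan is to reduce both equivalences to a single structure-preserving bijection between the behaviours of the self-composition $\A^n$ and the $n$-indexed path assignments over $\A$, and then to transport the (unique) $\exists\CTLU$- / $\forall\CTLU$-clause of $\fml_{\exists}$ / $\fml_{\forall}$ through it. Fix $\pval$ and a concrete state $\state=(\loc_1,\dots,\loc_n,\clockval_1,\dots,\clockval_n)$ of $\valuate{\A^n}{\pval}$. Let $\Phi$ send a path $\trace\in\Traces(\valuate{\A^n}{\pval},\state)$ to the path assignment $(\runs,\pathOrder)$ with $\domain(\runs)=\{\vrun_1,\dots,\vrun_n\}$, $\runs(\vrun_i)=\project{\trace}{i}$, and $\pathOrder$ the total preorder listing the discrete transitions of $\trace$ in the order they occur --- two of them declared $\pathOrder$-equivalent exactly when $\trace$ performs them with one combined edge of the iterated parallel composition $\A^n$, and strictly comparable otherwise. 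First I would check that $\Phi$ is well-defined (the induced preorder satisfies the coherence conditions of \cref{definition:path_assignment}, since delays in $\A^n$ elapse synchronously across all components and each discrete transition carries a single time-stamp) and that it is a bijection onto the set of path assignments with domain $\{\vrun_1,\dots,\vrun_n\}$ and $\runs(\vrun_i)\in\Traces(\valuate{\A}{\pval},(\loc_i,\clockval_i))$: surjectivity by interleaving the $n$ given traces along their common timeline in the order prescribed by $\pathOrder$, a $\pathOrder$-class meeting $k$ components being realised by a combined edge of $\A^n$ firing those $k$ components; injectivity because $(\runs,\pathOrder)$ recovers each component trajectory together with the exact interleaving, hence $\trace$. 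Nothing here depends on whether $\Time$ is $\setRnn$ or $\setN$.

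I would then record two compatibility facts. \emph{Labels}: since $\Label^n((\loc_1,\dots,\loc_n))=\bigsqcup_i\Label(\loc_i)$ and $\Init$ commutes with projection, $\action\in\Label(\Init(\project{\trace}{i}))$ iff $\action^i\in\Label^n(\Init(\trace))$, which is what $\reduce^n(\action_{\vrun_i})=\action^i_{\vrun}$ demands. \emph{Suffixes, durations, $\RISING$ and valuations}: for $\trace\suffixEqOp\trace'$ in $\valuate{\A^n}{\pval}$ one has $\duration(\trace-\trace')=\duration(\Phi(\trace)-\Phi(\trace'))$, and $\trace\suffixEqOp\trace'$ (resp.\ $\suffixOp$) holds iff $\Phi(\trace)\suffixEqOp\Phi(\trace')$ (resp.\ $\suffixOp$) --- the ``same elapsed $d$ on every component'' requirement in the suffix relation on path assignments being automatic by synchronous delays, and its conditions relating $\pathOrder$ to $\pathOrder[\runs']$ through the counts $\removedEdges{\runs(\vrun)}{\runs'(\vrun)}$ amounting precisely to chopping an initial block off the totally ordered discrete-transition sequence of $\trace$. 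Consequently $\RISING(\action^i,\trace-\trace')$ is in duration-preserving bijection with $\RISING(\action,\project{\trace}{i}-\project{\trace'}{i})$ (a switch of $\action^i$ to true depends only on component $i$, whose trajectory is $\project{\trace}{i}$), so the counter and recording valuations accumulated by the Ext-PTCTL semantics along $\A^n$ agree, componentwise and after the relabelling $\reduce^n$ applies to $\COUNT$/$\LAST$ terms, with the valuations $\UpdateCountval{\CountZero}{\runs^1}{\runs^3}$ and $\UpdateRecordval{\RecordZero}{\runs^1}{\runs^3}$ appearing in \cref{definition:semantics_ext_hyperptctl}.

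With these facts, the theorem follows. A routine sub-induction on a Boolean-level $\fml\in\{\fml_1,\fml_2\}$ (base cases $\top$, $\action_{\vrun_i}$, \LASTExpr{}, \COUNTNNExpr{}, \COUNTModExpr{}; $\neg,\lor$ immediate) gives $(\runs,\pathOrder),\state',\countval,\recordval\models_{\pval,\A}\fml$ iff $\Phi^{-1}(\runs,\pathOrder),\state'',\countval',\recordval'\models_{\pval,\A^n}\reduce^n(\fml)$ for any states $\state',\state''$ (irrelevant, as Boolean-level formulas do not inspect the state) and with $\countval',\recordval'$ the relabelled counterparts of $\countval,\recordval$. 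Then, unfolding the $\exists\CTLU$ (resp.\ $\forall\CTLU$) clause of \cref{definition:semantics_ext_hyperptctl} at the top level --- where the path assignment is empty, the valuations are $\CountZero,\RecordZero$, and $\state=\sinit$ with $\sinit^n=(\sinit,\dots,\sinit)$ the matching initial state of $\valuate{\A^n}{\pval}$ (assuming a single initial location, as usual; otherwise one pairs up compatible initial states) --- the existential (resp.\ universal) choice of an extension assigning traces from $\sinit$ to $\vrun_1,\dots,\vrun_n$ becomes, via $\Phi$, the existential (resp.\ universal) choice of a single trace from $\sinit^n$ in the Ext-PTCTL semantics of $\reduce^n(\fml_{\exists})$ (resp.\ $\reduce^n(\fml_{\forall})$); the nested quantifications over suffixes $\runs^2,\runs^3$, the timing constraint $\duration(\runs^1-\runs^2)\compOp\valuate{\paramOrInt}{\pval}$, and the satisfaction of $\fml_2$ at $\Init(\runs^2(\vrun_n))$ (resp.\ of $\fml_1$ along the $\runs^3$) all transfer by the suffix fact, the sub-induction, and the label fact.

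I expect the main obstacle to be the bookkeeping of the total preorder $\pathOrder$, both in setting up $\Phi$ and in the suffix fact. One must verify that every simultaneity pattern of discrete jumps across the $n$ traces --- including $k$-fold simultaneous jumps for $k\le n$ --- is realisable in $\A^n$ exactly by the combined edges of the iterated parallel composition and that no spurious behaviour is introduced, and that passing to a suffix keeps the relations between $\pathOrder$ and $\pathOrder[\runs']$ (including the $\npathOrder$ side-conditions attached to $\removedEdges{\runs(\vrun)}{\runs'(\vrun)}$) in exact correspondence with removing a prefix of $\trace$'s discrete-transition sequence. The remaining steps are mechanical unfoldings of the definitions, which I would package as two named claims with terse proofs.
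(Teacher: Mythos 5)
Your proposal is correct and follows essentially the same route as the paper: the bijection $\Phi$ and its inverse are exactly the paper's composition/decomposition lemmas (which also fix the edge-interleaving via a monotone surjection onto the discrete-transition sequence of $\A^n$), your suffix/duration compatibility fact is the paper's suffix lemma, and the final transfer of the $\exists\CTLU$/$\forall\CTLU$ clause through this correspondence matches the paper's four-direction unfolding. Your explicit Boolean-level sub-induction for the relabelled $\COUNT$/$\LAST$ predicates and the remark on multiple initial locations are points the paper treats only implicitly, but they do not change the argument.
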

\begin{proof}
 [Proof (sketch)]
 \changed{Since the other cases are similar,
 we only outline the proof of}
 $\A \models_{\pval} \fml_{\exists} \implies \A^n \models_{\pval} \reduce^n(\fml_{\exists})$.
 \LongVersion{See \cref{section:proof:correctness_reduction} for the details.}
 Suppose $\A \models_{\pval} \fml_{\exists}$ holds.
 By the semantics of \ExtHyperPTCTL{},
 for some extension $(\runs^{1}, \pathOrder[\runs^{1}])$ of $(\emptyruns, \pathOrder[\emptyruns])$
 satisfying $\domain(\runs) = \{\vrun_1,\vrun_2,\dots,\vrun_n\}$ and
 $\runs^{1}(\vrun_i) \in \Traces(\valuate{\A}{\pval})$ for each $i \in \{1,2,\dots,n\}$,
 \changed{there is a suffix} $(\runs^{2}, \pathOrder[\runs^{2}])$ \changed{of $(\runs^{1}, \pathOrder[\runs^{1}])$ such that}
 we have $\duration(\runs^{1} - \runs^{2}) \compOp \valuate{\paramOrInt}{\pval}$,
 \changed{$\fml_2$ holds at $(\runs^{2}, \pathOrder[\overline{\runs}^{2}])$}, and
 for any $(\runs^{3}, \pathOrder[\runs^{3}])$ \changed{between $(\runs^{1}, \pathOrder[\runs^{1}])$ and $(\runs^{2}, \pathOrder[\runs^{2}])$},
 \changed{$\fml_1$ holds at $(\runs^{3}, \pathOrder[\runs^{3}])$}.
 Since
 there are paths $\trace$ and $\trace'$ of $\valuate{\A^n}{\pval}$
 such that 
 \begin{ienumeration}
  \item \changed{$\trace'$ is a suffix of $\trace$} and
  \item for each $i \in \{1,2,\dots,n\}$, we have
  $\runs^{1}(\vrun_i) = \project{\trace}{i}$ and
  $\runs^{2}(\vrun_i) = \project{\trace'}{i}$, 
 \end{ienumeration}
 we can construct \changed{path assignments}
 $(\overline{\runs}^{1}, \pathOrder[\overline{\runs}^{1}])$ and $(\overline{\runs}^{2}, \pathOrder[\overline{\runs}^{2}])$
 \changed{mapping $\vrun$ to them}.
 \changed{Notice that $\reduce^n(\fml_2)$ holds at such $(\overline{\runs}^{2}, \pathOrder[\overline{\runs}^{2}])$}.
 Moreover, for any path assignment $(\overline{\runs}^{3}, \pathOrder[\overline{\runs}^{3}])$ between $(\overline{\runs}^{1}, \pathOrder[\overline{\runs}^{1}])$ and $(\overline{\runs}^{2}, \pathOrder[\overline{\runs}^{2}])$,
 \changed{since there is a corresponding path assignment $(\runs^{3}, \pathOrder[\runs^{3}])$ between $(\runs^{1}, \pathOrder[\runs^{1}])$ and $(\runs^{2}, \pathOrder[\runs^{2}])$,
 $\reduce^n(\fml_1)$ holds at $(\overline{\runs}^{3}, \pathOrder[\overline{\runs}^{3}])$}
 Therefore, we have $\A^n \models_{\pval} \reduce^n(\fml_{\exists})$.
\end{proof}
\subsection{Observers for extended predicates}\label{ss:observers}

We show that the satisfaction of the additional predicates
$\LAST(\action^1_{\vrun}) - \LAST(\action^2_{\vrun}) \compOp \lterm$, 
$\ltermCntNN \compOp d$, and 
$(\ltermCnt \bmod N) \compOp d$ in \ExtPTCTL{}
are observable by a PTA, and thus, \ExtPTCTL{} model checking and synthesis are reducible to \PTCTL{} model checking and synthesis, respectively.
Since we consider \ExtPTCTL{} formulas, we assume $\PathVar = \{\vrun\}$ without loss of generality.

For \changed{\LASTExpr{}} $\LAST(\action^1_{\vrun}) - \LAST(\action^2_{\vrun}) \compOp \lterm$,
since the truth value of $\LAST(\action^1_{\vrun}) - \LAST(\action^2_{\vrun}) \compOp \lterm$ changes only when
the truth value of $\action^1_{\vrun}$ or $\action^2_{\vrun}$ changes from $\bot$ to $\top$,
we can construct an observer by ``re-evaluating'' $\LAST(\action^1_{\vrun}) - \LAST(\action^2_{\vrun}) \compOp \lterm$ when $\action^1$ or $\action^2$ changes from false to true.
Additionally, we use invariants so that the initial states depend on the parameter valuations.

For \changed{\COUNTNNExpr{}} $\ltermCntNN \compOp d$,
since $\COUNT(\action_{\vrun})$ is monotonically increasing,
we can abstract the precise value once its value is sufficiently large.
Therefore, we can encode the counted value by finite locations.

For \changed{\COUNTModExpr{}} $(\ltermCnt \bmod N) \compOp d$,
since the value of $\COUNT(\action_{\vrun})$ is cycling back at $N \in \setN$,
we can also encode the counted value modulo $N$ by finite locations.
Observers were studied in~\ShortVersion{\cite{ABBL03}}\LongVersion{\cite{ABBL03,Andre13ICECCS}}, and we define them as follows.

\begin{definition}%
 [\changed{observers for \LASTExpr{}}]%
 \label{definition:observer_last}
 \newcommand{\isInitialized}{b}
 Let $\action^1, \action^2 \in \Actions$, ${\compOp} \in \{<, \leq, =, \geq, >\}$, and $\lterm$ be a linear term over $\Param$.
 The observer for $\fml = \LAST(\action^1_{\vrun}) - \LAST(\action^2_{\vrun}) \compOp \lterm$ is a PTA %
 $\observerof{\fml} = (\Loc, \Loc, \LocInit, \{\clock_{\action^{1}}, \clock_{\action^{2}}\}, \Param, \invariant, \Edges, \Label)$, where:
 $\Loc = \powerset{\{\action^1, \action^2, \fml\}} \times \{\top, \bot\}$,
 $\LocInit$ is $\LocInit = \{ (\props, \isInitialized) \in \Loc \mid \isInitialized = \bot \}$,
 $\invariant$ is such that
 $\invariant(\loc) = \top$ for any $\loc \not\in \LocInit$, and
 for $\loc = (\props, \bot) \in \LocInit$,
 $\invariant(\loc)$ is $0 \compOp \lterm$ if $\fml \in \props$ and otherwise,
 $\invariant(\loc)$ is $\neg (0 \compOp \lterm)$,
 $\Label$ is the identity function, and {%
 \newcommand{\norise}{\ensuremath{\bigl\{((\props, \isInitialized), \top, \emptyset,(\props', \top)) \mid \props' \subsetneq \props, \fml \in \props \cap \props' \text{ or } \fml \not\in \props \cup \props'\bigr\}}}%
 \newcommand{\riseActions}{\Actions_{\mathrm{rise}}}%
 \newcommand{\currentActions}{\{\action^{1}, \action^{2}\}}%
 \newcommand{\currentGuard}{(\clock_{\action^{1}} - \clock_{\action^{2}} \compOp \lterm)[\Clock_{\riseActions} \coloneq 0]}%
 \newcommand{\risetrue}{\ensuremath{\bigl\{((\props, \isInitialized), \currentGuard, \Clock_{\riseActions}, (\props' \cup \{\fml\}, \top)) \mid \props' \subseteq \currentActions, \riseActions = \props' \setminus \props, \riseActions \neq \emptyset\bigr\}}}%
 \newcommand{\risefalse}{\ensuremath{\bigl\{((\props, \isInitialized), \neg\currentGuard, \Clock_{\riseActions}, (\props', \top)) \mid \props' \subseteq \currentActions, \riseActions = \props' \setminus \props, \riseActions \neq \emptyset\bigr\}}}%
 $\Edges = \norise \cup \risetrue \cup \risefalse$, where
 $\Clock_{\riseActions} = \{\clock_{\action^{i}} \mid \action^{i} \in \riseActions\}$ and
 $\currentGuard$ is $-\clock_{\action^{2}} \compOp \lterm$ if $\Clock_{\riseActions} = \{\clock_{\action^1}\}$, $\clock_{\action^{1}} \compOp \lterm$ if $\Clock_{\riseActions} = \{\clock_{\action^2}\}$, and $0 \compOp \lterm$ if $\Clock_{\riseActions} = \{\clock_{\action^{1}}, \clock_{\action^{2}}\}$.
 }%
\end{definition}
\begin{definition}%
 [\changed{observers for \COUNTNNExpr{}}]%
 \label{definition:observer_count_nn}
 Let $\ltermCntNN = \sum_{\action \in \Actions} \alpha_{\action, \vrun} \COUNT(\action_{\vrun})$ be a non-negative linear term, \ie{} $\alpha_{\action, \vrun} \in \setN$.
 The observer for $\fml = \ltermCntNN \compOp d$ is a PTA
 $\observerof{\fml} = (\powerset{\Actions \cup \{\fml\}}, \powerset{\Actions} \times {\{0,1,\dots, d, d+1\}}^{\Actions}, \LocInit, \emptyset, \emptyset, \invariant, \Edges, \Label)$, where:
 $\LocInit = \powerset{\Actions} \times {\{0\}}^{\Actions}$,
 $\invariant(\loc) = \top$ for any $\loc \in \Loc$, %
 $\Label$ is such that $\Label((\props, \approxcountval)) = \props \cup \{\fml\}$ if $\sum_{\action \in \Actions} \alpha_{\action, \vrun} \approxcountval(\action) \compOp d$ holds and
 $\Label((\props, \approxcountval)) = \props$ otherwise, and
 $\Edges = \{((\props, \approxcountval), \top, \emptyset, (\props', \approxcountval[\props' \setminus \props \pluseq 1])) \mid \props, \props' \subseteq \Actions, \approxcountval \in \{0,1,\dots, d, d+1\}^{\Actions}\}$, where $\approxcountval[\props' \setminus \props \pluseq 1]$ is such that $v[\props' \setminus \props \pluseq 1](\action) = v(\action)$ for $\action \not\in \props' \setminus \props$, $\approxcountval[\props' \setminus \props \pluseq 1](\action) = \approxcountval(\action) + 1$ if $\action \in \props' \setminus \props$ and $\approxcountval(\action) < d$, and $\approxcountval[\props' \setminus \props \pluseq 1](\action) = d + 1$, otherwise.
\end{definition}

\ShortVersion{\changed{We omit the definition of the observers for $\fml = (\ltermCnt \bmod N) \compOp d$ since it is similar to \cref{definition:observer_count_nn}. The main difference is to reset the ``counter'' $\approxcountval$ to 0 when the value becomes $N$.}}
\LongVersion{\begin{definition}%
 [\changed{observers for \COUNTModExpr{}}]%
 \label{definition:observer_count_mod}
 Let $\ltermCnt = \sum_{\action \in \Actions} \alpha_{\action, \vrun} \COUNT(\action_{\vrun})$ be a linear term.
 The observer for $\fml = (\ltermCnt \bmod N) \compOp d$ is a PTA
 $\observerof{\fml} = (\powerset{\Actions \cup \{\fml\}}, \powerset{\Actions} \times \{0,1,\dots, N -1\}^{\Actions}, \LocInit, \emptyset, \emptyset, \invariant, \Edges, \Label)$, where:
 $\LocInit = \powerset{\Actions} \times \{0\}^{\Actions}$,
 $\invariant(\loc) = \top$ for any $\loc \in \Loc$,
 $\Label$ is such that $\Label((\props, \approxcountval)) = \props \cup \{\fml\}$ if $\sum_{\action \in \Actions} \alpha_{\action, \vrun} \approxcountval(\action) \bmod N \compOp d$ holds and
 $\Label((\props, \approxcountval)) = \props$ otherwise, and
 $\Edges = \{((\props, \approxcountval), \top, \emptyset, (\props', \approxcountval[\props' \setminus \props \pluseq 1])) \mid \props, \props' \subseteq \Actions, \approxcountval \in \{0,1,\dots, N-1\}^{\Actions}\}$, where $\approxcountval[\props' \setminus \props \pluseq 1]$ is such that $\approxcountval[\props' \setminus \props \pluseq 1](\action) = \approxcountval(\action)$ for $\action \not\in \props' \setminus \props$, $\approxcountval[\props' \setminus \props \pluseq 1](\action) = \approxcountval(\action) + 1$ if $\action \in \props' \setminus \props$ and $\approxcountval(\action) < N - 1$, and $\approxcountval[\props' \setminus \props \pluseq 1](\action) = 0$, otherwise.
\end{definition}}

\changed{The observers semantically capture the original expressions intuitively because
$\clock_{\action^{1}}$ and $\clock_{\action^{2}}$ correspond to $\LAST(\action^1_{\vrun})$ and $\LAST(\action^2_{\vrun})$, and 
$\approxcountval$ is a sound abstraction of $\UpdateCountval{\CountZero}{\runs}{\runs'}$.}

\begin{lemma}
 [correctness of the observers]%
 \label{lemma:correctness_observers}
 For each $\action \in \Actions$, we let $\alpha_{\action} \in \setN$ and $\alpha'_{\action} \in \setZ$.
 Let $N \in \setN$, $\action^1, \action^2 \in \Actions$, ${\compOp} \in \{<, \leq, =, \geq, >\}$, $d \in \setN$, and $\lterm$ be a linear term over $\Param$.
 Let $\fml$ be one of the following:
 $\LAST(\action^1_{\vrun}) - \LAST(\action^2_{\vrun}) \compOp \lterm$,
 $\sum_{\action \in \Actions} \alpha_{\action} \COUNT(\action_{\vrun}) \compOp d$, or
 $(\sum_{\action \in \Actions} \alpha'_{\action} \COUNT(\action_{\vrun}) \bmod N) \compOp d$.
 Let $\pval$ be a valuation over $\Param$, 
 $(\runs, \pathOrder)$ be a path assignment satisfying
 $\domain(\runs) = \{\vrun\}$ and
 $\runs(\vrun) \in \Traces(\valuate{\observerof{\fml}}{\pval})$.
 For any $(\runs', \pathOrder[\runs'])$ satisfying $(\runs, \pathOrder) \suffixEqOp (\runs', \pathOrder[\runs'])$,
 we have
 $\runs', \Init(\runs'(\vrun)), \UpdateCountval{\CountZero}{\runs}{\runs'}, \UpdateRecordval{\RecordZero}{\runs}{\runs'} \models_{\pval,\observerof{\fml}} \fml$ if and only if we have $\fml \in \Label(\Init(\runs'(\vrun)))$.
 \ShortVersion{\qed{}}
\end{lemma}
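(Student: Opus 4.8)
The plan is to prove \cref{lemma:correctness_observers} by a case analysis on the three admissible shapes of $\fml$ (a \LASTExpr{}, a \COUNTNNExpr{}, or a \COUNTModExpr{}), and, in each case, to exhibit an \emph{invariant} that relates the current location (and, for the first case, the current clock values) of $\observerof{\fml}$ along $\runs(\vrun)$ to the counter valuation $\UpdateCountval{\CountZero}{\runs}{\runs''}$ and the recording valuation $\UpdateRecordval{\RecordZero}{\runs}{\runs''}$ induced by every intermediate path assignment $(\runs'', \pathOrder[\runs''])$ with $(\runs, \pathOrder) \suffixEqOp (\runs'', \pathOrder[\runs'']) \suffixEqOp (\runs', \pathOrder[\runs'])$. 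The invariant will be established by induction on the number of rising edges of $\action^1$ and $\action^2$ (equivalently, on $\removedEdges{\runs(\vrun)}{\runs'(\vrun)}$ together with the total preorder $\pathOrder$ that resolves discrete transitions occurring at the same time-point), using the construction of the observer in \cref{definition:observer_last,definition:observer_count_nn,definition:observer_count_mod} and the semantics in \cref{definition:semantics_ext_hyperptctl}. Taking $\runs'' = \runs'$ at the end yields exactly the stated equivalence.

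For $\fml = \LAST(\action^1_{\vrun}) - \LAST(\action^2_{\vrun}) \compOp \lterm$, the invariant is: at every position of $\runs(\vrun)$, the clock $\clock_{\action^i}$ holds the time elapsed since the last rise of $\action^i$ (or, if $\action^i$ has never risen, the whole elapsed duration) — which is precisely the value computed by $\UpdateRecordval{\RecordZero}{\cdot}{\cdot}$ for $(\action^i,\vrun)$, including its fall-back clause ``$\duration(\runs-\runs')$'' — and the current location carries $\fml$ iff $\clock_{\action^1}-\clock_{\action^2} \compOp \pval(\lterm)$. The base case is ensured by the invariants attached to the initial locations, which force $\fml$ to be in the label iff $0 \compOp \pval(\lterm)$, matching $\RecordZero$. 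For the step, a delay transition advances $\clock_{\action^1}$ and $\clock_{\action^2}$ (and the corresponding semantic recording values) by the same amount, so $\clock_{\action^1}-\clock_{\action^2}$ and hence the truth of $\fml$ are unchanged — this is exactly why re-evaluating $\fml$ only at rises is sound; a ``no-rise'' discrete edge keeps the $\fml$-component of the label unchanged; and a ``rise'' edge resets the clocks of the newly-true propositions and moves to the location whose $\fml$-component reflects the re-evaluated guard after the reset, which coincides with the new semantic difference.

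For $\fml = \sum_{\action} \alpha_\action \COUNT(\action_{\vrun}) \compOp d$ with $\alpha_\action \in \setN$, the invariant is that, at each position, $\approxcountval(\action)$ equals $\min\bigl(\UpdateCountval{\CountZero}{\runs}{\runs''}(\action,\vrun),\, d+1\bigr)$ and the label carries $\fml$ iff $\sum_{\action} \alpha_\action \approxcountval(\action) \compOp d$. This induction is routine: each discrete edge increments $\approxcountval(\action)$ exactly for the propositions switching from false to true (matching $|\RISING(\action,\cdot)|$), saturating at $d+1$, and delays change nothing. The only real content is the soundness step: if no action with a nonzero coefficient is saturated, then $\sum_{\action} \alpha_\action \approxcountval(\action) = \sum_{\action} \alpha_\action \COUNT(\action_{\vrun})$; otherwise both sums are $\ge d+1 > d$, and a comparison $\compOp d$ has the same truth value on both. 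The modular case $\fml = (\sum_{\action} \alpha'_\action \COUNT(\action_{\vrun}) \bmod N) \compOp d$ is handled identically, with invariant $\approxcountval(\action) = \UpdateCountval{\CountZero}{\runs}{\runs''}(\action,\vrun) \bmod N$ and the soundness step being the congruence $\sum_{\action} \alpha'_\action \COUNT(\action_{\vrun}) \equiv \sum_{\action} \alpha'_\action \approxcountval(\action) \pmod N$.

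The main obstacle is not the arithmetic (the saturation and modular-congruence arguments are elementary) but the bookkeeping that matches the observer's discrete steps to the \emph{semantic} notion of a rise: the definition of $\RISING$ requires a strictly positive witness $\delta$ certifying that $\action$ was false immediately before it became true, and this must be carried out in the presence of the preorder $\pathOrder$, which linearizes the discrete transitions taken at a common time-point. One has to check that the sequence of ``rise''/``no-rise''/labelled-location choices available to $\observerof{\fml}$ is in exact correspondence with that ordering, and that the recording fall-back clause (``$\duration(\runs-\runs')$ when $\action$ never rose'') matches a clock $\clock_{\action^i}$ that was never reset. Once this correspondence is set up carefully, the inductions above go through mechanically.
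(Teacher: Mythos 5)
Your proposal is correct and follows essentially the same route as the paper's proof: identify the semantic quantities $\UpdateCountval{\CountZero}{\runs}{\runs'}$ and $\UpdateRecordval{\RecordZero}{\runs}{\runs'}$ with the observer's location counter component $\approxcountval$ (up to saturation at $d+1$, resp.\ reduction mod~$N$) and with the clocks $\clock_{\action^1},\clock_{\action^2}$, then conclude via the definition of $\Label$ using the saturation and congruence arguments. The only difference is that you make explicit, as an induction along the path, the state correspondence that the paper simply asserts ``by the construction of $\observerof{\fml}$''.
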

\LongVersion{\begin{proof}
 [Proof (sketch)]
 \ShortVersion{\changed{Since the proof for the other expressions are similar, we prove only for \LASTExpr{}.
 Let $\fml = \LAST(\action^1_{\vrun}) - \LAST(\action^2_{\vrun}) \compOp \lterm$.}}
 Let $(\runs, \pathOrder)$ and $(\runs', \pathOrder[\runs'])$ be path assignments satisfying
 $\domain(\runs) = \domain(\runs') = \{\vrun\}$,
 $\runs(\vrun) \in \Traces(\valuate{\observerof{\fml}}{\pval})$, and
 $(\runs, \pathOrder) \suffixEqOp (\runs', \pathOrder[\runs'])$.
 Let\LongVersion{ $\countval(\action) = |\RISING(\action, \runs(\vrun) - \runs'(\vrun))|$ and let}
 $\recordval\colon\Actions\to\setRnn$ be such that
 $\recordval(\action) = \duration(\runs(\vrun) - \runs'(\vrun))$ if $\RISING(\action, \runs(\vrun) - \runs'(\vrun)) = \emptyset$
 and otherwise, $\recordval(\action)$
 is the duration of $\Init(\runs'(\vrun))$ in the suffix of $\runs(\vrun)$ starting from the last position in $\RISING(\action, \runs(\vrun) - \runs'(\vrun))$.
 Notice that, from their definition,
 \LongVersion{$\countval$ and} $\recordval$ correspond\ShortVersion{s} to
 \LongVersion{$\UpdateCountval{\CountZero}{\runs}{\runs'}$ and} $\UpdateRecordval{\RecordZero}{\runs}{\runs'}$\LongVersion{, respectively}.
 Let $(\loc, \clockval) = \Init(\runs'(\vrun))$.

 \LongVersion{For $\fml = \LAST(\action^1_{\vrun}) - \LAST(\action^2_{\vrun}) \compOp \lterm$,}
 \LongVersion{by}\ShortVersion{By} the construction of $\observerof{\fml}$,
 we have $\clockval(\action^1) = \recordval(\action^1)$ and $\clockval(\action^2) = \recordval(\action^2)$.
 Moreover, by the definition of $\Label$,
 we have $\clockval(\action^1) - \clockval(\action^2) \compOp \lterm \iff \fml \in \Label(\loc)$.
 Thus, by the semantics of \ExtPTCTL{},
 $\runs', \Init(\runs'(\vrun)), \UpdateCountval{\CountZero}{\runs}{\runs'}, \UpdateRecordval{\RecordZero}{\runs}{\runs'} \models_{\pval,\observerof{\fml}} \fml \iff \fml \in \Label(\Init(\runs'(\vrun)))$ holds.

 \LongVersion{For $\fml = \sum_{\action \in \Actions} \alpha_{\action} \COUNT(\action_{\vrun}) \compOp d$,
 let $\tilde{\countval}$ be the second element of $\loc$.
 From the construction of $\observerof{\fml}$, for any $\action \in \Actions$,
 we have $\countval(\action) = \tilde{\countval}(\action)$ if $\countval(\action) \leq d$, and otherwise,
 $\tilde{\countval}(\action) = d + 1$.
 Therefore, if $\countval(\action) \leq d$ holds for any $\action \in \Actions$,
 we have $\sum_{\action \in \Actions} \alpha_{\action} \countval(\action) \compOp d \iff \sum_{\action \in \Actions} \alpha_{\action} \tilde{\countval}(\action) \compOp d$.
 If $\countval(\action) > d$ holds for some $\action \in \Actions$,
 since we have $\alpha_{\action} \in \setN$ for any $\action \in \Actions$,
 we have $\sum_{\action \in \Actions} \alpha_{\action} \countval(\action) > d$.
 Thus, $\sum_{\action \in \Actions} \alpha_{\action} \countval(\action) \compOp d \iff \sum_{\action \in \Actions} \alpha_{\action} \tilde{\countval}(\action) \compOp d$ holds.
 Moreover, by definition of $\Label$, we have
 $\sum_{\action \in \Actions} \alpha_{\action} \countval(\action) \compOp d \iff \fml \in \Label(\Init(\runs'(\vrun)))$.
 By the semantics of \ExtPTCTL{}, we have
 $\runs', \Init(\runs'(\vrun)), \UpdateCountval{\CountZero}{\runs}{\runs'}, \UpdateRecordval{\RecordZero}{\runs}{\runs'} \models_{\pval,\observerof{\fml}} \fml \iff \fml \in \Label(\Init(\runs'(\vrun)))$.}

 \LongVersion{
 For $\fml = (\sum_{\action \in \Actions} \alpha'_{\action} \COUNT(\action_{\vrun}) \bmod N) \compOp d$,
 let $\tilde{\countval}$ be the second element of $\loc$.
 From the construction of $\observerof{\fml}$, for any $\action \in \Actions$,
 we have $\countval(\action) = \tilde{\countval}(\action) \bmod N$.
 Therefore,
 we have $\sum_{\action \in \Actions} \alpha'_{\action} \countval(\action) \bmod N \compOp d \iff \sum_{\action \in \Actions} \alpha'_{\action} \tilde{\countval}(\action) \bmod N \compOp d$.
 By definition of $\Label$, we have
 $\sum_{\action \in \Actions} \alpha'_{\action} \countval(\action) \bmod N \compOp d \iff \fml \in \Label(\Init(\runs'(\vrun)))$.
 By the semantics of \ExtPTCTL{}, we have
 $\runs', \Init(\runs'(\vrun)), \UpdateCountval{\CountZero}{\runs}{\runs'}, \UpdateRecordval{\RecordZero}{\runs}{\runs'} \models_{\pval,\observerof{\fml}} \fml$ if and only if $\fml \in \Label(\Init(\runs'(\vrun)))$ holds.}
\end{proof}}

For an \ExtPTCTL{} formula $\fullFml$, we let $\observerof{\fullFml}$ be the PTA $\observerof{\fullFml} = \bigtimes_{\mathit{ext} \in \extof{\fullFml}} \observerof{\mathit{ext}}$, where $\extof{\fullFml}$ is the set of 
\changed{\LASTExpr{}, \COUNTNNExpr{}, and \COUNTModExpr{} in $\fullFml$}.
For an \ExtPTCTL{} formula $\fullFml$,
we let $\rmextfrom{\fullFml}$ be the PTCTL formula with the same syntax but having $\extof{\fullFml}$ as additional atomic propositions.
The following shows that the model checking and synthesis for \ExtPTCTL{} formulas are reducible to those for PTCTL formulas.
\LongVersion{Intuitively, \cref{theorem:correctness_observers} holds because
\begin{ienumeration}
 \item since $\observerof{\fullFml}$ is complete and by \cref{lemma:correctness_observers},
 for any path of $\A$, there is a corresponding path of $\A \productOp \observerof{\fullFml}$ capturing the semantics of $\extof{\fullFml}$, and
 \item  for any path of $\A \productOp \observerof{\fullFml}$, its projection to $\A$ and $\observerof{\fullFml}$ captures the behavior of $\A$ and $\extof{\fullFml}$, respectively.
\end{ienumeration}
}

\newcommand{\correctnessObserversStatement}{%
 For a PTA $\A$, a parameter valuation $\pval \in \PVal$, and a top-level \ExtPTCTL{} formula $\fullFml$,
 we have $\A \models_{\pval} \fullFml$ if and only if $\A \productOp \observerof{\fullFml} \models_{\pval} \rmextfrom{\fullFml}$.%
}
\begin{theorem}
 [correctness of the reduction with $\observerof{\fullFml}$]%
 \label{theorem:correctness_observers}
 \correctnessObserversStatement{}
\end{theorem}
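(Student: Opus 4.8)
The plan is to establish a label-preserving bijection between the paths of $\valuate{\A}{\pval}$ and those of $\valuate{\A \productOp \observerof{\fullFml}}{\pval}$, and then transport it through the (Ext-)\PTCTL{} semantics by induction on~$\fullFml$. Recall that $\observerof{\fullFml} = \bigtimes_{\mathit{ext} \in \extof{\fullFml}} \observerof{\mathit{ext}}$, and that each $\observerof{\mathit{ext}}$ has, besides its private proposition~$\mathit{ext}$, exactly the propositions of~$\A$ occurring in~$\mathit{ext}$; the synchronized product therefore keeps $\A$ and every observer in lockstep on those propositions, while $\mathit{ext}$ is visible only through $\observerof{\mathit{ext}}$.

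First I would show that the observers are \emph{deterministic and complete} relative to~$\A$. For discrete transitions: given an observer location and the proposition set reached by a step of~$\A$, \cref{definition:observer_last} offers exactly one matching edge among its three families (no rise; a rise with the re-evaluated guard $\clock_{\action^{1}} - \clock_{\action^{2}} \compOp \lterm$ true after the relevant resets; the same with that guard false --- the last two partition the clock space), and likewise for the single edge family of \cref{definition:observer_count_nn} and its modular counterpart; hence an observer's successor location is uniquely determined by the step of~$\A$ and the current clock values. For initial states, the parameter-dependent invariants on~$\LocInit$ leave, for the fixed~$\pval$, exactly one initial observer location valid, and its label records~$\mathit{ext}$ correctly at time~$0$ (for a \LASTExpr{}, $\fml$ holds initially iff $0 \compOp \pval(\lterm)$, consistently with the recording valuation $\UpdateRecordval{\RecordZero}{\runs}{\runs'}$ assigning the same elapsed time to both propositions when no rise has occurred). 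Delay transitions impose no constraint on the observer side (its non-initial invariants mention no clocks) and leave its label unchanged --- which is sound because each extended predicate is delay-invariant: the difference of two $\LAST$-values is constant along a delay, and the $\COUNT$ values do not change. Consequently, $\A$-projection is a bijection from $\Traces(\valuate{\A \productOp \observerof{\fullFml}}{\pval})$ onto $\Traces(\valuate{\A}{\pval})$ preserving positions and their durations.

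Next I would promote this to a \emph{labelling correspondence}: along a corresponding pair of paths, at every position the $\A$-part of the label agrees, and for each $\mathit{ext} \in \extof{\fullFml}$ the observer contributes~$\mathit{ext}$ to the label iff $\mathit{ext}$ holds at the corresponding position of the $\A$-path under the \ExtPTCTL{} semantics, with the counter and recording valuations accumulated from the start of the path. Indeed, projecting the product path onto $\A \productOp \observerof{\mathit{ext}}$ and then onto $\observerof{\mathit{ext}}$ yields a path of $\observerof{\mathit{ext}}$, and \cref{lemma:correctness_observers}, applied with $\runs$ the whole path, $\runs'$ its suffix at the current position, and $\countval = \CountZero$, $\recordval = \RecordZero$, delivers exactly this equivalence, using that $\UpdateCountval{\CountZero}{\runs}{\runs'}$ and $\UpdateRecordval{\RecordZero}{\runs}{\runs'}$ are precisely the accumulated valuations and that the truth of~$\mathit{ext}$ depends only on the timed sequence of the propositions it mentions, which is identical on the $\A$-path and on its $\observerof{\mathit{ext}}$-projection.

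Finally I would conclude by induction on the structure of~$\fullFml$. Since $\rmextfrom{\fullFml}$ has the same syntax as~$\fullFml$ but treats $\extof{\fullFml}$ as ordinary atomic propositions, and since \PTCTL{} and \ExtPTCTL{} satisfaction inspects only the labels along the paths issued from the current state, the durations between positions, and (for extended atoms) the accumulated counter/recording valuations, the base cases --- atoms in~$\Actions$ and in~$\extof{\fullFml}$ --- follow from the labelling correspondence; Boolean connectives are immediate; the $\exists$- and $\forall$-until operators transport because the path bijection preserves suffixes, positions, and durations; and the top-level cases $\param \compOp \ltermNN$, $\neg\,$, $\lor$, and $\PExists \param$ do not touch the automaton. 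Since the bijection works in both directions, both implications of the statement follow simultaneously. I expect the main obstacle to be the first step: checking that the observer edge families are genuinely complete and deterministic for all proposition changes (simultaneous rises and drops, the interplay with the $\top/\bot$ initialization flag, and the parameter-dependent initial invariant), since this is precisely what turns $\A$-projection into a bijection rather than a mere surjection, and hence what lets the induction run in both directions.
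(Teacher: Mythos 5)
Your overall architecture matches the paper's: you combine a lifting result for observer paths with \cref{lemma:correctness_observers} to get a labelling correspondence, and then push the correspondence through the semantics of the temporal operators (the paper does only the $\exists\,\CTLU$ case explicitly and declares the rest similar, whereas you organize it as a structural induction; that is a presentational difference only).

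The one step that does not hold as stated is the claim that $\A$-projection is a \emph{bijection} from $\Traces(\valuate{\A \productOp \observerof{\fullFml}}{\pval})$ onto $\Traces(\valuate{\A}{\pval})$, which you explicitly lean on for the backward implication (``since the bijection works in both directions\dots''). The observers are complete but not injectively so at the level of product paths: the edge set of \cref{definition:observer_count_nn} ranges over all pairs $\props, \props' \subseteq \Actions$ including $\props' = \props$, which yields guard-free, reset-free self-loops, and the synchronized product separately admits edges where only $\A$ moves, where only the observer moves, and where both move. Hence a single $\A$-transition that leaves the labels unchanged has several distinct product counterparts (observer idle vs.\ observer self-looping), and the observer can also interleave spurious label-preserving steps of its own; the projection is therefore a surjection, not a bijection. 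Fortunately the bijection is not needed: for the forward direction, completeness (\cref{lemma:correctness_observers} together with the surjectivity argument of \cref{lemma:completeness_observers}) supplies \emph{some} correct lift of the witnessing path assignment; for the backward direction, it suffices that \emph{every} product path projects to a path of $\valuate{\A}{\pval}$ and that \cref{lemma:correctness_observers} is quantified over \emph{all} observer paths, so the observer's label is forced to agree with the semantics of the extended predicate on any product path whatsoever, including the non-canonical ones. If you replace ``bijection'' by ``surjection whose fibres all carry the correct labels'', your induction goes through in both directions and coincides with the paper's argument.
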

\begin{proof}
 [Proof (sketch)]
 \changed{Since the other cases are similar,
 we only outline the proof of} $\A \models_{\pval} \fml_{\exists} \implies \A \productOp \observerof{\fullFml} \models_{\pval} \rmextfrom{\fml_{\exists}}$, where
 $\fml_{\exists} = \HEUntilFml[\compOp \paramOrInt]{\vrun}{\fml_1}{\fml_2}$.
 \changed{Moreover, since we have $\PathVar = \{\vrun\}$, we discuss it based on paths rather than path assignments for simplicity.} \LongVersion{See \cref{section:proof:correctness_observers} for a full and more formal proof.}
 Suppose $\A \models_{\pval} \fml_{\exists}$ holds.
 By the semantics of ExtPTCTL, 
 \changed{there is a path $\trace_1$ of $\valuate{\A}{\pval}$ and a suffix $\trace_2$ of $\trace_1$ such that 
 $\duration(\trace_1 - \trace_2) \compOp \valuate{\paramOrInt}{\pval}$, 
 $\fml_2$ holds at $\trace_2$, and
 $\fml_1$ holds at any position between $\trace_1$ and $\trace_2$.
 }
 Since the observer $\observerof{\fullFml}$ is complete, there is a path \changed{$\trace'_{1}$} of \changed{$\valuate{\A \productOp \observerof{\fullFml}}{\pval}$} satisfying
 \changed{$\project{\trace_{1}}{\valuate{\A}{\pval}} = \trace'_1$}.
 Moreover, by taking a suitable suffix of \changed{$\trace'_{1}$}, 
 there is a path \changed{$\trace'_{2}$} of \changed{$\valuate{\A \productOp \observerof{\fullFml}}{\pval}$} satisfying
 \changed{$\project{\trace'_{2}}{\valuate{\A}{\pval}} = \trace_{2}$}.
 \changed{By \cref{lemma:correctness_observers}, $\rmextfrom{\fml_2}$ holds at $\trace'_{2}$.}
 \changed{For any position between $\trace'_1$ and $\trace'_2$, since there is a corresponding position between $\trace_1$ and $\trace_2$,
 $\rmextfrom{\fml_1}$ holds at that position.}
 Therefore, we have $\A \productOp \observerof{\fullFml} \models_{\pval} \rmextfrom{\fml_{\exists}}$.
\end{proof}
\subsection{Worked example}\label{section:example}

Here, we present an illustrative example of our \ExtHyperPTCTL{} synthesis semi-algorithm.
Consider again the PTA~$\A$ in \cref{figure:system_running_example}.
Let $\fullFml$ be the \NFExtHyperPTCTL{} formula in \cref{example:skewed_clock}.
\ShortVersion{\begin{figure}[tbp]}%
\LongVersion{\begin{figure*}[tbp]}
 \centering
 \begin{tikzpicture}[shorten >=1pt,scale=\ShortVersion{0.68}\LongVersion{.8},every node/.style={transform shape},every initial by arrow/.style={initial text={}}]
  \def\xDistance{10}
  \node[location,initial] (a_a) at (0,0)  [align=center]{$(\styleloc{\loc_0},\styleloc{\loc_0})$\\ $\{\styleact{H^1}, \styleact{H^2}\}$};
  \node[location] (b_a) at (\xDistance,0) [align=center]{$(\styleloc{\loc_1}, \styleloc{\loc_0})$\\ $\{\styleact{L^1}, \styleact{H^2}\}$};
  \node[location] (a_b) at (0,5)  [align=center]{$(\styleloc{\loc_0}, \styleloc{\loc_1})$\\ $\{\styleact{H^1}, \styleact{L^2}\}$};
  \node[location] (b_b) at (\xDistance,5) [align=center]{$(\styleloc{\loc_1}, \styleloc{\loc_1})$\\ $\{\styleact{L^1}, \styleact{L^1}\}$};

  \node[invariant,yshift=-1em] (a_b_invariant) at (0,6.4) {$\styleclock{c^1} \leq \styleparam{\param_1} \land \styleclock{c^2} \leq 3$};
  \node[invariant,yshift=-1em] (b_b_invariant) at (\xDistance,6.4) {$\styleclock{c^1} \leq 3 \land \styleclock{c^2} \leq 3$};
  \node[invariant,yshift=1em] (a_a_invariant) at (0,-1.4) {$\styleclock{c^1} \leq \styleparam{\param_1} \land \styleclock{c^2} \leq \styleparam{\param_1}$};
  \node[invariant,yshift=1em] (b_a_invariant) at (\xDistance,-1.4) {$\styleclock{c^1} \leq 3 \land \styleclock{c^2} \leq 3$};

  \path[->]
  (a_a) edge [bend left=5] node[below] {$\styleclock{c^1} < \styleparam{\param_1}/\styleclock{c^1} \coloneq 0$} (b_a)
  (b_a) edge [bend left=10] node[below] {$\styleclock{c^1} < 3$} (a_a)

  (a_a) edge [bend left=5] node[left,align=center] {$\styleclock{c^2} < \styleparam{\param_1}/$\\ $\styleclock{c^2} \coloneq 0$} (a_b)
  (a_b) edge [bend left=5] node[pos=0.8,right] {$\styleclock{c^2} < 3$} (a_a)

  (a_b) edge [bend left=10] node[above] {$\styleclock{c^1} < \styleparam{\param_1}/\styleclock{c^1} \coloneq 0$} (b_b)
  (b_b) edge [bend left=5] node[above] {$\styleclock{c^1} < 3$} (a_b)

  (b_a) edge [bend left=5] node[pos=0.2,left,align=center] {$\styleclock{c^2} < \styleparam{\param_1}/$\\ $\styleclock{c^2} \coloneq 0$} (b_b)
  (b_b) edge [bend left=5] node[right] {$\styleclock{c^2} < 3$} (b_a)

  (a_a) edge [bend left=5] node[pos=0.3,above left,align=center] {$\styleclock{c^1} < \styleparam{\param_1} \land \styleclock{c^2} < \styleparam{\param_1}/$\\ $\styleclock{c^1} \coloneq 0, \styleclock{c^2} \coloneq 0$} (b_b)
  (b_b) edge [bend left=5] node[pos=0.3,below right,align=center] {$\styleclock{c^1} < 3 \land \styleclock{c^2} < 3$} (a_a)
  (a_b) edge [bend left=5] node[pos=0.3,above right,align=center] {$\styleclock{c^1} < \styleparam{\param_1} \land \styleclock{c^2} < 3/$\\ $\styleclock{c^1} \coloneq 0$} (b_a)
  (b_a) edge [bend left=5] node[pos=0.3,below left,align=center] {$\styleclock{c^1} < 3 \land \styleclock{c^2} < \styleparam{\param_1}/$\\ $\styleclock{c^2} \coloneq 0$} (a_b)
  ;
 \end{tikzpicture}
 \caption{The self-composition $\A \composeOp \A$ of $\A$ in \cref{figure:system_running_example}.}%
 \label{figure:system_running_example:composed}
\ShortVersion{\end{figure}}%
\LongVersion{\end{figure*}}
First, we reduce \NFExtHyperPTCTL{} model checking to \NFExtPTCTL{} model checking via self-composition (\cref{section:self-composition}).
Since $\fullFml$ contains two path variables $\stylepathvar{\vrun_{1}}$ and~$\stylepathvar{\vrun_{2}}$,
we take the self-composition $\A \composeOp \A$ of $\A$ (\cref{figure:system_running_example:composed}).
The corresponding \NFExtPTCTL{} formula is $\reduce^2(\fullFml) = \HEUntilFml{\stylepathvar{\vrun}}{\fml'_1}{\bigl(\fml'_2 \land \fml'_3 \bigr)}$, where
$\fml'_1 = \reduce^2(\mathrm{AtMostOneDiff})$,
$\fml'_2 = \reduce^2(\mathrm{SameCount})$,
$\fml'_3 = \reduce^2(\mathrm{LargeDeviation})$.
\ShortVersion{\begin{figure}[tbp]}
\LongVersion{\begin{figure*}[tbp]}
 \centering
 \scalebox{\ShortVersion{.80}\LongVersion{1}}{\begin{tikzpicture}[shorten >=1pt,scale=\ShortVersion{0.70}\LongVersion{.9},every node/.style={transform shape},every initial by arrow/.style={initial text={}}]
  \tikzmath{
   \xDistance = 3.0;
   \x1 = 0;
   \x2 = \xDistance;
   \x3 = -\xDistance;
   \x4 = 2 * \xDistance;
   \x5 = -2 * \xDistance;
  }
  \def\xDistance{10}
  \node[initial,location] (1_1_0_0) at (\x1,3)  [align=center]{$(\{\styleact{H^1}_{\stylepathvar{\vrun}}, \styleact{H^2}_{\stylepathvar{\vrun}}, \fml'_2\}, 0, 0)$};
  \node[initial,location] (0_1_0_0) at (\x2,0)  [align=center]{$(\{\styleact{H^2}_{\stylepathvar{\vrun}}, \fml'_2\}, 0, 0)$};
  \node[initial,location] (0_0_0_0) at (\x1,0) [align=center]{$(\{\fml'_2\}, 0, 0)$};
  \node[initial,location] (1_0_0_0) at (\x3,0) [align=center]{$(\{\styleact{H^1}_{\stylepathvar{\vrun}}, \fml'_2\}, 0, 0)$};

  \node[location] (1_1_1_1) at (\x1,-3)  [align=center]{$(\{\styleact{H^1}_{\stylepathvar{\vrun}}, \styleact{H^2}_{\stylepathvar{\vrun}}, \fml'_2\}, 1, 1)$};
  \node[location] (1_0_1_0) at (\x2,-3)  [align=center]{$(\{\styleact{H^1}_{\stylepathvar{\vrun}}\}, 1, 0)$};
  \node[location] (0_1_0_1) at (\x3,-3) [align=center]{$(\{\styleact{H^2}_{\stylepathvar{\vrun}}\}, 0, 1)$};
  \node[location] (1_1_1_0) at (\x4,-3) [align=center]{$(\{\styleact{H^1}_{\stylepathvar{\vrun}}, \styleact{H^2}_{\stylepathvar{\vrun}}\}, 1, 0)$};
  \node[location] (1_1_0_1) at (\x5,-3) [align=center]{$(\{\styleact{H^1}_{\stylepathvar{\vrun}}, \styleact{H^2}_{\stylepathvar{\vrun}}\}, 0, 1)$};

  \tikzmath{
    \dotsY = -4.8;
  }
  \foreach \x in {-6, -3, 0, 3, 6}
    \node at (\x,\dotsY) {\Large $\vdots$};
  \node[rotate=30] at (7.5,\dotsY) {\Large $\vdots$};
  \node[rotate=-30] at (-7.5,\dotsY) {\Large $\vdots$};

  \path[->]
  (1_1_0_0) edge node {} (0_1_0_0)
  (1_1_0_0) edge node {}  (0_0_0_0)
  (1_1_0_0) edge node{}  (1_0_0_0)
  (0_1_0_0) edge [bend left=10] node {} (0_0_0_0)
  (1_0_0_0) edge [bend left=10] node {}  (0_0_0_0)
  (0_0_0_0) edge node {}  (1_1_1_1)
  (0_0_0_0) edge node {} (1_0_1_0)
  (0_0_0_0) edge node {}  (0_1_0_1)
  (0_1_0_0) edge node {} (1_1_1_0)
  (0_1_0_0) edge node {} (1_0_1_0)
  (1_0_0_0) edge node {} (1_1_0_1)
  (1_0_0_0) edge node {} (0_1_0_1)
  ;
 \end{tikzpicture}
 }
 \caption{A part of the observer of $\fml'_2 = (\COUNT(\styleact{H^1}_{\stylepathvar{\vrun}}) - \COUNT(\styleact{H^2}_{\stylepathvar{\vrun}}) \bmod 4) = 0$.}%
 \label{figure:system_running_example:observer_count}
\ShortVersion{\end{figure}}
\LongVersion{\end{figure*}}
\begin{figure*}[tbp]
 \centering
 \begin{tikzpicture}[shorten >=1pt,scale=\ShortVersion{0.60}\LongVersion{.7},every node/.style={transform shape},every initial by arrow/.style={initial text={}}]
  \def\xDistance{10}
  \def\initialLocationsY{8}
  \node[initial,location] (t_t_t_i) at (2.0,\initialLocationsY)  [align=center]{$(\{\styleact{H^1}_{\stylepathvar{\vrun}}, \styleact{H^2}_{\stylepathvar{\vrun}}, \fml'_2\}, \bot)$};
  \node[initial,location] (t_t_f_i) at (7.0,\initialLocationsY) [align=center]{$(\{\styleact{H^1}_{\stylepathvar{\vrun}}, \styleact{H^2}_{\stylepathvar{\vrun}}\}, \bot)$};
  \node[initial,location] (t_f_t_i) at (-3.0,\initialLocationsY)  [align=center]{$(\{\styleact{H^1}_{\stylepathvar{\vrun}}, \fml'_2\}, \bot)$};
  \node[initial,location] (t_f_f_i) at (12.0,\initialLocationsY) [align=center]{$(\{\styleact{H^1}_{\stylepathvar{\vrun}}\}, \bot)$};

  \node at (-5.5, \initialLocationsY) {\Large $\cdots$};
  \node at (13.5, \initialLocationsY) {\Large $\cdots$};

  \node[invariant,yshift=-1em] (t_t_t_i_invariant) at (2.0,10.2) {$0 \not\in [-\styleparam{\param_2}, \styleparam{\param_2}]$};
  \node[invariant,yshift=-1em] (t_t_f_i_invariant) at (7.0,10.2) {$0 \in [-\styleparam{\param_2}, \styleparam{\param_2}]$};
  \node[invariant,yshift=-1em] (t_f_t_i_invariant) at (-3.0,10.2) {$0 \not\in [-\styleparam{\param_2}, \styleparam{\param_2}]$};
  \node[invariant,yshift=-1em] (t_f_f_i_invariant) at (12.0,10.2) {$0 \in [-\styleparam{\param_2}, \styleparam{\param_2}]$};

  \node[location] (t_t_t) at (-2.75,0)  [align=center]{$(\{\styleact{H^1}_{\stylepathvar{\vrun}}, \styleact{H^2}_{\stylepathvar{\vrun}}, \fml'_2\}, \top)$};
  \node[location] (t_t_f) at (12.0,0) [align=center]{$(\{\styleact{H^1}_{\stylepathvar{\vrun}}, \styleact{H^2}_{\stylepathvar{\vrun}}\}, \top)$};
  \node[location] (t_f_t) at (0,5)  [align=center]{$(\{\styleact{H^1}_{\stylepathvar{\vrun}}, \fml'_2\}, \top)$};
  \node[location] (t_f_f) at (\xDistance,5) [align=center]{$(\{\styleact{H^1}_{\stylepathvar{\vrun}}\}, \top)$};
  \node[location] (f_f_t) at (2.5,0)  [align=center]{$(\{\fml'_2\}, \top)$};
  \node[location] (f_f_f) at (7.5,0) [align=center]{$(\emptyset, \top)$};

  \tikzmath{
    \dotsY = -1.5;
  }
  \node[rotate=30] at (-1.5,\dotsY) {\Large $\vdots$};
  \node[rotate=-30] at (2.0,\dotsY) {\Large $\vdots$};
  \node[rotate=30] at (8.0,\dotsY) {\Large $\vdots$};
  \node[rotate=-30] at (11.5,\dotsY) {\Large $\vdots$};

  \path[->]
  (t_t_t) edge [bend left=25] (t_f_t)
  (t_t_t) edge [bend left=10] (f_f_t)
  (t_f_t) edge [bend left=15] (f_f_t)
  (t_t_f) edge [bend left=5] (t_f_f)
  (t_t_f) edge (f_f_f)
  (t_f_f) edge [bend left=5] (f_f_f)
  (t_f_t) edge [bend right=35] node[above left,align=center] {$\styleclock{c}_{\styleact{H^1}_{\stylepathvar{\vrun}}} \not\in [-\styleparam{\param_2}, \styleparam{\param_2}]$\\ $/\styleclock{c}_{\styleact{H^2}_{\stylepathvar{\vrun}}} \coloneq 0$} (t_t_t)
  (f_f_t) edge [bend right=5] node[pos=0.7,below left,align=center] {$-\styleclock{c}_{\styleact{H^2}_{\stylepathvar{\vrun}}} \not\in [-\styleparam{\param_2}, \styleparam{\param_2}]$\\ $/\styleclock{c}_{\styleact{H^1}_{\stylepathvar{\vrun}}} \coloneq 0$} (t_f_t)
  (f_f_t) edge [bend left=10]node[below,align=center] {$/\styleclock{c}_{\styleact{H^1}_{\stylepathvar{\vrun}}} \coloneq 0, \styleclock{c}_{\styleact{H^2}_{\stylepathvar{\vrun}}} \coloneq 0$} (t_t_t)
  (t_f_f) edge [bend right=15] node[pos=0.1,above left,align=center] {$\styleclock{c}_{\styleact{H^1}_{\stylepathvar{\vrun}}} \not\in [-\styleparam{\param_2}, \styleparam{\param_2}]$\\ $/\styleclock{c}_{\styleact{H^2}_{\stylepathvar{\vrun}}} \coloneq 0$} (t_t_t)
  (f_f_f) edge [bend left=5] node[pos=0.05,below left,align=center] {$-\styleclock{c}_{\styleact{H^2}_{\stylepathvar{\vrun}}} \not\in [-\styleparam{\param_2}, \styleparam{\param_2}]$\\ $/\styleclock{c}_{\styleact{H^1}_{\stylepathvar{\vrun}}} \coloneq 0$} (t_f_t)
  (t_f_t) edge [bend left=15] node[pos=0.1,above right,align=center] {$\styleclock{c}_{\styleact{H^1}_{\stylepathvar{\vrun}}} \in [-\styleparam{\param_2}, \styleparam{\param_2}]$\\ $/\styleclock{c}_{\styleact{H^2}_{\stylepathvar{\vrun}}} \coloneq 0$} (t_t_f)
  (f_f_t) edge [bend left=5] node[pos=0.5,above left,align=center] {$-\styleclock{c}_{\styleact{H^2}_{\stylepathvar{\vrun}}} \in [-\styleparam{\param_2}, \styleparam{\param_2}]$\\ $/\styleclock{c}_{\styleact{H^1}_{\stylepathvar{\vrun}}} \coloneq 0$} (t_f_f)
  (t_f_f) edge [bend left=5] node[pos=0.5,above right,align=center] {$\styleclock{c}_{\styleact{H^1}_{\stylepathvar{\vrun}}} \in [-\styleparam{\param_2}, \styleparam{\param_2}]$\\ $/\styleclock{c}_{\styleact{H^2}_{\stylepathvar{\vrun}}} \coloneq 0$} (t_t_f)
  (f_f_f) edge [bend left=5] node[pos=0.2,above left,align=center] {$-\styleclock{c}_{\styleact{H^2}_{\stylepathvar{\vrun}}} \in [-\styleparam{\param_2}, \styleparam{\param_2}]$\\ $/\styleclock{c}_{\styleact{H^1}_{\stylepathvar{\vrun}}} \coloneq 0$} (t_f_f)
  ;

  \path[->]
  (t_t_t_i) edge [bend left=0] (t_f_t)
  (t_t_t_i) edge [bend left=0] (f_f_t)
  (t_f_t_i) edge [bend left=45] (f_f_t)
  (t_t_f_i) edge [bend left=5] (t_f_f)
  (t_t_f_i) edge [bend left=20] (f_f_f)
  (t_f_f_i) edge [bend left=25] (f_f_f)
  (t_f_t_i) edge [bend right=45] node[above left,align=center] {$\styleclock{c}_{\styleact{H^1}_{\stylepathvar{\vrun}}} \not\in [-\styleparam{\param_2}, \styleparam{\param_2}]$\\ $/\styleclock{c}_{\styleact{H^2}_{\stylepathvar{\vrun}}} \coloneq 0$} (t_t_t)
  (t_f_f_i) edge [bend left=45] node[pos=0.5,above right,align=center] {$\styleclock{c}_{\styleact{H^1}_{\stylepathvar{\vrun}}} \in [-\styleparam{\param_2}, \styleparam{\param_2}]$\\ $/\styleclock{c}_{\styleact{H^2}_{\stylepathvar{\vrun}}} \coloneq 0$} (t_t_f)
  ;
 \end{tikzpicture}
 \caption{A part of the observer of $\fml'_3 = \LAST(\styleact{H^1}_{\stylepathvar{\vrun}}) - \LAST(\styleact{H^2}_{\stylepathvar{\vrun}}) \not\in [-\styleparam{\param_2}, \styleparam{\param_2}]$. Most of the edges from initial locations are omitted for simplicity. The initial satisfaction of $\fml'_3$ is conditioned with the invariant.}%
 \label{figure:system_running_example:observer_last}
\end{figure*}
Then, we construct the observers $\observerof{\fml'_1}$, $\observerof{\fml'_2}$, and $\observerof{\fml'_3}$\LongVersion{ following \cref{definition:observer_last,definition:observer_count_mod}}.
\cref{figure:system_running_example:observer_count,figure:system_running_example:observer_last} show a part of $\observerof{\fml'_2}$ and $\observerof{\fml'_3}$.
Finally, we apply PTCTL synthesis to $(\A \composeOp \A) \productOp \observerof{\fml'_1} \productOp \observerof{\fml'_2} \productOp \observerof{\fml'_3}$ with $\reduce^2(\fullFml)$.
In this case, the synthesized parameter constraint is as follows: $2 \times \styleparam{\param_1} > \styleparam{\param_2} \land 3 \times \styleparam{\param_1} + 3 > 2 \times \styleparam{\param_2} \land \styleparam{\param_1} + 3 > \styleparam{\param_2} \land \styleparam{\param_1}$.
We remark that our implementation supports more general formulas, \eg{}
$\HEDiaFml{\vrun_1,\vrun_2}{(\mathrm{SameCount}' \land \mathrm{LargeDeviation})}$, with
$\mathrm{SameCount}' \equiv \COUNT(\styleact{H}_{\vrun_1}) = \COUNT(\styleact{H}_{\vrun_2})$.
\section{Decidable subclasses\LongVersion{ for model checking and synthesis}}\label{section:decidability}

The model checking problem (and the synthesis counterpart) against general PTAs is trivially undecidable,
even for the nest-free existential fragment.

\begin{proposition}
	Model checking PTAs against a \NFExistsHyperPTCTL{} formula is undecidable.
\end{proposition}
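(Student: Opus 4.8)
The plan is to reduce from the reachability-emptiness problem for PTAs, which is undecidable over both discrete and dense time~\cite{AHV93}: given a PTA $\A = (\Actions, \Loc, \LocInit, \Clock, \Param, \invariant, \Edges, \Label)$ and a target location $\loc_f \in \Loc$, it is undecidable whether there exists $\pval \in \PVal$ such that some finite run of $\valuate{\A}{\pval}$ from an initial state reaches $\loc_f$. The only mismatch with our setting is that \cref{definition:semantics_HyperPTCTL} quantifies over \emph{infinite} non-Zeno paths, so one first needs a small gadget turning ``finite reachability of $\loc_f$'' into ``existence of an infinite path visiting $\loc_f$''.

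Concretely, I would build a PTA $\A'$ from $\A$ as follows: introduce a fresh atomic proposition $\action_f \notin \Actions$ and a fresh location $\loc_\infty \notin \Loc$ with $\invariant'(\loc_\infty) = \top$ and $\Label'(\loc_\infty) = \{\action_f\}$; add a self-loop $(\loc_\infty, \CTrue, \emptyset, \loc_\infty)$ and an edge $(\loc_f, \CTrue, \emptyset, \loc_\infty)$; and leave $\invariant$, $\Label$, $\Edges$, and $\LocInit$ unchanged on the old locations. Since $\loc_\infty$ is fresh, any run of $\valuate{\A'}{\pval}$ entering $\loc_\infty$ must do so through the new edge from $\loc_f$, so its prefix up to the first visit of $\loc_f$ uses only original edges and respects all original invariants; conversely, any finite run of $\valuate{\A}{\pval}$ reaching $\loc_f$ extends into $\A'$ by moving to $\loc_\infty$ and then looping there while letting one time unit elapse per iteration, yielding an infinite non-Zeno path. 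Hence, for every $\pval \in \PVal$, $\loc_f$ is reachable in $\valuate{\A}{\pval}$ if and only if $\valuate{\A'}{\pval}$ admits an infinite path from an initial state that visits a position whose location is $\loc_\infty$, \ie{} a position satisfying $\action_f$.

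It then remains to observe that this last condition is exactly model checking $\A'$ against the \NFExistsHyperPTCTL{} formula $\fullFml = \HEDiaFml[\ge 0]{\vrun}{(\action_f)_{\vrun}}$, \ie{} $\exists \vrun.\, \top \UntilOp{\ge 0} (\action_f)_{\vrun}$: it is nest-free, uses a single path variable, has no quantifier over parameters (hence it belongs to \ExistsHyperPTCTL{}), and the operands of its temporal operator are Boolean-level, so it is a bona fide \NFExistsHyperPTCTL{} formula (indeed a nest-free \ExistsPTCTL{} one). By \cref{definition:semantics_HyperPTCTL}, $\A' \models_{\pval} \fullFml$ holds precisely when $\valuate{\A'}{\pval}$ has an infinite path from an initial state along which $\action_f$ eventually holds, so $\exists \pval \in \PVal : \A' \models_{\pval} \fullFml$ if and only if $(\A, \loc_f)$ is a positive instance of reachability-emptiness; undecidability of the latter thus transfers, and the argument is insensitive to whether $\Time = \setN$ or $\Time = \setRnn$. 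The only genuinely delicate point is the correctness of the $\loc_\infty$-gadget: one must check that it introduces no spurious reachability of the target (ensured by keeping all original invariants and by exposing $\action_f$ only on $\loc_\infty$, which is reachable only through $\loc_f$) while always providing a non-Zeno continuation; the rest is a routine unfolding of the definitions.
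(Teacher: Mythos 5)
Your proof is correct and takes essentially the same route as the paper's: both reduce from reachability-emptiness for PTAs~\cite{AHV93} by observing that the single-quantifier nest-free formula $\HEDiaFml[\geq 0]{\vrun}{\action}$ is just the reachability property $\CTLEF{}\,\action$. The only difference is that you additionally make explicit, via the $\loc_\infty$ gadget, that finite reachability of the target coincides with the infinite non-Zeno path semantics of \cref{definition:semantics_HyperPTCTL} --- a point the paper's one-line argument leaves implicit.
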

\begin{proof}
	The \NFExistsHyperPTCTL{} formula ``$\HEDiaFml[\geq 0]{\vrun}{\action}$'' is equivalent to the TCTL formula $\CTLEF{} \action$ denoting reachability.
	Reachability-emptiness is known to be undecidable for PTAs~\cite{AHV93}, which gives the result.
\end{proof}

This negative result leads us to exhibit subclasses of either the model or the formula for which decidability can be achieved,
which we do in the following.

\subsection{Non-parametric model against parametric formula}
We consider here non-parametric TAs against a restriction of \NFExtHyperPTCTL{} defined as follows:
\begin{ienumeration}
	\item parameters cannot be used in $\LAST$\LongVersion{, \ie{} their syntax is restricted to %
      $\LAST(\action_{\vrun}) - \LAST(\action_{\vrun}) \compOp n$},
      and
	\item parameters are integer-valued.
\end{ienumeration}%
\LongVersion{%
	(See formal definition in \cref{definition:class-BDR08} in \cref{appendix:NFExtHyperPTCTLBDR}.)
}%
Put it differently, parameters cannot be used in the extended syntax (the constructs that are turned into observers during our transformation in \cref{ss:observers}); we insist that parameters can be used anywhere else in the formula.
Let \NFExtHyperPTCTLBDR{} denote this class (with ``RP'' denoting a restricted use of parameters).
For instance, the opacity in \cref{example:opacity} is in this class.
\begin{theorem}[complexity of the model checking problem]\label{theorem:MC:BDR08}
	Model checking timed automata against a \NFExtHyperPTCTLBDR{} formula is in \NEXPTIME{6}.
\end{theorem}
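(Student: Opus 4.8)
The plan is to chain the two reductions of \cref{section:reduction} and then invoke the complexity of \PTCTL{} model checking of \emph{non-parametric} timed automata from~\cite{BDR08}. Write $N = |\A| + |\fullFml|$. First I would apply self-composition (\cref{theorem:correctness_reduction}, as organised by \cref{algorithm:reduction}): after renaming the path variables of the distinct temporal subformulas apart, $\fullFml$ reduces to an \ExtPTCTL{} formula $\fullFml'$ over the single self-composition $\A^{n^\star}$, where $n^\star$ is at most the number of path variables occurring in $\fullFml$, hence $n^\star \le |\fullFml|$, and where the top-level $\neg, \lor, \PExists\param, \param\compOp\ltermNN$ layer is carried over verbatim. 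Since $\A$ is a plain TA, so is $\A^{n^\star}$, with $\le |\Loc|^{n^\star}$ locations (singly exponential in $N$) and $n^\star \cdot |\Clock|$ clocks (polynomial in $N$).

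Second I would apply the observer construction of \cref{ss:observers} (\cref{theorem:correctness_observers}, \cref{lemma:correctness_observers}): replace every extended predicate of $\fullFml'$ by a fresh atomic proposition, yielding a \PTCTL{} formula $\fullFml'' = \rmextfrom{\fullFml'}$, and compose $\A^{n^\star}$ with the product $\observerof{\fullFml'}$ of the corresponding observers. The decisive point is the very definition of the class: in \NFExtHyperPTCTLBDR{}, parameters are forbidden inside $\LAST$-subexpressions, and \COUNTNNExpr{} and \COUNTModExpr{} never contain parameters; hence by \cref{definition:observer_last,definition:observer_count_nn,definition:observer_count_mod} \emph{every} observer is a non-parametric TA. Therefore $\A^{n^\star} \productOp \observerof{\fullFml'}$ is a non-parametric TA, and the only remaining parameters (those of $\fullFml$, which are integer-valued) occur in $\fullFml''$ solely in the timing bounds $\compOp\paramOrInt$ of $\EUntilOp{}$/$\AUntilOp{}$ and in top-level constraints $\param\compOp\ltermNN$ --- exactly the \PTCTL{} setting of~\cite{BDR08}.

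Third comes the size bookkeeping. An observer for a \LASTExpr{} has constant size; an observer for a \COUNTNNExpr{} with bound $d$ (resp.\ a \COUNTModExpr{} with modulus $N'$), written in binary, has $2^{|\Actions^{n^\star}|}(d{+}2)^{|\Actions^{n^\star}|}$ (resp.\ $2^{|\Actions^{n^\star}|}(N')^{|\Actions^{n^\star}|}$) locations and no clocks. Each such factor is at most $2^{\mathrm{poly}(N)}$, there are at most $|\fullFml|$ extended subexpressions, and $\A^{n^\star}$ has $2^{\mathrm{poly}(N)}$ locations; since the relevant exponents only \emph{multiply} into a polynomial in $N$, the product TA has $2^{\mathrm{poly}(N)}$ locations, $\mathrm{poly}(N)$ clocks, and $\mathrm{poly}(N)$-bit constants, i.e.\ total size $2^{\mathrm{poly}(N)}$, while $\fullFml''$ has size $\mathcal{O}(|\fullFml|)$ because $\reduce^{n^\star}$ and $\rmextfrom{\cdot}$ are size-preserving renamings. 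By~\cite{BDR08}, \PTCTL{} model checking of a non-parametric TA is in $\NEXPTIME{5}$ in the product of the model and formula sizes, which here is $2^{\mathrm{poly}(N)}$; hence the procedure runs in $\exp^{(5)}\!\big(\mathrm{poly}(2^{\mathrm{poly}(N)})\big) = \exp^{(6)}(\mathrm{poly}(N))$, that is, within $\NEXPTIME{6}$ in $N$.

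The main obstacle --- and exactly what the ``RP'' restriction buys us --- is keeping \emph{all} observers non-parametric, so that \cref{theorem:correctness_observers} lands us in the decidable non-parametric \cite{BDR08} fragment instead of in undecidable full parametric \PTCTL{}; the second, purely quantitative, point to be careful about is that self-composition and the counting observers are each individually exponential, and one must check that their exponents only multiply into a polynomial, so that precisely one extra exponential is added on top of the $\NEXPTIME{5}$ of~\cite{BDR08}. Following \cref{algorithm:reduction} literally --- treating each temporal subformula on its own $\A^{n_i} \productOp \observerof{\cdot}$ --- yields the same bound, at the cost of a little more bookkeeping to recombine the top-level $\neg, \lor, \PExists\param, \param\compOp\ltermNN$ layer by finitely many further \PTCTL{} model-checking queries (universality being obtained by negating the, still \PTCTL{}, formula) together with manipulations of integer-parameter half-spaces, all well within the $\NEXPTIME{6}$ budget.
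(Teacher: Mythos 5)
Your proposal is correct and follows essentially the same route as the paper: self-composition plus non-parametric observers (guaranteed by the ``RP'' restriction) reduce the problem to \PTCTL{} model checking of a non-parametric TA, and the singly-exponential blow-up of the product adds one exponential on top of the \NEXPTIME{5} bound of~\cite[Theorem~7.5]{BDR08}. Your size bookkeeping for the counting observers is in fact slightly more careful than the paper's (which treats them as constant-size), but the exponents multiply into a polynomial exactly as you argue, so the conclusion is the same.
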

\begin{proof}
    \changed{We reduce to model checking a non-parametric TA against PTCTL~\cite{BDR08}.}
	Recall that our general construction (\cref{section:reduction}) reduces model checking a PTA against a \NFExtHyperPTCTL{} formula to model checking a network of PTAs and a set of observers against a PTCTL formula.
	Since the observers are created for the extended syntax, and since they do not contain parameters in \NFExtHyperPTCTLBDR{}, the synchronized product of the multiple TAs\LongVersion{ (by self-composition)} and the observers does not contain parameters.

	Now, model checking a non-parametric TA against PTCTL can be done in \NEXPTIME{5} in the synchronized product of the size\footnote{%
		As in~\cite{BDR08}, by size of a formula~$\fullFml$, we mean the number of symbols used to write it, and we denote it by~$|\fullFml|$.
	} of~$\A$ and~$\fullFml$ from~\cite[Theorem~7.5]{BDR08}.
	Therefore, since $|\A^n| \leq |\A|^{|\fullFml|}$
	and due to the fact that the observers are in constant size, and come in number at most linear in~$|\fullFml|$\LongVersion{ (the size of the formula)},
	model checking a TA against a \NFExtHyperPTCTLBDR{} formula is in \NEXPTIME{6}.
	Observe that this is only an upper bound because
	\begin{oneenumeration}%
		\item we only provide a one-way reduction, and
		\item the complexity in~\cite[Theorem~7.5]{BDR08} only gives an upper bound anyway.
	\end{oneenumeration}%
\end{proof}
\begin{remark}
	Following the same reasoning, model checking a TA against a formula expressed in \LongVersion{a restriction of \NFExtHyperPTCTLBDR{} to the case where (top-level) formulas are of the form $\PExists \param_1 \PExists \param_2 \dots \PExists \param_n\, \fullFml$, where $\fullFml$ is a \NFExtHyperPTCTLBDR{} formula with no quantifiers over parameters (which could be called ``}\NFExistsExtHyperPTCTLBDR{}\LongVersion{''),}
	is in \NEXPTIME{4}, reusing the fact that model checking a non-parametric TA against \ExistsPTCTL{} can be done in \NEXPTIME{3}~\cite[Proposition~7.6]{BDR08}.
\end{remark}
\LongVersion{%
\begin{remark}\label{remark:BDR08:discrete}
	Over discrete time ($\Time = \setN$), model checking TAs against a \NFExtHyperPTCTLBDR{} formula (resp.\ \NFExistsExtHyperPTCTLBDR{}) remains in \NEXPTIME{6} (resp.\ \NEXPTIME{4}), following \cite[Corollary~7.3]{BDR08} (resp.\ \cite[Proposition~7.4]{BDR08}).
\end{remark}
}

\begin{theorem}[effective parameter synthesis]\label{theorem:NFExtHyperPTCTL:synthesis}
	The solution to the parameter synthesis problem for TAs against a \NFExtHyperPTCTL{} formula can be effectively computed.
\end{theorem}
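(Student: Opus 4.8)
The plan is to combine the reduction of \cref{section:reduction} with an effective, piecewise analysis of the resulting \PTCTL{} synthesis instance, crucially exploiting that $\A$ carries no parameters. First I would run the recursion of \cref{algorithm:reduction} on the input formula $\fullFml$: the Boolean and parameter cases (\cref{algorithm:reduction:top,algorithm:reduction:bottom,algorithm:reduction:negation,algorithm:reduction:disjunction,algorithm:reduction:parameter_constraint,algorithm:reduction:parameter_quantifier}) only combine sets by union, complement, and projection onto $\Param\setminus\{\param\}$, and finite unions of (not necessarily closed) rational polyhedra are effectively closed under all of these (Fourier--Motzkin elimination for the projection induced by $\PExists$). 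Hence it suffices to show that each invocation of nest-free \ExtPTCTL{} synthesis on \cref{algorithm:reduction:temporal}, i.e.\ synthesis of a single temporal-level formula $\fml = \HEUntilFml[\compOp\paramOrInt]{\vrun_1,\vrun_2,\dots,\vrun_n}{\fml_1}{\fml_2}$ (or its $\forall$ variant), returns a computable finite union of polyhedra. By \cref{theorem:correctness_reduction} this equals $\reduce^n(\fml)$-synthesis over $\A^n$, which is parameter-free since $\A$ is and parallel composition of parameter-free TAs is parameter-free; by \cref{theorem:correctness_observers} it further reduces to \PTCTL{} synthesis for the PTA $\mathcal{B}=\A^n\productOp\observerof{\reduce^n(\fml)}$ against the nest-free \PTCTL{} formula $\rmextfrom{\reduce^n(\fml)}$.

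Next I would locate the parameters in this \PTCTL{} instance. Since $\A^n$ contributes none, every parameter originates from $\fullFml$ and occurs only (i) as the timing bound $\paramOrInt$ of the single, un-nested Until operator of $\rmextfrom{\reduce^n(\fml)}$, and (ii) inside the $\LAST$-observers of \cref{definition:observer_last} (the $\COUNT$-observers of \cref{definition:observer_count_nn} and \cref{definition:observer_count_mod} are parameter-free), where by construction a parameter enters solely through an integer-coefficient linear term $\lterm$ appearing in a guard of the form $\clock\compOp\lterm$ or $\clock-\clock'\compOp\lterm$, and in invariants of initial locations that are pure parameter constraints $0\compOp\lterm$. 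Thus in $\mathcal{B}$ parameters are never reset and never added to a clock; they are only ever compared, and only against a single clock, a difference of two clocks, or a constant.

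The core step is then a parametric region (or zone) construction for $\mathcal{B}$ together with $\rmextfrom{\reduce^n(\fml)}$. The region automaton of the non-parametric skeleton of $\mathcal{B}$ (treating each $\lterm$ occurring in $\fml$ and the bound $\paramOrInt$ as opaque symbols) is finite, and each such symbol enters only through finitely many affine threshold comparisons with clocks and with the integer constants of $\A$; consequently $\PVal$ splits into finitely many polyhedra — obtained by enumerating the consistent orderings of these symbols relative to one another and to the (bounded) integer constants — on each of which $\mathcal{B}$ together with the formula behaves as one fixed non-parametric \PTCTL{} model-checking instance. On each polyhedron I would decide that instance via \cite[Theorem~7.5]{BDR08} after instantiating the symbols at a representative rational point, noting that the extra cuts this introduces in the region graph are themselves captured by finitely many affine constraints that merely refine the polyhedron. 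Returning the union of the polyhedra on which the answer is positive, and threading these unions back through \cref{algorithm:reduction}, yields the solution set effectively as a finite union of polyhedra.

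The hard part is exactly this core step, and it is also where the present statement genuinely goes beyond \cref{theorem:MC:BDR08}: here parameters really live inside the model $\mathcal{B}$ (through the $\LAST$-observers) and may be rational, so one cannot simply call a non-parametric model checker as a black box — and general \PTCTL{} synthesis over PTAs is undecidable. The observation that must be pushed through carefully is that the only parametric objects, the terms $\lterm$, occur in the very restricted ``compared against one clock or a difference of two clocks'' shape forced by \cref{definition:observer_last}; combined with the finiteness of the region graph of the parameter-free part $\A^n$ (and of the finite-memory skeletons of the observers), this makes the truth of every parametric guard on every region a conjunction/disjunction of affine conditions on $\Param$, hence the satisfaction set finitely piecewise-constant over polyhedra of $\PVal$. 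Making this region-refinement argument precise — in particular handling the parametric ``max-constant'' issue created by guards of the form $\clock\compOp\lterm$ and the rationality of the representative points — is the technical crux, and the rest of the proof is bookkeeping over the structure of \cref{algorithm:reduction}.
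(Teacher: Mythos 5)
Your reduction chain (self-composition via \cref{theorem:correctness_reduction}, observers via \cref{theorem:correctness_observers}, then \PTCTL{} synthesis over the product) is the same as the paper's. Where you diverge is the final step: the paper does not prove it from scratch but simply invokes \cite[Theorem~6.5]{BDR08}, which already asserts that the validity domain of a \PTCTL{} formula over a \emph{non-parametric} TA is effectively computable, as a formula of the decidable arithmetic over $\tuple{\setR, +, <, \setN, 0, 1}$. For that invocation to apply, the reduced model must be parameter-free, which is why the surrounding subsection works with \NFExtHyperPTCTLBDR{}: parameters are barred from $\LAST$ precisely so that the observers carry no parameters. You instead read the statement as covering full \NFExtHyperPTCTL{}, and must therefore cope with parametric observers; your ``core step'' is the attempt to do so, and it is where the proof breaks.

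The gap is twofold. First, your justification for the finite partition of $\PVal$ --- that parameters occur only in comparisons of the form $\clock \compOp \lterm$ against one clock or a clock difference --- describes \emph{every} PTA, a class for which reachability-emptiness is undecidable, so it cannot by itself yield the claim. Nor does the finiteness of the region graph of the non-parametric skeleton bound the set of clock values at which the parametric observer guards are evaluated: the observer clocks measure (unbounded) times since the last rise, so the set of relevant thresholds for $\lterm$, and hence the set of induced cells of $\PVal$, need not be finite. Second, even in the parameter-free-model case your claim that each temporal-level call returns a \emph{finite union of polyhedra} is too strong: a bound $\paramOrInt$ on an until operator can be satisfiable exactly at the achievable durations of a unit-time loop, giving a validity domain such as $\setN$, which is definable in $\tuple{\setR, +, <, \setN, 0, 1}$ but is not a finite union of polyhedra. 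Your Boolean/projection bookkeeping through \cref{algorithm:reduction} does survive, but it must be carried out in that richer (still decidable) formalism; the polyhedral closure properties you rely on are not the right ones, and the representative-point/region-refinement argument you flag as the ``technical crux'' is left unestablished.
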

\begin{proof}
    \changed{As in the proof of \cref{theorem:MC:BDR08}, we reduce to model checking a non-parametric TA against PTCTL~\cite{BDR08}.}
\LongVersion{%
   The solution of the synthesis problem can be effectively computed using the first-order arithmetic over the structure $\tuple{\setR, +, <, \setN, 0, 1}$ of the real numbers, where $\setN$ being the predicate ``is a natural number''.
  This arithmetic has a decidable theory with complexity in \NEXPTIME{3} in the size of the sentence~\cite{Weispfenning99}.
  }%
	From \LongVersion{\cite[Theorem~6.5]{BDR08}}\ShortVersion{\cite{BDR08}}, the solution of the parameter synthesis of a non-parametric TA against PTCTL can be effectively computed.
\end{proof}
\subsection{L/U-PTAs against \NFExtHyperPTCTL{}}
\begin{definition}[L/U-PTA~\cite{HRSV02}]\label{def:LUPTA}
	An \emph{L/U-PTA} (lower-bound/upper-bound PTA) is a PTA where the set of parameters is partitioned into lower-bound parameters and upper-bound parameters,
	where each upper-bound (resp.\ lower-bound) parameter~$\parami{i}$ must be such that,
	for every guard or invariant constraint $\clock \compOp \parami{i}$, we have ${\compOp} \in \set{ \leq, < }$ (resp.\ ${\compOp} \in \set{ \geq, > }$).
\end{definition}
\begin{example}
	The PTA in \cref{figure:system_running_example} is an L/U-PTA with an upper-bound parameter $\parami{1}$.
	\LongVersion{(This is in fact even a U-PTA~\cite{BlT09}.)

	}%
	The \LongVersion{observer fragment }PTA in \cref{figure:system_running_example:observer_last} is not L/U.
\end{example}

Because the mere $\forall \Diamond$-emptiness (emptiness of the valuations set for which a \LongVersion{given }location is always eventually reachable) is undecidable for L/U-PTAs~\cite{JLR15}\LongVersion{ and so is the nested TCTL fragment~\cite{ALR18FORMATS}}, we restrict ourselves to the reachability fragment of \NFExtHyperPTCTL{}, \LongVersion{\ie{} }where temporal operators are only~$\CTLEF{}$\LongVersion{ (CTL for reachability)}.
Let \NFExistsExtReachHyperPTCTL{} denote this fragment.

\begin{theorem}[decidability of non-parametric \NFExistsExtReachHyperPTCTL{} for L/U-PTAs]\label{theorem:MC:LU}
	Model checking L/U-PTAs against a non-parametric \NFExistsExtReachHyperPTCTL{} formula is \PSPACE{}-complete.
	The synthesis is, however, intractable.
\end{theorem}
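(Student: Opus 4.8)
The plan is to handle the three assertions in turn: \PSPACE{}-hardness, membership in \PSPACE{}, and intractability of synthesis.

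\emph{Hardness.} I would reduce from location reachability in parameter-free timed automata, which is \PSPACE{}-complete~\cite{AD94}: a TA is in particular a parameter-free L/U-PTA, and reachability of an $\action$-labelled location is captured by the single-path, parameter-free \NFExistsExtReachHyperPTCTL{} formula $\HEDiaFml[\geq 0]{\vrun}{\action}$ (with no extended predicate), which holds for some---equivalently, for every---valuation iff that location is reachable.

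\emph{Membership.} I would apply the reduction of \cref{section:reduction} (\cref{algorithm:reduction}) restricted to the $\CTLEF{}$ fragment: by \cref{theorem:correctness_reduction,theorem:correctness_observers} the problem reduces to reachability model checking of the product $\A^n \productOp \observerof{\reduce^n(\fullFml)}$ against the PTCTL formula $\rmextfrom{\reduce^n(\fullFml)}$, a Boolean combination of $\CTLEF{}$-formulas over atomic propositions built monotonically in the reachability fragment. The crucial step is to observe that this product remains an L/U-PTA. First, $\A^n = \A \composeOp \dots \composeOp \A$ uses one shared copy of~$\Param$, and parallel composition only forms conjunctions of guards already occurring in~$\A$, so each parameter keeps its lower-/upper-bound polarity (\cref{def:LUPTA}). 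Second, as $\fullFml$ is non-parametric, every observer built in \cref{ss:observers} is parameter-free (as already noticed in the proof of \cref{theorem:MC:BDR08}): the \LASTExpr{}-observers then carry only non-parametric diagonal guards and constant invariants, and the \COUNTNNExpr{}/\COUNTModExpr{}-observers are untimed. Non-parametric diagonal constraints are eliminable at a cost absorbed by the region construction, hence stay within \PSPACE{}. I would then invoke the monotonicity of L/U-PTAs~\cite{HRSV02}: enlarging upper-bound parameters and shrinking lower-bound ones only enlarges the reachable state set. So, for the ``most permissive'' instance---every lower-bound parameter set to~$0$ and every guard/invariant conjunct $\clock \compOp \param$ with an upper-bound~$\param$ removed---each $\CTLEF{}$-subformula holds for some valuation iff it holds in that instance, which is now a parameter-free TA; since the outer combination is monotone, that single instance witnesses the whole formula. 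Reachability in a parameter-free TA being in \PSPACE{}~\cite{AD94}, model checking is in \PSPACE{}, hence \PSPACE{}-complete.

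\emph{Intractability of synthesis.} Finally, I would argue that while non-emptiness of the valuation set is decidable by the above (one extreme valuation suffices), the exact set cannot be computed in general: by the same monotonicity it need not be representable as a finite union of polyhedra, and reachability synthesis for L/U-PTAs is known to be intractable~\cite{JLR15}. The main obstacle throughout is precisely this mismatch---deciding non-emptiness collapses to a single extreme valuation, but no such collapse is available for the set itself, and the failure of monotonicity under negation/liveness is exactly why the decidability result must be confined to the reachability fragment (recall that $\forall\Diamond$-emptiness is already undecidable for L/U-PTAs~\cite{JLR15}).
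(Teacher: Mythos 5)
Your proposal follows essentially the same route as the paper's proof: reduce, via self-composition and the (here non-parametric) observers, to a reachability question on a PTA that is still L/U, invoke the \PSPACE{}-completeness of reachability-emptiness for L/U-PTAs~\cite{HRSV02} for membership and hardness, and obtain intractability of synthesis from the non-computability of reachability-synthesis for L/U-PTAs~\cite{JLR15}. The differences are matters of detail rather than of approach---you re-derive the monotonicity/extreme-valuation argument of~\cite{HRSV02} instead of citing it, and you explicitly dispose of the non-parametric diagonal guards in the \LASTExpr{}-observers, which the paper leaves implicit---though be aware that your appeal to monotonicity of the ``outer combination'' is not actually licensed by the fragment's syntax (top-level negation is permitted), a point the paper's own proof likewise glosses over by reducing to ``a reachability property'' in the singular.
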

\begin{proof}
	\changed{We reduce to reachability for L/U-PTAs.}
	Our general construction (\cref{section:reduction}) reduces model checking a PTA against a \NFExtHyperPTCTL{} formula to model checking a network of (L/U-)PTAs and a set of non-parametric observers against a TCTL formula.
	Here, we consider the reachability fragment only, leading to a reachability property.
	Reachability-emptiness is \PSPACE{}-complete for L/U-PTAs~\cite{HRSV02}, and therefore model checking L/U-PTAs against a non-parametric \NFExistsExtReachHyperPTCTL{} formula is \PSPACE{}-complete (the hardness \LongVersion{argument }following immediately).

	The non-parametric \NFExistsExtReachHyperPTCTL{} formula ``$\HEDiaFml[]{\vrun}{\action}$'' is equivalent to the (T)CTL formula $\CTLEF{} \action$ denoting reachability.
	Reachability-synthesis is known to be intractable\footnote{%
      More precisely, it cannot be effectively computed: the result cannot be represented in general using a finite union of polyhedra.
	} for L/U-PTAs~\cite{JLR15}, and therefore the synthesis for L/U-PTAs against a non-parametric \NFExistsExtReachHyperPTCTL{} is intractable.
\end{proof}

By using as proof argument a result from~\cite{ALR18FORMATS} showing that nest-free TCTL emptiness is decidable for L/U-PTAs with integer-valued parameters and \emph{without invariants}, we can show\LongVersion{ the following result}:

\begin{theorem}[decidability of non-parametric \NFExtHyperPTCTL{} for L/U-PTAs]\label{theorem:MC:LUflat}
	Model checking L/U-PTAs with integer-valued parameters without invariants against a non-parametric \NFExtHyperPTCTL{} formula is \PSPACE{}-complete.
	The synthesis is however intractable.
\end{theorem}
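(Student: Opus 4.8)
The plan is to compose the reduction of \cref{section:reduction} with the decidability result for nest-free TCTL over invariant-free L/U-PTAs from~\cite{ALR18FORMATS}. Fix an L/U-PTA $\A$ with integer-valued parameters and without invariants, and a non-parametric \NFExtHyperPTCTL{} formula $\fullFml$ using $n$ path variables. Running the reduction (\cref{algorithm:reduction}, together with \cref{theorem:correctness_reduction,theorem:correctness_observers}) produces the self-composition $\A^n$, the observer $\observerof{\reduce^n(\fullFml)} = \observerof{\mathit{ext}_1} \productOp \cdots \productOp \observerof{\mathit{ext}_k}$ for the finitely many extended predicates $\mathit{ext}_i \in \extof{\reduce^n(\fullFml)}$, and a nest-free PTCTL formula $\rmextfrom{\reduce^n(\fullFml)}$; crucially, since $\fullFml$ is non-parametric, this last formula contains no parameter, i.e., it is a nest-free TCTL formula. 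Model checking $\A$ against $\fullFml$ is then equivalent to model checking $\A^n \productOp \observerof{\reduce^n(\fullFml)}$ against $\rmextfrom{\reduce^n(\fullFml)}$.

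The key step is to check that the product $\A^n \productOp \observerof{\reduce^n(\fullFml)}$ is still an L/U-PTA with integer-valued parameters and without invariants, so that~\cite{ALR18FORMATS} applies. For invariants: both the parallel composition and the synchronized product set $\invariant((\loc^1,\loc^2)) = \invariant^1(\loc^1) \land \invariant^2(\loc^2)$, so $\A$ being invariant-free propagates; the $\COUNT$-observers have $\invariant(\loc) = \top$ everywhere (\cref{definition:observer_count_nn,definition:observer_count_mod}), and the $\LAST$-observers (\cref{definition:observer_last}) carry only the \emph{constant} constraints $0 \compOp \lterm$ (or its negation) on initial locations, which, since $\fullFml$ is non-parametric and hence $\lterm$ is an integer, evaluate either to $\top$ (droppable) or to $\bot$ (the location being then removed). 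For the L/U partition: the $n$ copies in $\A^n$ all use the same parameters with the same polarities, and the observers use no parameter at all, so the separation into lower- and upper-bound parameters is preserved, and integrality of the parameter domain is untouched. Hence the product is an invariant-free L/U-PTA with integer parameters, and $\rmextfrom{\reduce^n(\fullFml)}$ is a nest-free TCTL formula.

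With that in place I would conclude. By~\cite{ALR18FORMATS}, nest-free TCTL emptiness is decidable for invariant-free L/U-PTAs with integer parameters, yielding decidability of our model-checking problem, and the construction there places it in \PSPACE{} (it reduces to reachability-style checks on a region graph of polynomial size). \PSPACE{}-hardness is inherited exactly as in \cref{theorem:MC:LU}: $\HEDiaFml[\geq 0]{\vrun}{\action}$ expresses plain reachability, whose emptiness is already \PSPACE{}-complete for L/U-PTAs~\cite{HRSV02}. For synthesis, the same formula $\HEDiaFml[\geq 0]{\vrun}{\action}$ turns reachability-synthesis into \NFExtHyperPTCTL{} synthesis; since reachability-synthesis is intractable for L/U-PTAs~\cite{JLR15} (its solution is not in general a finite union of polyhedra), so is ours.

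The main obstacle is the second step: one must inspect the observer constructions of \cref{ss:observers} and confirm that, under non-parametricity of $\fullFml$, they introduce neither parametric guards/invariants nor clock constraints that would mix lower- and upper-bound uses of a parameter — all of which rests on the formula's constants being plain integers — and, separately, make sure the fragment of TCTL in which $\rmextfrom{\reduce^n(\fullFml)}$ lands is exactly the nest-free fragment handled in~\cite{ALR18FORMATS}, and that the \PSPACE{} upper bound (not merely decidability) survives that reduction.
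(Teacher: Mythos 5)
Your proposal is correct and follows essentially the same route as the paper: reduce via self-composition and observers to nest-free TCTL model checking of an invariant-free L/U-PTA with integer parameters, then invoke the \PSPACE{}-completeness result of~\cite{ALR18FORMATS} for the upper and lower bounds, and argue intractability of synthesis from the reachability fragment. Your additional care in checking that the \LASTExpr{}-observer invariants of \cref{definition:observer_last} degenerate to constant (hence removable) constraints when the formula is non-parametric, and that the L/U polarity partition survives the product, fills in details the paper's proof leaves implicit; the only divergence is that you derive synthesis intractability from reachability-synthesis for L/U-PTAs~\cite{JLR15} (as in \cref{theorem:MC:LU}) where the paper cites \cite[Remark~2]{ALR18FORMATS}, which is an equally valid justification.
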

\LongVersion{%
	\begin{proof}
		In~\cite{ALR18FORMATS}, we showed that nest-free TCTL emptiness is \PSPACE{}-complete for L/U-PTAs with integer-valued parameters and \emph{without invariants}.
		Again, our general construction reduces model checking a PTA against a \NFExtHyperPTCTL{} formula to model checking a network of (L/U-)PTAs and a set of non-parametric observers against a TCTL formula, which is by definition nest-free.
		Therefore, model checking L/U-PTAs without invariants against a non-parametric \NFExtHyperPTCTL{} formula is \PSPACE{}-complete.

		The intractability of the synthesis comes from \cite[Remark~2]{ALR18FORMATS}.
	\end{proof}
}

\subsection{$(1, \arbitrarilyMany, 1)$-PTAs against non-parametric formula}\label{ss:1-arbitrarilyMany-1}

We use here a common notation $(n, \arbitrarilyMany, m)$ to denote $n$ parametric clocks,
arbitrarily many non-parametric clocks and $m$ parameters.
We finally show decidability in a restrictive setting, by reduction to the decidable setting of~\LongVersion{\cite{BO17,GH21}}\ShortVersion{\cite{GH21}}.

\begin{theorem}[Decidability \LongVersion{over discrete time with 1~clock}\ShortVersion{with 1 discrete clock}]\label{theorem:BO:BDR08}
	Model checking a $(1, \arbitrarilyMany, 1)$-PTA is decidable over discrete time against a non-parametric \NFExistsExtReachHyperPTCTL{} with (at most) two path quantifiers for each temporal level formula.
\end{theorem}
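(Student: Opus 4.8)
The plan is to follow the same template as the proof of \cref{theorem:MC:BDR08} and then land in the decidable setting of~\cite{BO17,GH21}. First, I would apply the general construction of \cref{section:reduction}: by \cref{theorem:correctness_reduction} and \cref{theorem:correctness_observers}, and following the recursion of \cref{algorithm:reduction}, model checking $\A$ against the given non-parametric \NFExistsExtReachHyperPTCTL{} formula~$\fullFml$ reduces, for each temporal subformula, to \PTCTL{} model checking (resp.\ synthesis) of a product PTA $\A^n \productOp \observerof{\fullFml'}$, where $\A^n$ is the $n$-fold self-composition of~$\A$ and $\observerof{\fullFml'}$ is the product of the observers of \cref{ss:observers} for the extended predicates. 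Because $\fullFml$ lies in the reachability fragment and is non-parametric, each such \PTCTL{} subformula has the shape $\CTLE \DiaOp{\compOp c}\, b$ with $c \in \setN$ and $b$ a Boolean combination of location-determined atomic propositions; adding a single fresh clock that is never reset (so that it records the time elapsed since the initial state) turns this into plain location reachability subject to that clock being $\compOp c$ at the target, i.e.\ a reachability-emptiness instance.

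Second, the crucial bookkeeping is to verify that this instance stays inside the class $(2, \arbitrarilyMany, 1)$. The hypothesis that each temporal-level formula carries at most two path quantifiers gives $n \leq 2$, so $\A^n$ has at most two parametric clocks and arbitrarily many non-parametric clocks; crucially, parallel self-composition keeps the single parameter, since the parallel composition takes $\Param^1 \cup \Param^2 = \Param$, so there is still exactly one parameter. Since $\fullFml$ is non-parametric, the linear terms~$\lterm$ occurring in the \LASTExpr{} predicates are integer constants, hence the \LASTExpr{} observers of \cref{definition:observer_last} contribute only non-parametric clocks (compared against constants through diagonal constraints, which can be removed at the price of further non-parametric clocks), while the \COUNTNNExpr{} and \COUNTModExpr{} observers of \cref{ss:observers} contribute neither clocks nor parameters; the auxiliary clock introduced for the time bound $\compOp c$ is non-parametric as well. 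Hence the product PTA is a $(2, \arbitrarilyMany, 1)$-PTA over discrete time, for which reachability-emptiness is decidable~\cite{BO17,GH21}, and this yields decidability of each temporal subformula's query.

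Finally, I would reassemble along \cref{algorithm:reduction}: the answer for~$\fullFml$ is obtained from the answers for its temporal subformulas via the Boolean cases of \cref{algorithm:reduction:negation,algorithm:reduction:disjunction}. The main obstacle I expect is precisely this last combination step together with the clock count: one has to make sure that every stage of the transformation keeps non-parametric clocks non-parametric (which is exactly why the ``non-parametric formula'' restriction is essential here, since otherwise the observers of \cref{ss:observers} would compare clocks to parameters and inflate the parametric-clock count beyond~$2$), and that the decidability of reachability for $(2, \arbitrarilyMany, 1)$-PTAs over discrete time can be used not merely as an emptiness oracle but to represent the set of witnessing parameter valuations effectively enough to be closed under complement and union; the latter follows from the effective constructions underlying~\cite{BO17,GH21} over the single integer parameter, and, for a purely positive Boolean combination at the top level, the bare emptiness result already suffices to decide model checking.
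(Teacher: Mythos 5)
Your proposal is correct and follows essentially the same route as the paper's proof: reduce via self-composition and observer construction to a reachability query on the product PTA, observe that two path quantifiers and one parametric clock yield a $(2,\arbitrarilyMany,1)$-PTA with the single shared parameter, and invoke the decidability of reachability-emptiness for that class over discrete time~\cite{GH21}. Your additional bookkeeping (the fresh clock for the time bound, the non-parametricity of the observer clocks, and the caveat about Boolean recombination at the top level) only makes explicit details the paper's proof leaves implicit.
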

\begin{proof}
	\changed{We reduce to reachability for $(2, \arbitrarilyMany{}, 1)$-PTAs.}
    Model checking non-parametric \NFExistsExtReachHyperPTCTL{} reduces to model checking a network of PTAs (including the observers necessary to encode the extended syntax of the formula) against a reachability property.
    Further, the model contains a single parametric clock, and the formula contains 2 paths quantifiers for each temporal level formula, leading the self-composed model to contain 2 parametric clocks.
    Since the (unique) parameter is shared between both\LongVersion{ composed} copies, then the resulting composition is a $(2, \arbitrarilyMany{}, 1)$-PTA.
    Reachability-emptiness is \EXPSPACE{}-complete for $(2, \arbitrarilyMany{}, 1)$-PTAs for $\Time = \setN$~\cite{GH21}.
\end{proof}

It remains open whether the \emph{synthesis} problem is tractable in this latter case.

This result is not tight in the number of clocks, in the sense that it remains open whether model checking a $(2, \arbitrarilyMany, 1)$-PTA against  \NFExistsExtReachHyperPTCTL{} with 2 path quantifiers per temporal level formula is decidable or not.
However, by allowing $\CTLE \CTLU$ instead of $\CTLEF$ in the formula, we can show undecidability.
The proof encodes a $(4, 0, 1)$-PTA (for which \CTLEF{}-emptiness is undecidable~\cite{BBLS15}) into two $(2, 0, 1)$-PTAs, the synchronization of which is enforced thanks to an $\CTLE \CTLU$-based \NFExistsHyperPTCTL{}.

\begin{theorem}[Undecidability over discrete time with 2~clocks]\label{theorem:BO:BDR08:undec:EU}
	Model checking a $(2, 0, 1)$-PTA is undecidable over discrete time against a non-parametric \NFExistsHyperPTCTL{} formula using only $\CTLE \CTLU$ and two path quantifiers.
\end{theorem}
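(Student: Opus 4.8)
The plan is to reduce from \CTLEF{}-emptiness of $(4, 0, 1)$-PTAs, which is undecidable over discrete time~\cite{BBLS15}. Fix such a PTA $\A$ with parametric clocks $\clock_1, \dots, \clock_4$, single parameter $\param$, initial location $\loc_0$, and target location $\loc_f$; we must decide whether there is $\pval$ with $\loc_f$ reachable in $\valuate{\A}{\pval}$. I would build a $(2, 0, 1)$-PTA $\A'$, over two clocks and the same parameter $\param$, that realises two superimposed \emph{roles}: at time~$0$, $\A'$ nondeterministically commits to a role $r \in \{1, 2\}$ (technically via a single initial location with an invariant forcing a time-zero move), after which its two clocks play the part of $\clock_1, \clock_2$ if $r = 1$ and of $\clock_3, \clock_4$ if $r = 2$. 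Past the commitment, the control graph of $\A'$ is a copy of that of $\A$, except that on each edge the role-$r$ copy checks and performs only the conjuncts of the guard / invariant and the resets that concern the two clocks assigned to $r$, simply discarding the conjuncts and resets on the other two clocks. The location labelling of $\A'$ exposes, as atomic propositions: the role; the last $\A$-edge taken (a dummy symbol before the first edge); a one-bit flag that toggles on every edge; and whether the current $\A$-location is $\loc_f$, which I call $\styleact{Goal}$. This blows the number of locations up only by a constant factor, so $\A'$ is still a $(2, 0, 1)$-PTA (both its clocks are compared to $\param$ since all four clocks of $\A$ are).

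Next I would write the reduction formula. Let $\mathrm{Sync}$ be a Boolean-level formula (built only from the propositions above, indexed by $\vrun_1$ and $\vrun_2$) asserting that $\vrun_1$ is in role~$1$, $\vrun_2$ in role~$2$, and that $\vrun_1$ and $\vrun_2$ carry the same (last-edge, toggle-flag) pair. The target is the non-parametric \NFExistsHyperPTCTL{} formula $\fml \equiv \HEUntilFml[\geq 0]{\vrun_1, \vrun_2}{\mathrm{Sync}}{(\mathrm{Sync} \land \styleact{Goal}_{\vrun_1})}$, which uses a single $\exists\CTLU$ operator with no timing bound, exactly two path quantifiers, and no parameter. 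I would then establish the equivalence: there is $\pval$ with $\A' \models_{\pval} \fml$ iff there is $\pval$ with $\loc_f$ reachable in $\valuate{\A}{\pval}$. The forward direction is straightforward: a run of $\valuate{\A}{\pval}$ reaching $\loc_f$ (extended to an infinite path via a self-loop at $\loc_f$, w.l.o.g.) lifts to two paths of $\valuate{\A'}{\pval}$ --- its role-$1$ and role-$2$ simulations --- which agree on the timed sequence of edges (and hence keep $\mathrm{Sync}$ true continuously) and both reach a $\styleact{Goal}$-location after the total duration of the run. For the converse, the witnessing paths $\vrun_1, \vrun_2$ must, because $\mathrm{Sync}$ holds continuously up to the duration $d$ where $\mathrm{Sync} \land \styleact{Goal}_{\vrun_1}$ is attained, take exactly the same timed sequence $(d_0, \edge_0), \dots, (d_{k-1}, \edge_{k-1})$ of $\A$-edges up to reaching $\loc_f$; then role~$1$ has validated all $\clock_1, \clock_2$-conjuncts of the guards and invariants along that sequence and role~$2$ all $\clock_3, \clock_4$-conjuncts, while the clock values of each role coincide with the values of the corresponding $\A$-clocks, so the sequence is a genuine accepting run of $\valuate{\A}{\pval}$.

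The main obstacle is precisely the soundness of this converse step, \ie{} arguing that continuous satisfaction of $\mathrm{Sync}$ really forces the two copies to realise one and the same timed run with no residual slack. Two cases need care. First, self-loops of $\A$: a ``last-edge'' proposition alone would not distinguish taking the same loop once from taking it twice at slightly different times --- this is exactly why $\mathrm{Sync}$ also compares the toggling flag, which changes on every edge. Second, several $\A$-edges taken at the same time-stamp (legal in both continuous and discrete time): here I would exploit that the semantics of $\CTLU$ quantifies over \emph{all} intermediate suffixes, and that path assignments carry a total preorder on path positions, so any ``extra'' edge taken by one path, or any pair of distinct edges fired simultaneously, is exposed by some suffix at which $\mathrm{Sync}$ fails. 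The remaining ingredients --- the time-zero role commitment, restricting guards/invariants to a subset of clocks, matching the clock values of a role to the simulated $\A$-clocks, and the infinite-path extension --- are bookkeeping. Everything stays within $\Time = \setN$, since both the construction and the source undecidability are discrete-time, which establishes the statement.
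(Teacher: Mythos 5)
Your proposal is correct and follows essentially the same route as the paper's proof: reduce from reachability-emptiness of a $(4,0,1)$-PTA, split it into two $(2,0,1)$-PTA "halves" selected by a nondeterministic time-zero choice, and use a single two-quantifier $\exists\CTLU$ whose left operand forces the two witnessing paths to realise one and the same timed run, so that together they validate all four clocks' constraints. The only substantive difference is the synchronization labelling --- the paper matches unique \emph{location} names $(\loc_i)_{\vrun_1} = (\loc_i')_{\vrun_2}$, whereas you match a (last-edge, parity-flag) pair --- and your variant is in fact the more robust one, since a purely location-based check does not by itself detect desynchronised self-loops (an issue the paper's sketch leaves implicit, and which would otherwise need a preprocessing step eliminating self-loops).
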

	\begin{proof}[Proof (sketch)]
		We reduce from the reachability for $(3, 0, 1)$-PTAs~\cite{BBLS15}.
		Let $\A$ be a $(4, 0, 1)$-PTA with clocks $t, x, y, z$.
		Let us ``split'' it into two $(2, 0, 1)$-PTAs with the same structure (same locations and edges), such that the first (resp.\ second) PTA only contains clock constraints containing $t$ and~$x$ (resp.\ $y$ and~$z$).
		Add to each location of the first PTA a unique location label with the same name as the location, \ie{} $\Label(\loc_i) = \{ \loc_i \}$, and to the second a primed label \ie{} $\Label(\loc_i) = \{ \loc_i' \}$.
		Fix $\loc$ a target location.
		Then let $\A'$ be the PTA union of these two PTAs, \ie{} starting with an initial non-deterministic choice in 0-time, and then ``choosing'' between either components (we assume that $y$ and~$z$ are renamed into $t$ and~$x$).
		Reaching $\loc$ in $\A$ is equivalent to checking the following \NFExistsHyperPTCTL{} formula in~$\A'$:
		\ShortVersion{$\HEUntilFml{\vrun_1, \vrun_2}{\big(\bigwedge_i (\loc_i)_{\vrun_1} = (\loc_i')_{\vrun_2} \big)}{}{( \loc_{\vrun_1} \land \loc'_{\vrun_2} )}$.}
		\LongVersion{\[ \HEUntilFml{\vrun_1, \vrun_2}{\big(\bigwedge_i (\loc_i)_{\vrun_1} = (\loc_i')_{\vrun_2} \big)}{}{( \loc_{\vrun_1} \land \loc'_{\vrun_2} )} \]}

		Since \CTLEF{}-emptiness is undecidable for $(3, 0, 1)$-PTAs~\cite{BBLS15}\LongVersion{ (and therefore for $(4, 0, 1)$-PTAs)}, then model checking this formula against $\A'$ is undecidable.
		$\A'$ contains only two clocks, and the formula is made of a single $\CTLE \CTLU$, with only two path quantifiers.
	\end{proof}
\section{Experiments}\label{section:experiments}

We experimentally evaluated the efficiency of our model checking semi-algorithm using our prototype tool~\ourTool{}\footnote{\ourTool{} is publicly available at \url{https://github.com/MasWag/HyPTCTLChecker} \changed{in an open-source manner} with all the data to reproduce the experiments.}.
Given a PTA and a \NFExtHyperPTCTL{} formula, \ourTool{} translates them into a PTA and a PTCTL formula via the reduction presented in \cref{section:reduction}, and outputs them as the format supported by \imitator{}~\cite{Andre21}, a \changed{verification} tool for~PTAs.
Then, we execute \imitator{} to solve the synthesis problem.
\changed{\ourTool{} supports all the \NFExtHyperPTCTL{} formulas except for the following operators only because \imitator{} does not support their non-hyper versions:
$\HEBoxFml[\compOp \paramOrInt]{\vrun_1, \vrun_2, \dots, \vrun_n}{\fml}$,
$\HEReleaseFml[\compOp \paramOrInt]{\vrun_1, \vrun_2, \dots, \vrun_n}{\fml_1}{\fml_2}$, and
$\HEWeakUntilFml[\compOp \paramOrInt]{\vrun_1, \vrun_2, \dots, \vrun_n}{\fml_1}{\fml_2}$.}
We pose the following research questions.
\begin{itemize}
 \item[RQ1] Is \ourTool{} efficient for practical properties?
 \item[RQ2] How many path variables can \ourTool{} handle at most?
\end{itemize}

We conducted all the experiments on an AWS EC2 m7i.4xlarge instance (with 16vCPU and 64 GiB RAM) that runs Ubuntu 22.04 LTS.\@
We used \imitator{} \href{https://github.com/imitator-model-checker/imitator/releases/tag/v3.4.0-alpha}{3.4-alpha}.
We set 6~hours as the timeout.

\subsection{Benchmarks}\label{section:benchmarks}
\begin{table}[tbp]
 \centering
 \caption{Summary of the benchmarks and the runtime of \imitator{}.
	Columns $|\Loc|$ and $|\Clock|$ show the number of locations and clocks in the PTAs.
	Columns $|\Param|_\fullFml$ and $|\Param|_\A$ show the number of parameters used in the properties and the PTAs\LongVersion{, respectively}.
	Column $|\PathVar|$ shows the number of the quantified path variables in~$\fullFml$.
	``\TIMEOUT{}'' denotes no termination within 6 hours.}%
 \label{table:summary_benchmark_result}
 \ShortVersion{\footnotesize}
 \setlength{\tabcolsep}{2pt} %
 \begin{tabular}{@{}llrrrrr|r}
  \toprule
  Prop.\ ($\fullFml$) & PTA ($\A$) & $|\Loc|$ & $|\Clock|$ & $|\Param|_\fullFml$ & $|\Param|_\A$ & $|\PathVar|$ & Time [sec.]\\
  \midrule
  \ClockDeviation{} & \Running{} & 2 & 1 & 1 & 1 & 2 & \changed{4.116}\\
  \Opacity{} & \Coffee{} & 6 & 2 & 0 & 3 & 2 & 0.723\\
  \Opacity{} & \STAC{}1:n & 8 & 2 & 0 & 2 & 2 & 0.178\\
  \Opacity{} & \STAC{}4:n & 9 & 2 & 0 & 5 & 2 & $< 0.001$\\
  \Unfairness{} & \FIFO{} & 63 & 2 & 0 & 4 & 2 & 71.955\\
  \Unfairness{} & \Priority{} & 72 & 2 & 0 & 4 & 2 & 6.855\\
  \Unfairness{} & \RoundRobin{} & 81 & 3 & 0 & 4 & 2 & 12550.979\\
  \RobustObservationalNonDeterminism{} & \Coffee{} & 6 & 2 & 1 & 3 & 2 & 3.182\\
  \RobustObservationalNonDeterminism{} & \WFAS[1]{0} & 24 & 4 & 1 & 0 & 2 & 1.665\\
  \RobustObservationalNonDeterminism{} & \WFAS[2]{0} & 24 & 4 & 1 & 0 & 2 & 2.570\\
  \RobustObservationalNonDeterminism{} & \WFAS{1} & 24 & 4 & 1 & 1 & 2 & 67.644\\
  \RobustObservationalNonDeterminism{} & \WFAS{2} & 24 & 4 & 1 & 2 & 2 & 1332.310\\
  \RobustObservationalNonDeterminism{} & \ATM{} & 7 & 2 & \changed{1} & 0 & 2 & \TIMEOUT{}\\
  \RobustObservationalNonDeterminism{} & $\ATM{}'$ & 5 & 2 & 1 & 0 & 2 & 4179.197\\
  \midrule
  \EF{2} & \Coffee{} & 6 & 2 & 1 & 0 & 2 & 0.034\\
  \EF{3} & \Coffee{} & 6 & 2 & 1 & 0 & 3 & 159.541\\
  \EF{4} & \Coffee{} & 6 & 2 & 1 & 0 & 4 & \TIMEOUT{}\\
  \bottomrule
 \end{tabular}
\end{table}
\cref{table:summary_benchmark_result} summarizes the benchmarks we used and the experimental results.
The translation time is negligible (typically $< 0.05$ sec) and is not integrated in \cref{table:summary_benchmark_result}.
We used five classes of properties: \ClockDeviation{}, \Opacity{}, \Unfairness{}, \RobustObservationalNonDeterminism{}, and \EF{i}.
\ClockDeviation{}, \Opacity{}, \Unfairness{}, and \RobustObservationalNonDeterminism{} are the properties shown in \cref{example:skewed_clock,example:opacity,example:unfairness,example:observational-non-determinism}, respectively.
\EF{i} is an artificial property to evaluate the scalability of our semi-algorithm with respect to the number of path variables.
Concretely, \EF{i} is $\HEDiaFml[\ensuremath{[\styleparam{\param},\styleparam{\param}]}]{\vrun_1, \vrun_2,\dots,\vrun_i}{\bigwedge_{j\in \{1,2,\cdots,i-1\}} \COUNT(\styleact{a}_{\stylepathvar{\vrun_j}})\allowbreak}$ $- \linebreak\allowbreak\COUNT(\styleact{a}_{\stylepathvar{\vrun_{j+1}}}) = 1$.

\Running{} is the PTA in \cref{figure:system_running_example}.
\Coffee{} \changed{(a toy coffee machine)}, \STAC{}1:n and \STAC{}4:n \changed{(two Java programs without timing leaks, translated to PTAs)} are based on the PTAs in~\cite{ALMS22}.
\FIFO{}, \Priority{}, and \RoundRobin{}\ are our original PTAs modeling FIFO, Fixed-Priority, and Round-Robin schedulers, respectively.
\WFAS{i} is \changed{a wireless fire alarm system} taken from~\cite{BBLS15}, where $i$ shows the number of parameters.
\WFAS[1]{0} and \WFAS[2]{0} are instances of \WFAS{} with different parameter valuations.
\ATM{} is \changed{a simple PTA model of an ATM from~\cite[Fig.~1]{DCLXZL18}}, and $\ATM{}'$ is its variant without the branch ``Check''.
The PTAs taken from the literature are %
 modified to align with our \changed{encoding and evaluation, \eg{} by adding locations and edges to encode input and output propositions with labels on locations and by instantiating some parameters to evaluate the scalability}.
\subsection{RQ1: Performance on practical properties}\label{section:practicality}

In \cref{table:summary_benchmark_result}, we observe that for most of the benchmarks,
the runtime of \imitator{} is less than a few minutes.
Particularly, the runtime for \Opacity{} is always less than one second.
This aligns with the efficiency of opacity verification with a similar reduction in~\cite{ALMS22}.
For \Unfairness{} and \RobustObservationalNonDeterminism{}, the runtime largely depends on the complexity of the PTA.\@
For instance, \RoundRobin{} has more locations and clocks than \FIFO{} and \Priority{} for preemptive scheduling,
which blows up the result of the self-composition in \cref{section:self-composition} and increases the runtime of \imitator{}.
Similarly, having more parameters (in \WFAS{i}) or locations (in \ATM{}) increases the runtime.
Nevertheless, \ourTool{} can still handle benchmarks with parameters in properties or PTAs if the PTAs are of mild complexity.
Thus, we answer RQ1 as follows.

\rqanswer{RQ1}{\ourTool{} can efficiently handle practical properties for mild size of PTAs\changed{, \ie{} with roughly up to 4 clocks, and against formulas with up to 3 path variables}.}

\changed{We failed to verify $\ATM{}$ within 6 hours although it has smaller $|\Loc|$, $|\Clock|$, and $|\Param|_\A$ than \WFAS{2}.
It is difficult to discuss its detail, but
this can be partly due to the structure of the PTAs.
$\ATM{}$ has two loops whereas $\ATM{}'$ has only one of them. %
After the self-composition, they have four and two loops, respectively.
It is possible that this caused a combinatorial explosion of the search space.}

\subsection{RQ2: Scalability to the number of path variables}\label{section:scalability}

In \cref{table:summary_benchmark_result}, we observe that \ourTool{} can handle \EF{3} but not \EF{4}.
This is because the self-composition in \cref{section:self-composition} exponentially blows up the PTAs with respect to the number of path variables.
Given the simplicity of \Coffee{} and \EF{3}, we answer RQ2 as follows.

\rqanswer{RQ2}{\ourTool{} can handle at most three path variables in a reasonable time.}

Although the above answer might seem quite restrictive,
we remark that three path variables are likely enough to capture most of the interesting properties.
For example, all the HyperLTL or HyperCTL* formulas in the case studies in~\cite{FRS15,BF23} are with at most two path variables
(potentially with nested temporal operators, which is out of the scope of our semi-algorithm).

\section{Conclusion and perspectives}\label{section:conclusion}

We introduced \HyperPTCTL{} as the first extension to hyperlogics of parametric timed CTL, enabling reasoning simultaneously on different execution traces.
After giving a syntax and semantics for the general logics, we restricted ourselves to a nest-free fragment, extended with $\COUNT$ and $\LAST$ constructs, allowing for reasoning about the number of actions and the duration from their last occurrence, respectively.
To our knowledge, this logic is the first of its kind to reason about parametric timed hyperproperties.
Model checking this logic \NFExtHyperPTCTL{} reduces to model checking PTCTL.
While this is, in general, undecidable, we exhibited \LongVersion{a number of }decidable subclasses.
In addition, our implementation within \ourTool{} (built on top of \LongVersion{the PTA model checker }\imitator{}) goes beyond the decidable fragment, and showed good results, both for the non-parametric case and the parametric case.

Future works include exhibiting further decidable subclasses, perhaps forbidding equality (``$=\param$'') in the formula, as in~\cite{BR07}, or with restrictions in the formula, such as in~\cite{BlT09}.
Some \LongVersion{of our }undecidability results \LongVersion{in \cref{section:decidability} }are not tight, \ie{} the exact border between decidability and undecidability remains blurred.
Devising and implementing a semi-algorithm for the full \HyperPTCTL{}, beyond the nest-free fragment, is also on our agenda.
A comparison of the expressive power between \ExtHyperPTCTL{} and other hyperlogics is also a possible future direction.

Finally, optimizing \imitator{} in order to address \HyperPTCTL{} will be an interesting challenge.
The blowup due to the self-composition may be addressed using partial order (\eg{}~\cite{ACR16}) or symmetry reductions.

\ifdefined\VersionLong
	\newcommand{\CCIS}{Communications in Computer and Information Science}
	\newcommand{\ENTCS}{Electronic Notes in Theoretical Computer Science}
	\newcommand{\FAC}{Formal Aspects of Computing}
	\newcommand{\FundInf}{Fundamenta Informaticae}
	\newcommand{\FMSD}{Formal Methods in System Design}
	\newcommand{\IJFCS}{International Journal of Foundations of Computer Science}
	\newcommand{\IJSSE}{International Journal of Secure Software Engineering}
	\newcommand{\IPL}{Information Processing Letters}
	\newcommand{\JAIR}{Journal of Artificial Intelligence Research}
	\newcommand{\JLAP}{Journal of Logic and Algebraic Programming}
	\newcommand{\JLAMP}{Journal of Logical and Algebraic Methods in Programming} %
	\newcommand{\JLC}{Journal of Logic and Computation}
	\newcommand{\LMCS}{Logical Methods in Computer Science}
	\newcommand{\LNCS}{Lecture Notes in Computer Science}
	\newcommand{\RESS}{Reliability Engineering \& System Safety}
	\newcommand{\RTS}{Real-Time Systems}
	\newcommand{\SCP}{Science of Computer Programming}
	\newcommand{\SOSYM}{Software and Systems Modeling} %
	\newcommand{\STTT}{International Journal on Software Tools for Technology Transfer}
	\newcommand{\TCS}{Theoretical Computer Science}
	\newcommand{\TOPLAS}{{ACM} Transactions on Programming Languages and Systems} %
	\newcommand{\ToPNoC}{Transactions on {P}etri Nets and Other Models of Concurrency}
	\newcommand{\TOSEM}{{ACM} Transactions on Software Engineering and Methodology} %
	\newcommand{\TSE}{{IEEE} Transactions on Software Engineering}

	\renewcommand*{\bibfont}{\small}
	\printbibliography[title={References}]
\else
	\bibliographystyle{IEEEtran}
	\newcommand{\CCIS}{CCIS}
	\newcommand{\ENTCS}{ENTCS}
	\newcommand{\FAC}{FAC}
	\newcommand{\FundInf}{FI}
	\newcommand{\FMSD}{FMSD}
	\newcommand{\IJFCS}{IJFCS}
	\newcommand{\IJSSE}{IJSSE}
	\newcommand{\IPL}{IPL}
	\newcommand{\JAIR}{JAIR}
	\newcommand{\JLAP}{JLAP}
	\newcommand{\JLAMP}{JLAMP}
	\newcommand{\JLC}{JLC}
	\newcommand{\LMCS}{LMCS}
	\newcommand{\LNCS}{LNCS}
	\newcommand{\RESS}{RESS}
	\newcommand{\RTS}{RTS}
	\newcommand{\SCP}{SCP}
	\newcommand{\SOSYM}{{SoSyM}}
	\newcommand{\STTT}{STTT}
	\newcommand{\TCS}{TCS}
	\newcommand{\TOPLAS}{ToPLAS}
	\newcommand{\ToPNoC}{ToPNOC}
	\newcommand{\TOSEM}{ToSEM}
	\newcommand{\TSE}{TSE}
  \bibliography{HyperPTCTL}
\fi

\LongVersion{%

\appendix

\section{Omitted proofs}
\subsection{Full proof of \cref{theorem:correctness_reduction}}\label{section:proof:correctness_reduction}

\LongVersion{
Before proving \cref{theorem:correctness_reduction},
we show a property on path assignments and lemmas between path assignments of the original TAs and the paths of self-composed TAs.
In the remaining part of this section,
we assume that any PTA does not have a self-loop with no guard or reset.
Note that removing such edges does not change the semantics.
For any edge $\edge$ of $\A^n$ and $i \in \{1,2,\dots,n\}$, we let $\project{\edge}{i}$ be the projection of $\edge$ to the $i$-th component.

\begin{proposition}
 \label{proposition:path_assignment_equal}
 Let $\A$ be a PTA, $\pval$ be a parameter valuation, and
 $(\runs, \pathOrder)$ be a path assignment of $\valuate{\A}{\pval}$ satisfying
 $\domain(\runs) = \{\vrun_1,\vrun_2,\dots,\vrun_n\}$ and
 $\runs(\vrun) \in \Traces(\valuate{\A}{\pval})$
 for any $\vrun \in \domain(\runs)$.
 For any $\vrun, \vrun' \in \domain(\runs)$ and $i, j \in \setN$,
 if we have $(\vrun, i) \pathOrder (\vrun', j)$ and $(\vrun', j) \pathOrder (\vrun, i)$,
 $\sum_{k = 0}^{i} d^{\vrun}_k = \sum_{k = 0}^{j} d^{\vrun'}_k$ holds, where
 $d^{\vrun}_k$ and $d^{\vrun}_k$ are the $k$-th delay in $\runs(\vrun)$ and $\runs(\vrun')$, respectively.
\end{proposition}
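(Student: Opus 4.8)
The plan is to invoke directly the defining conditions of a path assignment recalled in \cref{definition:path_assignment}. Recall that the second clause there states that whenever $(\vrun, i) \pathOrder (\vrun', j)$ holds, the corresponding partial sums of delays satisfy $\sum_{k = 0}^{i} d^{\vrun}_k \leq \sum_{k = 0}^{j} d^{\vrun'}_k$, where $d^{\vrun}_k$ and $d^{\vrun'}_k$ denote the $k$-th delays in $\runs(\vrun)$ and $\runs(\vrun')$. Since $\runs(\vrun), \runs(\vrun') \in \Traces(\valuate{\A}{\pval})$ and $\vrun, \vrun' \in \domain(\runs)$, these delays are all well-defined, so there is nothing to check about the existence of the objects involved.

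First I would apply that clause to the hypothesis $(\vrun, i) \pathOrder (\vrun', j)$, obtaining $\sum_{k = 0}^{i} d^{\vrun}_k \leq \sum_{k = 0}^{j} d^{\vrun'}_k$. Then I would apply the very same clause to the other hypothesis $(\vrun', j) \pathOrder (\vrun, i)$, now with the roles of the two pairs swapped, obtaining $\sum_{k = 0}^{j} d^{\vrun'}_k \leq \sum_{k = 0}^{i} d^{\vrun}_k$. Combining the two inequalities gives the desired equality $\sum_{k = 0}^{i} d^{\vrun}_k = \sum_{k = 0}^{j} d^{\vrun'}_k$, which concludes the proof.

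There is essentially no serious obstacle here: the statement is an immediate consequence of the ``monotonicity'' clause built into the definition of $\pathOrder$, exploited in both directions. The only point requiring a little care is to keep the indexing of the partial sums consistent when exchanging the two arguments $(\vrun, i)$ and $(\vrun', j)$, so that the two applications of the clause genuinely produce the two opposite inequalities rather than the same one twice.
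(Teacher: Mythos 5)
Your proof is correct and follows exactly the paper's argument: the paper also derives the equality immediately from the second requirement on $\pathOrder$ in \cref{definition:path_assignment}, applied in both directions. You have merely spelled out the two opposite inequalities explicitly, which the paper leaves implicit.
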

\begin{proof}
 It is
 immediate from the second requirement on $\pathOrder$ in \cref{definition:path_assignment}.
\end{proof}
\begin{lemma}%
 \label{lemma:composition_lemma}
 Let $\A$ be a PTA, $\pval$ be a parameter valuation, and $n \in \setNpos$.
 For any path assignment $(\runs, \pathOrder)$ of $\valuate{\A}{\pval}$ satisfying
 $\domain(\runs) = \{\vrun_1,\vrun_2,\dots,\vrun_n\}$, if
 for any $\vrun \in \domain(\runs)$, $\runs(\vrun) \in \Traces(\valuate{\A}{\pval})$ holds,
 there is a path $\trace \in \Traces(\valuate{\A^n}{\pval})$ of $\valuate{\A^n}{\pval}$
 satisfying the following conditions, where
 $\pathOrderToN$ is the monotone surjection  from $\domain(\runs) \times \setN$ to $\setN$.
 \begin{itemize}
  \item For any $\vrun_k \in \domain(\runs)$, $\runs(\vrun_k) = \project{\trace}{k}$ holds.
  \item For any $i \in \setN$, the inverse image $\pathOrderToN^{-1}(\{i\})$ satisfies
        $(\vrun_k, j) \in \pathOrderToN^{-1}(\{i\}) \iff \project{\edge_i}{k} = \edge^k_j$,
        where $\edge_i$ is the $i$-th edge of $\trace$ and $\edge^k_j$ is the $j$-th edge of $\runs(\vrun_{k})$.
 \end{itemize}
\end{lemma}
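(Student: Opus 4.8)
The plan is to construct $\trace$ by \emph{interleaving} the $n$ paths $\runs(\vrun_1),\dots,\runs(\vrun_n)$ according to the total preorder $\pathOrder$: the discrete transitions lying in one $\pathOrder$-equivalence class are fired simultaneously as a single edge of $\A^n$, and the classes are played in increasing $\pathOrder$-order. First I would pin down the index bookkeeping. Since each $\runs(\vrun)\in\Traces(\valuate{\A}{\pval})$ is time-divergent and non-Zeno, and since $\pathOrder$ is monotone with respect to the absolute transition times (\cref{definition:path_assignment}) together with \cref{proposition:path_assignment_equal}, the quotient of $(\domain(\runs)\times\setN,\pathOrder)$ by $\pathOrder\cap\pathOrder^{-1}$ has order type $\omega$: every class has only finitely many classes below it (only finitely many transitions of each path occur before any fixed time) and there are infinitely many classes (e.g.\ $(\vrun_1,0)\pathOrder(\vrun_1,1)\pathOrder\cdots$ is a strict chain), so $\pathOrderToN$ is well defined. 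For $i\in\setN$ put $K_i=\{k\mid\exists j:\pathOrderToN(\vrun_k,j)=i\}$, nonempty by surjectivity of $\pathOrderToN$; for $k\in K_i$ there is a unique such $j$, written $j_k(i)$, since $j\mapsto\pathOrderToN(\vrun_k,j)$ is strictly increasing by the first clause of \cref{definition:path_assignment}. Moreover, if $i^k_1<i^k_2<\cdots$ enumerates $\{i\mid k\in K_i\}$ then $j_k(i^k_\ell)=\ell-1$, so the transitions of component $k$ will reappear in $\trace$ in their original order and without gaps.

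Next I would give the construction. For each class $i$ let $t_i$ be the common absolute firing time of the transitions in class $i$ (well defined by \cref{proposition:path_assignment_equal}), and set $t_{-1}=0$; by monotonicity $0=t_{-1}\le t_0\le t_1\le\cdots$, and $t_i\to\infty$ since the unbounded transition times of $\runs(\vrun_1)$ are among the $t_i$. Define $\trace=\state_0,(d_0,\edge_0),\state_1,\dots$ with $d_i=t_i-t_{i-1}\ge 0$; with $\edge_i$ the edge of $\A^n$ that takes, simultaneously and on exactly the components $k\in K_i$, the $j_k(i)$-th edge of $\runs(\vrun_k)$ (leaving the components outside $K_i$ untouched), which is a legal edge of $\A^n$ by the definition of $\composeOp$; and with the component-$k$ part of $\state_m$ equal to the position of $\runs(\vrun_k)$ at time $t_{m-1}$ (its initial position $\Init(\runs(\vrun_k))$ for $m=0$). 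Then $\state_0$ is a concrete state of $\A^n$, so $\trace\in\Traces(\valuate{\A^n}{\pval})$ provided it is a genuine path.

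It then remains to verify that $\trace$ is a path and that it projects correctly. For legality: when $\edge_i$ fires, the clocks of component $k$ in $\trace$ agree with those of $\runs(\vrun_k)$ just before its $j_k(i)$-th edge, since component-$k$ clocks are reset only by component-$k$ edges and the total elapsed time is $t_i$ in both; as $\runs(\vrun_k)$ is a valid path its component-$k$ guard holds, hence the conjunction does, while for $k\notin K_i$ component $k$ is unchanged; for the delay steps, on each open interval $(t_{i-1},t_i)$ no transition of any $\runs(\vrun_k)$ fires (it would belong to a class strictly between $i-1$ and $i$), so each component remains in a location whose invariant holds along $\runs(\vrun_k)$ there; and $\sum_{m\le i}d_m=t_i\to\infty$. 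For the projection, deleting the components other than $k$ turns every $\trace$-step with $k\notin K_i$ into pure elapsing that merges into the neighbouring delays, and by the first paragraph the surviving edges are exactly $\edge^k_0,\edge^k_1,\dots$ in order, firing at the times $t_{i^k_1},t_{i^k_2},\dots$, which coincide with $\sum_{m<1}d^{\vrun_k}_m,\sum_{m<2}d^{\vrun_k}_m,\dots$; together with the locations and clock valuations fixed by the construction this yields $\project{\trace}{k}=\runs(\vrun_k)$. The second bullet is then immediate, since $\edge_i$ was defined so that its projection onto component $k$ is the $j$-th edge of $\runs(\vrun_k)$ precisely when $(\vrun_k,j)\in\pathOrderToN^{-1}(\{i\})$.

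The conceptual content here is modest; the real work is the bookkeeping in the verification step — keeping the absolute times, the per-path transition indices, and the per-component clock valuations synchronized — and in particular making rigorous the claim that the projection absorbs the pure-delay steps, together with (if one does not take $\pathOrderToN$ as given) the order-type-$\omega$ argument of the first paragraph, which genuinely needs non-Zenoness and not merely time-divergence.
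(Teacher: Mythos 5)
Your proposal is correct and follows essentially the same route as the paper's proof: interleave the $n$ paths by playing the $\pathOrder$-equivalence classes in order, with $d_i = t_i - t_{i-1}$ and $\edge_i$ the composed edge firing exactly the components in class $i$ (the paper writes this out for $n=2$ and invokes \cref{proposition:path_assignment_equal} for the well-definedness of $d_i$, just as you do). Your version is somewhat more detailed — you additionally justify that the quotient order has type $\omega$ and verify guards, invariants, and the projection explicitly, whereas the paper asserts these from the construction — but the underlying argument is the same.
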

\begin{proof}
 For the sake of simplicity, we show the case of $n = 2$.
 The proof of the general case is by induction on $n$.
 Let
 $\runs(\vrun_1) = (\loc^1_0, \clockval^1_0), (d^1_0, \edge^1_0), (\loc^1_1, \clockval^1_1), \ldots$ and
 $\runs(\vrun_2) = (\loc^2_0, \clockval^2_0), (d^2_0, \edge^2_0), (\loc^2_1, \clockval^2_1), \ldots$.
 By using the inverse image $\pathOrderToN^{-1}$,
 we inductively define $\trace = (\loc_0, \clockval_0), (d_0, \edge_0), (\loc_1, \clockval_1), \ldots$ as follows:
 \begin{itemize}
  \item $\loc_0 = (\loc^1_0, \loc^2_0)$ and $\clockval_0$ is such that $\clockval_0(\clock) = \clockval^1_0(\clock)$ if $\clock$ is a clock of the first component, and otherwise, $\clockval_0(\clock) = \clockval^2_0(\clock)$.
  \item $d_i = \sum_{l = 0}^{j} d^k_l - \sum_{l = 0}^{i-1} d_l$, where $(\vrun_k, j) \in \pathOrderToN^{-1}(\{i\})$.
  \item $\edge_i$ is such that:
        \begin{itemize}
         \item if $\pathOrderToN^{-1}(\{i\}) = \{(\vrun_1, j)\}$, $\edge_i = ( (\loc^1, \loc^2), \guard, \resets, ({\loc^1}', \loc^2))$, with $\loc_i = (\loc^1, \loc^2)$ and $\edge^1_j = (\loc^1, \guard, \resets, {\loc^1}')$;
         \item if $\pathOrderToN^{-1}(\{i\}) = \{(\vrun_2, j)\}$, $\edge_i = ( (\loc^1, \loc^2), \guard, \resets, ({\loc^1}, {\loc^2}'))$, with $\loc_i = (\loc^1, \loc^2)$ and $\edge^2_j = (\loc^2, \guard, \resets, {\loc^2}')$;
         \item if $\pathOrderToN^{-1}(\{i\}) = \{(\vrun_1, j), (\vrun_2, k)\}$, $\edge_i = ((\loc^1, \loc^2), \guard^1 \land \guard^2, \resets^1\cup\resets^2, ({\loc^1}', {\loc^2}'))$, with $\loc_i = (\loc^1, \loc^2)$, $\edge^1_j = (\loc^1, \guard^1, \resets^1, {\loc^1}')$, and $\edge^2_j = (\loc^2, \guard^2, \resets^2, {\loc^2}')$.
        \end{itemize}
  \item For $\edge_{i-1} = (\loc, \guard, \resets, \loc')$, $\loc_i = \loc'$ and $\clockval_i = \reset{\clockval_{i-1} + d_{i-1}}{\resets}$.
 \end{itemize}
 Notice that $d_i$ is irrelevant to the choice of $(\vrun_k, j) \in \pathOrderToN^{-1}(\{i\})$ by \cref{proposition:path_assignment_equal}.
 From the definition of the edges of $\valuate{\A^2}{\pval}$,
 such $\trace$ is a path of $\valuate{\A^2}{\pval}$.
 Moreover, we have $\project{\trace}{1} = \runs(\vrun_1)$ and $\project{\trace}{2} = \runs(\vrun_2)$, and $\pathOrderToN^{-1}(\{i\})$ is such that $(\vrun_k, j) \in \pathOrderToN^{-1}(\{i\})$ if and only if we have $\project{\edge_i}{k} = \edge^k_j$.
\end{proof}
\begin{lemma}%
 \label{lemma:decomposition_lemma}
 Let $\A$ be a PTA, $\pval$ be a parameter valuation, and $n \in \setNpos$.
 For any path $\trace \in \Traces(\valuate{\A^n}{\pval})$ of $\valuate{\A^n}{\pval}$,
 there is a path assignment $(\runs, \pathOrder)$ of $\valuate{\A}{\pval}$ 
 satisfying the following conditions, where
 $\pathOrderToN$ is the monotone surjection from $\domain(\runs) \times \setN$ to $\setN$.
 \begin{itemize}
  \item $\domain(\runs) = \{\vrun_1,\vrun_2,\dots,\vrun_n\}$ holds.
  \item For any $\vrun_k \in \domain(\runs)$, $\runs(\vrun_k) = \project{\trace}{k}$ holds.
  \item For any $i \in \setN$, the inverse image $\pathOrderToN^{-1}(\{i\})$ satisfies
        $(\vrun_k, j) \in \pathOrderToN^{-1}(\{i\}) \iff \project{\edge_i}{k} = \edge^k_j$,
        where $\edge_i$ is the $i$-th edge of $\trace$ and $\edge^k_j$ is the $j$-th edge of $\runs(\vrun_{k})$.
 \end{itemize}
\end{lemma}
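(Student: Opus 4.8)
The plan is to run the construction of \cref{lemma:composition_lemma} in reverse: the path $\trace = (\loc_0,\clockval_0),(d_0,\edge_0),(\loc_1,\clockval_1),\dots$ of $\valuate{\A^n}{\pval}$ already contains all the data we need, and it only remains to read off a component path for each $\vrun_k$ together with a preorder recording how the components' edges are interleaved in $\trace$.

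First I would set $\runs(\vrun_k) := \project{\trace}{k}$ for $k \in \{1,\dots,n\}$, so that $\domain(\runs) = \{\vrun_1,\dots,\vrun_n\}$ and the second bullet holds by definition. That each $\project{\trace}{k}$ really is a path of $\valuate{\A}{\pval}$ is immediate from the definition of the parallel composition: every guard (resp.\ invariant) of $\A^n$ is a conjunction of the guards (resp.\ invariants) of the copies, so its $k$-th conjunct holds along $\project{\trace}{k}$; the clocks reset by an edge of $\A^n$ projected to copy $k$ are exactly those of $\A$ reset there; and time divergence is inherited because $\project{\cdot}{k}$ merges the delays of the discarded edges rather than dropping them. (If these facts are already recorded where $\project{\cdot}{i}$ is introduced, I would simply cite them.)

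Next I would build $\pathOrder$. By the standing assumption that no PTA has a guard- and reset-free self-loop, and by the shape of the edges of $\A^n$, every edge $\edge_i$ of $\trace$ moves at least one copy; this lets me define $\iota\colon \domain(\runs)\times\setN\to\setN$ by taking $\iota(\vrun_k,j)$ to be the index in $\trace$ of the $j$-th edge of $\trace$ (counting from $0$) that moves copy $k$. By construction $\iota(\vrun_k,\cdot)$ is strictly increasing, $\project{\edge_{\iota(\vrun_k,j)}}{k}$ is the $j$-th edge $\edge^k_j$ of $\runs(\vrun_k)$, and $\iota$ is surjective (no index of $\trace$ is skipped, since each $\edge_i$ moves some copy). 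I would then let $\pathOrder$ be the pullback of the order on $\setN$: $(\vrun_k,j)\pathOrder(\vrun_{k'},j')$ iff $\iota(\vrun_k,j)\le\iota(\vrun_{k'},j')$. This is a total preorder whose equivalence classes are exactly the fibres of $\iota$, so the monotone surjection $\pathOrderToN$ canonically attached to $\pathOrder$ is $\iota$ itself.

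It then remains to check the axioms of \cref{definition:path_assignment} and the third bullet. Strict monotonicity of $\iota(\vrun,\cdot)$ gives $(\vrun,i)\pathOrder(\vrun,j)$ and $(\vrun,j)\npathOrder(\vrun,i)$ whenever $i<j$. For the two delay conditions, the point is that projecting out copies does not move firing times, so $\sum_{m=0}^{i} d^{\vrun}_m = \sum_{m=0}^{\iota(\vrun,i)} d_m$, with $d_m$ the delays of $\trace$; non-negativity of the $d_m$ then turns $\iota(\vrun,i)\le\iota(\vrun',j)$ into $\sum_{m\le i}d^{\vrun}_m\le\sum_{m\le j}d^{\vrun'}_m$, while a strict inequality of these prefix sums of $\trace$ forces $\iota(\vrun,i)<\iota(\vrun',j)$ and hence $(\vrun,i)\pathOrder(\vrun',j)$. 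Finally, since $\pathOrderToN=\iota$, the fibre $\pathOrderToN^{-1}(\{i\})$ is $\{(\vrun_k,j)\mid \edge_i\text{ is the }j\text{-th edge of }\trace\text{ moving copy }k\}=\{(\vrun_k,j)\mid\project{\edge_i}{k}=\edge^k_j\}$, which is the third bullet. The one point I would be careful to spell out — rather than a genuine obstacle — is the identification $\pathOrderToN=\iota$, which relies on $\iota$ being surjective and is exactly where the self-loop-free assumption is used; everything else is the interleaving/timing bookkeeping of \cref{lemma:composition_lemma} read backwards.
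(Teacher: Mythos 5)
Your proposal is correct and follows essentially the same route as the paper's proof: both set $\runs(\vrun_k)=\project{\trace}{k}$, identify for each edge $\edge_i$ of $\trace$ the set of component edges it realizes (your $\iota$ is exactly the paper's $\pathOrderToN$, just defined as a forward map rather than via its fibres), pull back the order on $\setN$ to obtain $\pathOrder$, and verify the preorder and delay axioms from the fact that projection preserves firing times. The only cosmetic difference is that the paper restricts to $n=2$ and performs an explicit three-way case analysis on the shape of $\edge_i$, whereas you phrase the same construction uniformly for general $n$.
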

\begin{proof}
 For the sake of simplicity, we show the case of $n = 2$.
 The proof of the general case is by induction on $n$.
 Let $\trace = (\loc_0, \clockval_0), (d_0, \edge_0), (\loc_1, \clockval_1), \ldots$ and
 let $\runs$ be such that
 $\domain(\runs) = \{\vrun_1, \vrun_2\}$, and
 for each $k \in \{1,2\}$, $\runs(\vrun_k) = \project{\trace}{k}$.
 We let
 $\runs(\vrun_1) = (\loc^1_0, \clockval^1_0), (d^1_0, \edge^1_0), (\loc^1_1, \clockval^1_1), \ldots$ and
 $\runs(\vrun_2) = (\loc^2_0, \clockval^2_0), (d^2_0, \edge^2_0), (\loc^2_1, \clockval^2_1), \ldots$.
 By definition of $\A^2$ and because of $\runs(\vrun_1) = \project{\trace}{1}$ and $\runs(\vrun_2) = \project{\trace}{2}$,
 each $\edge_i$ in $\trace$ satisfies one of the following, where $\Edges$ is the set of edges in $\A$:
 \begin{itemize}
  \item $\edge_i = ( (\loc^1, \loc^2), \guard, \resets, ({\loc^1}', \loc^2))$, with $(\loc^1, \guard, \resets, {\loc^1}') = \edge^1_j \in \Edges$ for some $j \in \setN$;
  \item $\edge_i = ( (\loc^1, \loc^2), \guard, \resets, ({\loc^1}, {\loc^2}'))$, with $(\loc^2, \guard, \resets, {\loc^2}') = \edge^2_j \in \Edges$ for some $j \in \setN$;
  \item $\edge_i = ((\loc^1, \loc^2), \guard^1 \land \guard^2, \resets^1\cup\resets^2, ({\loc^1}', {\loc^2}'))$, with $(\loc^1, \guard^1, \resets^1, {\loc^1}') = \edge^1_j \in \Edges$ and $(\loc^2, \guard^2, \resets^2, {\loc^2}') = \edge^2_l \in \Edges$ for some $j,l \in \setN$.
 \end{itemize}
 We let $\pathOrderToN$ be such that
 $\pathOrderToN^{-1}(i) = \{(\vrun_1, j)\}$ for the first case,
 $\pathOrderToN^{-1}(i) = \{(\vrun_2, j)\}$ for the second case, and
 $\pathOrderToN^{-1}(i) = \{(\vrun_1, j), (\vrun_2, l)\}$ for the third case.
 Since for each $k \in \{1,2\}$, $\runs(\vrun_k) = \project{\trace}{k}$ holds,
 such $\pathOrderToN$ is a surjection from $\domain(\runs) \times \setN$ to $\setN$.
 We let $\pathOrder \subseteq (\domain(\runs) \times \setN) \times (\domain(\runs) \times \setN)$ be such that
 $(\vrun, i) \pathOrder (\vrun', j) \iff \pathOrderToN((\vrun, i)) \leq \pathOrderToN((\vrun', j))$.
 Notice that $\pathOrder$ is a total preorder and $\pathOrderToN$ is monotone.
 By definition of $\pathOrderToN$,
 for any $k, k' \in \{1,2\}$ and $i, j \in \setN$,
 $i < j$ implies $\pathOrderToN((\vrun_k, i)) < \pathOrderToN((\vrun_{k}, j))$,
 $\pathOrderToN((\vrun_k, i)) \leq \pathOrderToN((\vrun_{k'}, j))$ implies $\sum_{l = 0}^{i} d^{k}_l \leq \sum_{l = 0}^{j} d^{k'}_l$, and
 $\sum_{l = 0}^{i} d^{k}_l < \sum_{l = 0}^{j} d^{{k'}}_l$ implies $\pathOrderToN((\vrun_k, i)) \leq \pathOrderToN((\vrun_{k'}, j))$.
 By definition of $\pathOrder$,
 for any $k, k' \in \{1,2\}$ and $i, j \in \setN$,
 $i < j$ implies $(\vrun_k, i) \pathOrder (\vrun_k, j)$ and $(\vrun_k, j) \npathOrder (\vrun_k, i)$,
 $(\vrun_k, i) \pathOrder (\vrun_{k'}, j)$ implies $\sum_{l = 0}^{i} d^{k}_l \leq \sum_{l = 0}^{j} d^{k'}_l$, and
 $\sum_{l = 0}^{i} d^{k}_l < \sum_{l = 0}^{j} d^{{k'}}_l$ implies $(\vrun_k, i) \pathOrder (\vrun_{k'}, j)$.
 Overall, such $(\runs, \pathOrder)$ is a path assignment.
\end{proof}
\begin{lemma}%
 \label{lemma:suffix_lemma}
 Let $\A$ be a PTA, $\pval$ be a parameter valuation, and $n \in \setNpos$.
 Let $(\runs, \pathOrder)$ be a path assignment of $\valuate{\A}{\pval}$ satisfying
 $\domain(\runs) = \{\vrun_1,\vrun_2,\dots,\vrun_n\}$ and
 let $\trace \in \Traces(\valuate{\A^n}{\pval})$ be a path of $\valuate{\A^n}{\pval}$.
 Let $\edge_i$ be the $i$-th edge of $\trace$ and $\edge^k_j$ be the $j$-th edge of $\runs(\vrun_{k})$.
 If for any $\vrun_k \in \domain(\runs)$, we have $\runs(\vrun_k) = \project{\trace}{k}$ and
 for any $i \in \setN$, the inverse image $\pathOrderToN^{-1}(\{i\})$ satisfies $(\vrun, j) \in \pathOrderToN^{-1}(\{i\}) \iff \project{\edge_i}{k} = \edge^k_j$,
 we have the following.
 \begin{itemize}
  \item  For any $\trace'$ and $\trace''$ satisfying $\trace \suffixEqOp \trace'' \suffixOp \trace'$,
         there are $(\runs', \pathOrder[\runs'])$ and $(\runs'', \pathOrder[\runs''])$
         such that we have $(\runs, \pathOrder) \suffixEqOp (\runs'', \pathOrder'') \suffixOp (\runs', \pathOrder[\runs'])$ and
         for any $i \in \{1,2,\dots,n\}$,
         we have $\runs'(\vrun_i) = \project{\trace'}{i}$ and $\runs''(\vrun_i) = \project{\trace''}{i}$.
  \item  For any $(\runs', \pathOrder[\runs'])$ and $(\runs'', \pathOrder[\runs''])$
         satisfying $(\runs, \pathOrder) \suffixEqOp (\runs'', \pathOrder[\runs'']) \suffixOp (\runs', \pathOrder[\runs'])$,
         there are $\trace'$ and $\trace''$ such that 
         we have $\trace \suffixEqOp \trace'' \suffixOp \trace'$ and
         for any $i \in \{1,2,\dots,n\}$, we have
         $\runs'(\vrun_i) = \project{\trace'}{i}$ and
         $\runs''(\vrun_i) = \project{\trace''}{i}$.
 \end{itemize}
\end{lemma}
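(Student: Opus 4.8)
The plan is to prove both items by pushing the edge-indexing correspondence between $\trace$ and $(\runs, \pathOrder)$---the hypothesis of the lemma, which is exactly the data produced by \cref{lemma:composition_lemma} and \cref{lemma:decomposition_lemma}---through every suffix operation. Throughout I will write $\edge_i$ for the $i$-th edge of $\trace$ and $\edge^k_j$ for the $j$-th edge of $\runs(\vrun_k) = \project{\trace}{k}$, and use the hypothesis in the form $(\vrun_k, j) \in \pathOrderToN^{-1}(\{i\})$ iff $\edge^k_j = \project{\edge_i}{k}$; consequently the number of the first $i$ edges of $\trace$ that involve the $k$-th copy equals $|\pathOrderToN^{-1}(\{0,\dots,i-1\}) \cap (\{\vrun_k\} \times \setN)|$. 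A key elementary observation I will use repeatedly is that, since all $n$ copies in $\A^n = \A \composeOp \dots \composeOp \A$ run on a single global clock, every position of $\trace$ has a single duration, which coincides with the duration of its projection in each $\project{\trace}{k}$; together with \cref{proposition:path_assignment_equal} this makes the ``common delay $d$'' required by the definition of $\suffixEqOp$ (and $\suffixOp$) on path assignments automatic.

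\textbf{From paths to path assignments.} Given $\trace \suffixEqOp \trace'' \suffixOp \trace'$, I will set $\runs''(\vrun_k) = \project{\trace''}{k}$ and $\runs'(\vrun_k) = \project{\trace'}{k}$ for each $k$: projecting a suffix of a path of $\A^n$ yields a suffix of the projection, so $\runs(\vrun_k) \suffixEqOp \runs''(\vrun_k) \suffixEqOp \runs'(\vrun_k)$, and the number $\removedEdges{\runs(\vrun_k)}{\runs''(\vrun_k)}$ of edges dropped in copy $k$ is exactly the number of the edges of $\trace$ dropped in $\trace''$ that involve copy $k$. I will then define $\pathOrder[\runs'']$ and $\pathOrder[\runs']$ as the ``shifts'' of $\pathOrder$ prescribed by the equivalence in the definition of $\suffixEqOp$, i.e.\ $(\vrun_k, i) \pathOrder[\runs''] (\vrun_l, j)$ iff $(\vrun_k, i + \removedEdges{\runs(\vrun_k)}{\runs''(\vrun_k)}) \pathOrder (\vrun_l, j + \removedEdges{\runs(\vrun_l)}{\runs''(\vrun_l)})$. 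Because $\pathOrder$ mirrors the edge order of $\trace$, this shift is a well-defined total preorder that satisfies the three clauses of \cref{definition:path_assignment} as well as the two ``boundary'' clauses demanded by $\suffixEqOp$, and $\pathOrder[\runs'']$ again mirrors the edge order of $\trace''$ (likewise for $\trace'$). This yields $(\runs, \pathOrder) \suffixEqOp (\runs'', \pathOrder[\runs'']) \suffixEqOp (\runs', \pathOrder[\runs'])$ with the asserted projections and the correspondence preserved. Finally I will upgrade the middle relation to $\suffixOp$: since $\trace'' \suffixOp \trace'$, the cut position of $\trace'$ strictly follows that of $\trace''$, and---using that $\A$ has no guard-free reset-free self-loop---each copy's projection strictly shrinks, so $(\runs'', \pathOrder[\runs'']) \suffixOp (\runs', \pathOrder[\runs'])$.

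\textbf{From path assignments to paths.} Conversely, given $(\runs, \pathOrder) \suffixEqOp (\runs'', \pathOrder[\runs'']) \suffixOp (\runs', \pathOrder[\runs'])$ with common delays $d''$ and $d'$, I will reconstruct $\trace''$ by cutting $\trace$ at the position of duration $d''$ sitting just after the $m''$-th edge of $\trace$, where $m''$ is the unique edge-index of $\trace$ that separates the edges whose projections were dropped in all copies (as counted by the numbers $\removedEdges{\runs(\vrun_k)}{\runs''(\vrun_k)}$, translated through $\pathOrderToN$) from the remaining ones; \cref{proposition:path_assignment_equal} and the time-compatibility clauses of \cref{definition:path_assignment} guarantee that $m''$ and this position exist and are consistent with $d''$. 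Projecting, $\project{\trace''}{k}$ has the same initial state as $\runs''(\vrun_k)$ (same location, and same clock valuation by the reset bookkeeping threaded through $\pathOrderToN$), the same global time $d''$, and the same continuation, hence equals $\runs''(\vrun_k)$, and the edge correspondence for $\trace''$ is inherited from that of $\trace$. The same construction with $d'$ and the larger dropped-edge counts yields $\trace'$, and $\trace \suffixEqOp \trace'' \suffixOp \trace'$ follows.

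The routine-but-delicate part, and the step I expect to be the main obstacle, is the index bookkeeping: matching ``number of edges of $\trace$ dropped'' against the per-copy counts $\removedEdges{\runs(\vrun_k)}{\runs''(\vrun_k)}$ through the monotone surjection $\pathOrderToN$, verifying that the prescribed shift produces exactly the two boundary clauses in the $\suffixEqOp$/$\suffixOp$ definitions on path assignments, and carefully handling zero-delay edges so that strictness is transferred correctly in both directions.
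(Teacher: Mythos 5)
Your plan follows essentially the same route as the paper's proof: for the first item you project the two suffixes of $\trace$ componentwise and define $\pathOrder[\runs']$, $\pathOrder[\runs'']$ as the index-shifts of $\pathOrder$ by $\removedEdges{\runs(\vrun_k)}{\cdot}$, and for the second item you recover the cut point of $\trace$ from the per-copy dropped-edge counts via $\pathOrderToN$ (the paper takes $\min_{\vrun}\pathOrderToN((\vrun,\removedEdges{\runs(\vrun)}{\runs'(\vrun)}))$ together with the residual delay $\duration(\runs-\runs')-\sum_j d_j$), exactly as you describe. The bookkeeping you flag as the delicate part is indeed what the paper's proof spends its length on, and your constructions coincide with its.
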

\begin{proof}
 For the sake of simplicity, we show the case of $n = 2$.
 The proof of the general case is by induction on $n$.
 Let $\trace = (\loc_0, \clockval_0), (d_0, \edge_0), (\loc_1, \clockval_1), \ldots$.

 For any
 $\trace' = (\loc_i, \clockval_i + d), (d_i - d, \edge_i), (\loc_{i+1}, \clockval_{i+1}), \ldots$ and
 $\trace'' = (\loc_{i'}, \clockval_{i'} + d'), (d_{i'} - d', \edge_{i'}), (\loc_{i'+1}, \clockval_{i'+1}), \ldots$
 satisfying $\trace \suffixEqOp \trace'' \suffixOp \trace'$,
 let $\runs'$, $\runs''$, $\pathOrder[\runs']$, and $\pathOrder[\runs'']$ be such that
 $\runs'(\vrun_1) = \project{\trace'}{1}$, $\runs'(\vrun_2) = \project{\trace'}{2}$,
 $\runs''(\vrun_1) = \project{\trace''}{1}$, $\runs''(\vrun_2) = \project{\trace''}{2}$, and
 for any $\vrun,\vrun' \in \domain(\runs)$ and $l, m \in \setN$, we have
 $(\vrun, l) \pathOrder[\runs''] (\vrun', m) \iff (\vrun, l + \removedEdges{\runs(\vrun)}{\runs''(\vrun)}) \pathOrder (\vrun'', m + \removedEdges{\runs(\vrun')}{\runs''(\vrun')})$ and
 $(\vrun, l) \pathOrder[\runs'] (\vrun', m) \iff (\vrun, l + \removedEdges{\runs''(\vrun)}{\runs'(\vrun)}) \pathOrder[\runs''] (\vrun', m + \removedEdges{\runs''(\vrun')}{\runs'(\vrun')})$.
 By the construction of $\trace'$, $\trace''$, $\runs'$, and $\runs''$, we have the following.
 \begin{itemize}
  \item $\duration(\runs - \runs'') = \duration(\trace - \trace'') (= d' + \sum_{j = 0}^{i' - 1} d_j)$.
  \item $\duration(\runs'' - \runs') = \duration(\trace'' - \trace') (= d + \sum_{j = 0}^{i - 1} d_j - \duration(\runs - \runs''))$.
  \item For any $k \in \{1, 2\}$, $\runs(\vrun_k) = \project{\trace}{k} \suffixEqOp \project{\trace''}{k} = \runs''(\vrun_k)$.
  \item For any $k \in \{1, 2\}$, $\runs''(\vrun_k) = \project{\trace''}{k} \suffixOp \project{\trace'}{k} = \runs'(\vrun_k)$.
  \item For any $\vrun \in \domain(\runs)$ and $j \in \setN$, $\removedEdges{\runs(\vrun)}{\runs'(\vrun)} \leq j \iff i \leq \pathOrderToN((\vrun, j))$.
  \item For any $\vrun \in \domain(\runs)$ and $j \in \setN$, $\removedEdges{\runs(\vrun)}{\runs''(\vrun)} \leq j \iff i' \leq \pathOrderToN((\vrun, j))$.
 \end{itemize}
 Particularly,
 we have $\pathOrderToN((\vrun, \removedEdges{\runs(\vrun)}{\runs'(\vrun)} - 1)) < i \leq \pathOrderToN((\vrun, \removedEdges{\runs(\vrun)}{\runs'(\vrun)}))$ and
 $\pathOrderToN((\vrun, \removedEdges{\runs(\vrun)}{\runs''(\vrun)} - 1)) < i' \leq \pathOrderToN((\vrun, \removedEdges{\runs(\vrun)}{\runs''(\vrun)}))$.
 Since $\pathOrderToN$ is monotone,
 for any $\vrun, \vrun' \in \domain(\runs)$, we have
 $(\vrun, \removedEdges{\runs(\vrun)}{\runs'(\vrun)} - 1) \pathOrder (\vrun', \removedEdges{\runs(\vrun')}{\runs'(\vrun')})$,
 $(\vrun', \removedEdges{\runs(\vrun')}{\runs'(\vrun')}) \npathOrder (\vrun, \removedEdges{\runs(\vrun)}{\runs'(\vrun)} - 1)$,
 $(\vrun, \removedEdges{\runs(\vrun)}{\runs''(\vrun)} - 1) \pathOrder (\vrun', \removedEdges{\runs(\vrun')}{\runs''(\vrun')})$, and
 $(\vrun', \removedEdges{\runs(\vrun')}{\runs''(\vrun')}) \npathOrder (\vrun, \removedEdges{\runs(\vrun)}{\runs''(\vrun)} - 1)$.
 Since $\removedEdges{\runs(\vrun')}{\runs'(\vrun')} = \removedEdges{\runs(\vrun')}{\runs''(\vrun')} + \removedEdges{\runs''(\vrun')}{\runs'(\vrun')}$,
 we have
 $(\vrun, \removedEdges{\runs''(\vrun)}{\runs'(\vrun)} - 1) \pathOrder[\runs''] (\vrun', \removedEdges{\runs''(\vrun')}{\runs'(\vrun')})$ and
 $(\vrun', \removedEdges{\runs''(\vrun')}{\runs'(\vrun')}) \npathOrder[\runs''] (\vrun, \removedEdges{\runs''(\vrun)}{\runs'(\vrun)} - 1)$.
 Overall, we have $(\runs, \pathOrder) \suffixEqOp (\runs'', \pathOrder[\runs'']) \suffixOp (\runs', \pathOrder[\runs'])$.

 For any $(\runs', \pathOrder[\runs'])$ and $(\runs'', \pathOrder[\runs''])$ satisfying $(\runs, \pathOrder) \suffixEqOp (\runs'', \pathOrder[\runs'']) \suffixOp (\runs', \pathOrder[\runs'])$,
 we let
 $\trace' = (\loc_i, \clockval_i + d), (d_i - d, \edge_i), (\loc_{i+1}, \clockval_{i+1}), \ldots$ and
 $\trace'' = (\loc_{i'}, \clockval_{i'} + d'), (d_{i'} - d', \edge_{i'}), (\loc_{i'+1}, \clockval_{i'+1}), \ldots$, where
 $i = \min_{\vrun \in \domain(\runs)} \pathOrderToN((\vrun, \removedEdges{\runs(\vrun)}{\runs'(\vrun)}))$,
 $d = \duration(\runs - \runs') - \sum_{j = 0}^{i-1} d_j$,
 $i' = \min_{\vrun \in \domain(\runs)} \pathOrderToN((\vrun, \removedEdges{\runs(\vrun)}{\runs''(\vrun)}))$, and
 $d' = \duration(\runs - \runs'') - \sum_{j = 0}^{i'-1} d_j$.
 Notice that we have $\trace \suffixEqOp \trace'' \suffixOp \trace'$.
 Since we have $\project{\trace}{1} = \runs(\vrun_1)$, $\project{\trace}{2} = \runs(\vrun_2)$,
 $\duration(\runs - \runs'') = \duration(\trace - \trace'')$, and
 for any $\vrun \in \domain(\runs)$, $j \in \setN$,
 $\removedEdges{\runs(\vrun)}{\runs''(\vrun)} \leq j \iff i' \leq \pathOrderToN((\vrun, j))$ holds,
 we have
 $\project{\trace''}{1} = \runs''(\vrun_1)$ and $\project{\trace''}{2} = \runs''(\vrun_2)$.
 Similarly,
 since we have $\project{\trace}{1} = \runs(\vrun_1)$, $\project{\trace}{2} = \runs(\vrun_2)$,
 $\duration(\runs - \runs') = \duration(\trace - \trace')$, and
 for any $\vrun \in \domain(\runs)$, $j \in \setN$,
 $\removedEdges{\runs(\vrun)}{\runs'(\vrun)} \leq j \iff i \leq \pathOrderToN((\vrun, j))$ holds,
 we have
 $\project{\trace'}{1} = \runs'(\vrun_1)$ and $\project{\trace'}{2} = \runs'(\vrun_2)$.
\end{proof}

Proof of \cref{theorem:correctness_reduction} is as follows.

\recallResult{theorem:correctness_reduction}{\correctnessReductionStatement}

\begin{proof}
 [Proof of \cref{theorem:correctness_reduction}]
 Suppose $\A \models_{\pval} \fml_{\exists}$ holds.
 By the semantics of \ExtHyperPTCTL{},
 for some extension $(\runs^{1}, \pathOrder[\runs^{1}])$ of $(\emptyruns, \pathOrder[\emptyruns])$
 satisfying $\domain(\runs) = \{\vrun_1,\vrun_2,\dots,\vrun_n\}$ and
 $\runs^{1}(\vrun_i) \in \Traces(\valuate{\A}{\pval})$ for each $i \in \{1,2,\dots,n\}$,
 there is $(\runs^{2}, \pathOrder[\runs^{2}])$ satisfying 
 $(\runs^{1}, \pathOrder[\runs^{1}]) \suffixEqOp (\runs^{2}, \pathOrder[\runs^{2}])$,
 $\duration(\runs^{1} - \runs^{2}) \compOp \valuate{\paramOrInt}{\pval}$,
 $(\runs^{2}, \pathOrder[\runs^{2}]), \Init(\runs^{2}(\vrun_n)),\UpdateCountval{\countval}{\runs^{1}}{\runs^{2}}, \UpdateRecordval{\recordval}{\runs^{1}}{\runs^{2}} \models_{\pval,\A} \fml_2$, and
 for any $(\runs^{3}, \pathOrder[\runs^{3}])$ satisfying $(\runs^{1}, \pathOrder[\runs^{1}]) \suffixEqOp (\runs^{3}, \pathOrder[\runs^{3}]) \suffixOp (\runs^{2}, \pathOrder[\runs^{2}])$, $(\runs^{3}, \pathOrder[\runs^{3}]), \Init(\runs^{3}(\vrun_n)), \UpdateCountval{\countval}{\runs^{1}}{\runs^{3}}, \UpdateRecordval{\recordval}{\runs^{1}}{\runs^{3}} \models_{\pval,\A} \fml_1$ holds.
 By \cref{lemma:composition_lemma,lemma:suffix_lemma},
 for any such $\runs^{1}$ and $\runs^{2}$,
 there are paths $\trace$ and $\trace'$ of $\valuate{\A^n}{\pval}$
 such that $\trace \suffixEqOp \trace'$ holds and
 for each $i \in \{1,2,\dots,n\}$, we have
 $\runs^{1}(\vrun_i) = \project{\trace}{i}$ and
 $\runs^{2}(\vrun_i) = \project{\trace'}{i}$.
 Let $\overline{\runs}^{1}$ and $\overline{\runs}^{2}$,
 be such that
 $\domain(\overline{\runs}^{1}) = \domain(\overline{\runs}^{2}) = \{\vrun\}$,
 $\overline{\runs}^{1}(\vrun) = \trace$, $\overline{\runs}^{2}(\vrun) = \trace'$, and
 ${\pathOrder[\overline{\runs}^{1}]} = {\pathOrder[\overline{\runs}^{2}]}$ are such that
 $(\vrun, i) \pathOrder[\overline{\runs}^{1}] (\vrun, j) \iff i \leq j$.
 For such $(\overline{\runs}^{1}, \pathOrder[\overline{\runs}^{1}])$ and $(\overline{\runs}^{2}, \pathOrder[\overline{\runs}^{2}])$,
 we have $(\overline{\runs}^{1}, \pathOrder[\overline{\runs}^{1}]) \suffixEqOp (\overline{\runs}^{2}, \pathOrder[\overline{\runs}^{2}])$,
 $\duration(\overline{\runs}^{1} - \overline{\runs}^{2}) \compOp \valuate{\paramOrInt}{\pval}$, and
 $(\overline{\runs}^{2}, \pathOrder[\overline{\runs}^{2}]), \Init(\overline{\runs}^{2}(\vrun)),\UpdateCountval{\countval}{\overline{\runs}^{1}}{\overline{\runs}^{2}}, \UpdateRecordval{\recordval}{\overline{\runs}^{1}}{\overline{\runs}^{2}} \models_{\pval,\A} \reduce^n(\fml_2)$.
 Let $(\overline{\runs}^{3}, \pathOrder[\overline{\runs}^{3}])$ be a path assignment satisfying $(\overline{\runs}^{1}, \pathOrder[\overline{\runs}^{1}]) \suffixEqOp (\overline{\runs}^{3}, \pathOrder[\overline{\runs}^{3}]) \suffixOp (\overline{\runs}^{2}, \pathOrder[\overline{\runs}^{2}])$.
 By \cref{lemma:suffix_lemma},
 for any such $(\overline{\runs}^{3}, \pathOrder[\overline{\runs}^{3}])$,
 there is a path assignment $(\runs^{3}, \pathOrder[\runs^{3}])$ satisfying
 $(\runs^{1}, \pathOrder[\runs^{1}]) \suffixEqOp (\runs^{3}, \pathOrder[\runs^{3}]) \suffixOp (\runs^{2}, \pathOrder[\runs^{2}])$, and
 $\runs^{3}(\vrun_i) = \project{\overline{\runs}^{3}(\vrun)}{i}$ for each $i \in \{1,2,\dots,n\}$.
 Since we have $\A \models_{\pval} \fml_{\exists}$, for such $(\runs^{3}, \pathOrder[\runs^{3}])$,
 we have $(\runs^{3}, \pathOrder[\runs^{3}]), \Init(\runs^{3}(\vrun_n)), \UpdateCountval{\countval}{\runs^{1}}{\runs^{3}}, \UpdateRecordval{\recordval}{\runs^{1}}{\runs^{3}} \models_{\pval,\A} \fml_1$.
 Since for each $i \in \{1,2,\dots,n\}$, we have $\runs^{3}(\vrun_i) = \project{\overline{\runs}^{3}(\vrun)}{i}$,
 for such $(\overline{\runs}^{3}, \pathOrder[\overline{\runs}^{3}])$,
 we have $(\overline{\runs}^{3}, \pathOrder[\overline{\runs}^{3}]), \Init(\overline{\runs}^{3}(\vrun)), \UpdateCountval{\countval}{\overline{\runs}^{1}}{\overline{\runs}^{3}}, \UpdateRecordval{\recordval}{\overline{\runs}^{1}}{\overline{\runs}^{3}} \models_{\pval,\A} \reduce^n(\fml_1)$.
 Therefore, we have $\A^n \models_{\pval} \reduce^n(\fml_{\exists})$.

 Suppose $\A^n \models_{\pval} \reduce^n(\fml_{\exists})$ holds.
 By the semantics of \ExtHyperPTCTL{},
 for some extension $(\overline{\runs}^{1}, \pathOrder[\overline{\runs}^{1}])$ of $(\emptyruns, \pathOrder[\emptyruns])$
 satisfying $\domain(\overline{\runs}) = \{\vrun\}$ and $\overline{\runs}^{1}(\vrun) \in \Traces(\valuate{\A^n}{\pval})$,
 there is $(\overline{\runs}^{2}, \pathOrder[\overline{\runs}^{2}])$ satisfying 
 $(\overline{\runs}^{1}, \pathOrder[\overline{\runs}^{1}]) \suffixEqOp (\overline{\runs}^{2}, \pathOrder[\overline{\runs}^{2}])$,
 $\duration(\overline{\runs}^{1} - \overline{\runs}^{2}) \compOp \valuate{\paramOrInt}{\pval}$,
 $(\overline{\runs}^{2}, \pathOrder[\overline{\runs}^{2}]), \Init(\overline{\runs}^{2}(\vrun)),\UpdateCountval{\countval}{\overline{\runs}^{1}}{\overline{\runs}^{2}}, \UpdateRecordval{\recordval}{\overline{\runs}^{1}}{\overline{\runs}^{2}} \models_{\pval,{\A}^{n}} \reduce^n(\fml_2)$, and
 for any $(\overline{\runs}^{3}, \pathOrder[\overline{\runs}^{3}])$ satisfying $(\overline{\runs}^{1}, \pathOrder[\overline{\runs}^{1}]) \suffixEqOp (\overline{\runs}^{3}, \pathOrder[\overline{\runs}^{3}]) \suffixOp (\overline{\runs}^{2}, \pathOrder[\overline{\runs}^{2}])$, $(\overline{\runs}^{3}, \pathOrder[\overline{\runs}^{3}]), \Init(\overline{\runs}^{3}(\vrun)), \UpdateCountval{\countval}{\overline{\runs}^{1}}{\overline{\runs}^{3}}, \UpdateRecordval{\recordval}{\overline{\runs}^{1}}{\overline{\runs}^{3}} \models_{\pval,{\A}^{n}} \reduce^n(\fml_1)$ holds.
 By \cref{lemma:decomposition_lemma,lemma:suffix_lemma},
 for any such $\overline{\runs}^{1}(\vrun)$ and $\overline{\runs}^{2}(\vrun)$,
 there are path assignments $(\runs^{1}, \pathOrder[\runs^{1}])$ and $(\runs^{2}, \pathOrder[\runs^{2}])$
 such that
 $\domain(\runs^{1}) = \domain(\runs^{2}) = \{\vrun_1, \vrun_2, \dots, \vrun_n\}$,
 $(\runs^{1}, \pathOrder[\runs^{1}]) \suffixEqOp (\runs^{2}, \pathOrder[\runs^{2}])$, and
 for each $i \in \{1,2,\dots,n\}$, we have
 $\runs^{1}(\vrun_i) = \project{\overline{\runs}^{1}(\vrun)}{i}$ and
 $\runs^{2}(\vrun_i) = \project{\overline{\runs}^{2}(\vrun)}{i}$.
 For such $(\runs^{1}, \pathOrder[\runs^{1}])$ and $(\runs^{2}, \pathOrder[\runs^{2}])$,
 we have $\duration(\runs^{1} - \runs^{2}) \compOp \valuate{\paramOrInt}{\pval}$ and
 $(\runs^{2}, \pathOrder[\runs^{2}]), \Init(\runs^{2}(\vrun_n)),\UpdateCountval{\countval}{\runs^{1}}{\runs^{2}}, \UpdateRecordval{\recordval}{\runs^{1}}{\runs^{2}} \models_{\pval,\A} \fml_2$.
 Let $(\runs^{3}, \pathOrder[\runs^{3}])$ be a path assignment satisfying
 $(\runs^{1}, \pathOrder[\runs^{1}]) \suffixEqOp (\runs^{3}, \pathOrder[\runs^{3}]) \suffixOp (\runs^{2}, \pathOrder[\runs^{2}])$.
 By \cref{lemma:suffix_lemma},
 for any such $(\runs^{3}, \pathOrder[\runs^{3}])$,
 there is a path assignment $(\overline{\runs}^{3}, \pathOrder[\overline{\runs}^{3}])$ satisfying
 $(\overline{\runs}^{1}, \pathOrder[\overline{\runs}^{1}]) \suffixEqOp (\overline{\runs}^{3}, \pathOrder[\overline{\runs}^{3}]) \suffixOp (\overline{\runs}^{2}, \pathOrder[\overline{\runs}^{2}])$, and
 $\runs^{3}(\vrun_i) = \project{\overline{\runs}^{3}(\vrun)}{i}$ for each $i \in \{1,2,\dots,n\}$.
 Since we have $\A^n \models_{\pval} \reduce^n(\fml_{\exists})$, for such $(\overline{\runs}^{3}, \pathOrder[\overline{\runs}^{3}])$,
 we have $(\overline{\runs}^{3}, \pathOrder[\overline{\runs}^{3}]), \Init(\overline{\runs}^{3}(\vrun)), \UpdateCountval{\countval}{\overline{\runs}^{1}}{\overline{\runs}^{3}}, \UpdateRecordval{\recordval}{\overline{\runs}^{1}}{\overline{\runs}^{3}} \models_{\pval,\A} \reduce^n(\fml_1)$.
 Since for each $i \in \{1,2,\dots,n\}$, we have $\runs^{3}(\vrun_i) = \project{\overline{\runs}^{3}(\vrun)}{i}$,
 for such $(\runs^{3}, \pathOrder[\runs^{3}])$, 
 we have $(\runs^{3}, \pathOrder[\runs^{3}]), \Init(\runs^{3}(\vrun_n)), \UpdateCountval{\countval}{\runs^{1}}{\runs^{3}}, \UpdateRecordval{\recordval}{\runs^{1}}{\runs^{3}} \models_{\pval,\A} \fml_1$.
 Therefore, we have $\A \models_{\pval} \fml_{\exists}$.

 Suppose $\A \models_{\pval} \fml_{\forall}$ holds.
 Let $(\overline{\runs}^{1}, \pathOrder[\overline{\runs}^1])$ be a path assignment satisfying
 $\domain(\overline{\runs}^{1}) = \{\vrun\}$ and
 $\overline{\runs}^{1}(\vrun) \in \Traces(\valuate{\A^n}{\pval})$.
 By \cref{lemma:decomposition_lemma},
 there is a path assignment $(\runs^{1}, \pathOrder[\runs^{1}])$ of $\valuate{\A}{\pval}$ 
 satisfying the following conditions, where
 $\pathOrderToN$ is the monotone surjection from $\domain(\runs^{1}) \times \setN$ to $\setN$.
 \begin{itemize}
  \item $\domain(\runs^{1}) = \{\vrun_1,\vrun_2,\dots,\vrun_n\}$ holds.
  \item For any $\vrun_k \in \domain(\runs^{1})$, $\runs^{1}(\vrun_k) = \project{\trace}{k}$ holds.
  \item For any $i \in \setN$, the inverse image $\pathOrderToN^{-1}(\{i\})$ satisfies
        $(\vrun_k, j) \in \pathOrderToN^{-1}(\{i\}) \iff \project{\edge_i}{k} = \edge^k_j$,
        where $\edge_i$ is the $i$-th edge of $\trace$ and $\edge^k_j$ is the $j$-th edge of $\runs^{1}(\vrun_{k})$.
 \end{itemize}
 Since we have $\A \models_{\pval} \fml_{\forall}$,
 for any such $(\runs^{1}, \pathOrder[\runs^{1}])$,
 there is $(\runs^{2}, \pathOrder[\runs^{2}])$ satisfying 
 $(\runs^{1}, \pathOrder[\runs^{1}]) \suffixEqOp (\runs^{2}, \pathOrder[\runs^{2}])$,
 $\duration(\runs^{1} - \runs^{2}) \compOp \valuate{\paramOrInt}{\pval}$,
 $(\runs^{2}, \pathOrder[\runs^{2}]), \Init(\runs^{2}(\vrun_n)),\UpdateCountval{\countval}{\runs^{1}}{\runs^{2}}, \UpdateRecordval{\recordval}{\runs^{1}}{\runs^{2}} \models_{\pval,\A} \fml_2$, and
 for any $(\runs^{3}, \pathOrder[\runs^{3}])$ satisfying $(\runs^{1}, \pathOrder[\runs^{1}]) \suffixEqOp (\runs^{3}, \pathOrder[\runs^{3}]) \suffixOp (\runs^{2}, \pathOrder[\runs^{2}])$, $(\runs^{3}, \pathOrder[\runs^{3}]), \Init(\runs^{3}(\vrun_n)), \UpdateCountval{\countval}{\runs^{1}}{\runs^{3}}, \UpdateRecordval{\recordval}{\runs^{1}}{\runs^{3}} \models_{\pval,\A} \fml_1$ holds.
 By \cref{lemma:suffix_lemma},
 for any such $\runs^{2}$,
 there is a path $\trace'$ of $\valuate{\A^n}{\pval}$
 such that for each $i \in \{1,2,\dots,n\}$, we have
 $\runs^{2}(\vrun_i) = \project{\trace'}{i}$.
 Let $\overline{\runs}^{2}$ be such that
 $\domain(\overline{\runs}^{2}) = \{\vrun\}$,
 $\overline{\runs}^{2}(\vrun) = \trace'$, and
 ${\pathOrder[\overline{\runs}^{2}]}$ is such that
 $(\vrun, i) \pathOrder[\overline{\runs}^{2}] (\vrun, j) \iff i \leq j$.
 Notice that we have ${\pathOrder[\overline{\runs}^{1}]} = {\pathOrder[\overline{\runs}^{2}]}$
 because of $\domain(\overline{\runs}^{1}) = \domain(\overline{\runs}^{2}) = \{\vrun\}$.
 For such $\overline{\runs}^{2}$,
 we have $(\overline{\runs}^{1}, \pathOrder[\overline{\runs}^{1}]) \suffixEqOp (\overline{\runs}^{2}, \pathOrder[\overline{\runs}^{2}])$,
 $\duration(\overline{\runs}^{1} - \overline{\runs}^{2}) \compOp \valuate{\paramOrInt}{\pval}$, and
 $(\overline{\runs}^{2}, \pathOrder[\overline{\runs}^{2}]), \Init(\overline{\runs}^{2}(\vrun)),\UpdateCountval{\countval}{\overline{\runs}^{1}}{\overline{\runs}^{2}}, \UpdateRecordval{\recordval}{\overline{\runs}^{1}}{\overline{\runs}^{2}} \models_{\pval,\A} \reduce^n(\fml_2)$.
 Let $(\overline{\runs}^{3}, \pathOrder[\overline{\runs}^{3}])$ be a path assignment satisfying $(\overline{\runs}^{1}, \pathOrder[\overline{\runs}^{1}]) \suffixEqOp (\overline{\runs}^{3}, \pathOrder[\overline{\runs}^{3}]) \suffixOp (\overline{\runs}^{2}, \pathOrder[\overline{\runs}^{2}])$.
 By \cref{lemma:suffix_lemma},
 for any such $(\overline{\runs}^{3}, \pathOrder[\overline{\runs}^{3}])$,
 there is a path assignment $(\runs^{3}, \pathOrder[\runs^{3}])$ satisfying
 $(\runs^{1}, \pathOrder[\runs^{1}]) \suffixEqOp (\runs^{3}, \pathOrder[\runs^{3}]) \suffixOp (\runs^{2}, \pathOrder[\runs^{2}])$, and
 $\runs^{3}(\vrun_i) = \project{\overline{\runs}^{3}(\vrun)}{i}$ for each $i \in \{1,2,\dots,n\}$.
 Since we have $\A \models_{\pval} \fml_{\forall}$, for such $(\runs^{3}, \pathOrder[\runs^{3}])$,
 we have $(\runs^{3}, \pathOrder[\runs^{3}]), \Init(\runs^{3}(\vrun_n)), \UpdateCountval{\countval}{\runs^{1}}{\runs^{3}}, \UpdateRecordval{\recordval}{\runs^{1}}{\runs^{3}} \models_{\pval,\A} \fml_1$.
 Since for each $i \in \{1,2,\dots,n\}$, we have $\runs^{3}(\vrun_i) = \project{\overline{\runs}^{3}(\vrun)}{i}$,
 for such $(\overline{\runs}^{3}, \pathOrder[\overline{\runs}^{3}])$,
 we have $(\overline{\runs}^{3}, \pathOrder[\overline{\runs}^{3}]), \Init(\overline{\runs}^{3}(\vrun)), \UpdateCountval{\countval}{\overline{\runs}^{1}}{\overline{\runs}^{3}}, \UpdateRecordval{\recordval}{\overline{\runs}^{1}}{\overline{\runs}^{3}} \models_{\pval,\A} \reduce^n(\fml_1)$.
 Therefore, we have $\A^n \models_{\pval} \reduce^n(\fml_{\forall})$.

 Suppose $\A^n \models_{\pval} \reduce^n(\fml_{\forall})$ holds.
 Let $(\runs^{1}, \pathOrder[\runs^1])$ be a path assignment satisfying
 $\domain(\runs^{1}) = \{\vrun_1, \vrun_2, \dots, \vrun_n\}$ and
 for any $i \in \{1,2,\dots,n\}$,
 $\runs^{1}(\vrun_i) \in \Traces(\valuate{\A}{\pval})$.
 By \cref{lemma:composition_lemma},
 there is a path $\trace \in \Traces(\valuate{\A^n}{\pval})$ of $\valuate{\A^n}{\pval}$
 satisfying the following conditions, where
 $\pathOrderToN$ is the monotone surjection  from $\domain(\runs) \times \setN$ to $\setN$.
 \begin{itemize}
  \item For any $\vrun_k \in \domain(\runs)$, $\runs(\vrun_k) = \project{\trace}{k}$ holds.
  \item For any $i \in \setN$, the inverse image $\pathOrderToN^{-1}(\{i\})$ satisfies
        $(\vrun_k, j) \in \pathOrderToN^{-1}(\{i\}) \iff \project{\edge_i}{k} = \edge^k_j$,
        where $\edge_i$ is the $i$-th edge of $\trace$ and $\edge^k_j$ is the $j$-th edge of $\runs^{1}(\vrun_{k})$.
 \end{itemize}
 Since we have $\A^n \models_{\pval} \reduce^n(\fml_{\forall})$,
 for the extension $(\overline{\runs}^{1}, \pathOrder[\overline{\runs}^{1}])$ of $(\emptyruns, \pathOrder[\emptyruns])$
 satisfying $\domain(\runs) = \{\vrun\}$ and $\overline{\runs}^{1}(\vrun) = \trace$,
 there is $(\overline{\runs}^{2}, \pathOrder[\overline{\runs}^{2}])$ satisfying 
 $(\overline{\runs}^{1}, \pathOrder[\overline{\runs}^{1}]) \suffixEqOp (\overline{\runs}^{2}, \pathOrder[\overline{\runs}^{2}])$,
 $\duration(\overline{\runs}^{1} - \overline{\runs}^{2}) \compOp \valuate{\paramOrInt}{\pval}$,
 $(\overline{\runs}^{2}, \pathOrder[\overline{\runs}^{2}]), \Init(\overline{\runs}^{2}(\vrun)),\UpdateCountval{\countval}{\overline{\runs}^{1}}{\overline{\runs}^{2}}, \UpdateRecordval{\recordval}{\overline{\runs}^{1}}{\overline{\runs}^{2}} \models_{\pval,\A} \fml_2$, and
 for any $(\overline{\runs}^{3}, \pathOrder[\overline{\runs}^{3}])$ satisfying $(\overline{\runs}^{1}, \pathOrder[\overline{\runs}^{1}]) \suffixEqOp (\overline{\runs}^{3}, \pathOrder[\overline{\runs}^{3}]) \suffixOp (\overline{\runs}^{2}, \pathOrder[\overline{\runs}^{2}])$,
 $(\overline{\runs}^{3}, \pathOrder[\overline{\runs}^{3}]), \Init(\overline{\runs}^{3}(\vrun)), \UpdateCountval{\countval}{\overline{\runs}^{1}}{\overline{\runs}^{3}}, \UpdateRecordval{\recordval}{\overline{\runs}^{1}}{\overline{\runs}^{3}} \models_{\pval,\A} \fml_1$ holds.
 By \cref{lemma:suffix_lemma},
 for any such $\overline{\runs}^{2}(\vrun)$,
 there is a path assignment $(\runs^{2}, \pathOrder[\runs^{2}])$
 such that
 $\domain(\runs^{1}) = \domain(\runs^{2}) = \{\vrun_1, \vrun_2, \dots, \vrun_n\}$,
 $(\runs^{1}, \pathOrder[\runs^{1}]) \suffixEqOp (\runs^{2}, \pathOrder[\runs^{2}])$, and
 for each $i \in \{1,2,\dots,n\}$, we have
 $\runs^{2}(\vrun_i) = \project{\overline{\runs}^{2}(\vrun)}{i}$.
 For such $(\runs^{2}, \pathOrder[\runs^{2}])$,
 we have $\duration(\runs^{1} - \runs^{2}) \compOp \valuate{\paramOrInt}{\pval}$ and
 $(\runs^{2}, \pathOrder[\runs^{2}]), \Init(\runs^{2}(\vrun_n)),\UpdateCountval{\countval}{\runs^{1}}{\runs^{2}}, \UpdateRecordval{\recordval}{\runs^{1}}{\runs^{2}} \models_{\pval,\A} \fml_2$.
 Let $(\runs^{3}, \pathOrder[\runs^{3}])$ be a path assignment satisfying
 $(\runs^{1}, \pathOrder[\runs^{1}]) \suffixEqOp (\runs^{3}, \pathOrder[\runs^{3}]) \suffixOp (\runs^{2}, \pathOrder[\runs^{2}])$.
 By \cref{lemma:suffix_lemma},
 for any such $(\runs^{3}, \pathOrder[\runs^{3}])$,
 there is a path assignment $(\overline{\runs}^{3}, \pathOrder[\overline{\runs}^{3}])$ satisfying
 $(\overline{\runs}^{1}, \pathOrder[\overline{\runs}^{1}]) \suffixEqOp (\overline{\runs}^{3}, \pathOrder[\overline{\runs}^{3}]) \suffixOp (\overline{\runs}^{2}, \pathOrder[\overline{\runs}^{2}])$, and
 $\runs^{3}(\vrun_i) = \project{\overline{\runs}^{3}(\vrun)}{i}$ for each $i \in \{1,2,\dots,n\}$.
 Since we have $\A^n \models_{\pval} \reduce^n(\fml_{\forall})$, for such $(\overline{\runs}^{3}, \pathOrder[\overline{\runs}^{3}])$,
 we have $(\overline{\runs}^{3}, \pathOrder[\overline{\runs}^{3}]), \Init(\overline{\runs}^{3}(\vrun)), \UpdateCountval{\countval}{\overline{\runs}^{1}}{\overline{\runs}^{3}}, \UpdateRecordval{\recordval}{\overline{\runs}^{1}}{\overline{\runs}^{3}} \models_{\pval,\A} \reduce^n(\fml_1)$.
 Since for each $i \in \{1,2,\dots,n\}$, we have $\runs^{3}(\vrun_i) = \project{\overline{\runs}^{3}(\vrun)}{i}$,
 for such $(\runs^{3}, \pathOrder[\runs^{3}])$, 
 we have $(\runs^{3}, \pathOrder[\runs^{3}]), \Init(\runs^{3}(\vrun_n)), \UpdateCountval{\countval}{\runs^{1}}{\runs^{3}}, \UpdateRecordval{\recordval}{\runs^{1}}{\runs^{3}} \models_{\pval,\A} \fml_1$.
 Therefore, we have $\A \models_{\pval} \fml_{\forall}$.
\end{proof}
}

\subsection{Detailed proof of \cref{theorem:correctness_observers}}\label{section:proof:correctness_observers}

Before proving \cref{theorem:correctness_observers}, we show the completeness of the observers.

\begin{lemma}
 [completeness of the observers]%
 \label{lemma:completeness_observers}
 Let $\A$ be a PTA, $\fullFml$ be an \ExtPTCTL{} formula, and $\pval$ be a parameter valuation for the union of the parameters in $\A$ and~$\fullFml$.
 For any path $\trace$ of $\valuate{\A}{\pval}$,
 there is a path $\trace'$ of $\valuate{\A \productOp \observerof{\fullFml}}{\pval}$ satisfying $\project{\trace'}{\valuate{\A}{\pval}} = \trace$, where
 $\project{\trace'}{\valuate{\A}{\pval}}$ is the projection of $\trace'$ to the path of $\valuate{\A}{\pval}$.
\end{lemma}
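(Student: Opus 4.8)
The plan is to reduce first to the case of a single observer. Since $\observerof{\fullFml}$ is a synchronised product of the observers $\observerof{\mathit{ext}}$ for $\mathit{ext} \in \extof{\fullFml}$, and each $\observerof{\mathit{ext}}$ reads only atomic propositions of $\A$ while contributing a single fresh proposition, one can compose $\valuate{\A}{\pval}$ with these observers one at a time. I would therefore prove the stronger claim that for every PTA $\mathcal{B}$ whose atomic propositions include those of $\A$ and are disjoint from $\mathit{ext}$, every extended predicate $\mathit{ext}$ over $\A$'s propositions, every $\pval$, and every path $\trace$ of $\valuate{\mathcal{B}}{\pval}$, there is a path $\trace'$ of $\valuate{\mathcal{B} \productOp \observerof{\mathit{ext}}}{\pval}$ projecting onto $\trace$; applying this repeatedly with $\mathcal{B} = \A, \A \productOp \observerof{\mathit{ext}_1}, \dots$ and using that the projection onto $\valuate{\A}{\pval}$ factors through the intermediate projections yields the lemma.

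To prove this claim I would build $\trace'$ alongside $\trace = \state_0, (d_0, \edge_0), \state_1, \dots$ by induction, maintaining the invariant that after $i$ steps the observer location carries the same truth values (on the propositions it shares with $\A$) as the labelling of the current $\mathcal{B}$-location. For the initial step I would pick, for a $\COUNT$-type observer, the initial location with zero counter valuation and proposition set equal to the labelling of $\state_0$; and for a $\LAST$-type observer $\observerof{\fml}$ with $\fml = \LAST(\action^1_{\vrun}) - \LAST(\action^2_{\vrun}) \compOp \lterm$, the unique initial location whose proposition set matches the labelling of $\state_0$ and which contains $\fml$ exactly when $\pval$ satisfies $0 \compOp \lterm$. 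The key observation is that observer invariants are $\top$ except at initial locations, where they are the parameter-only constraint $0 \compOp \lterm$ or its negation; by this choice the selected initial location's invariant is satisfied by $\pval$, so the merged initial state lies in $\SInit$ of the product and every later delay is unobstructed on the observer side. For the inductive step, the delay part of step $i$ is immediate (the observer invariant holds throughout), and for the discrete part I would distinguish two cases: if $\edge_i$ leaves the observer's propositions unchanged, take the product edge in which only $\mathcal{B}$ moves (which exists because the labellings still agree) and keep the observer in place; if $\edge_i$ changes them, take a synchronised product edge whose observer component is the edge matching the new labelling, with its $\fml$-component --- when the set of rising propositions is non-empty in the $\LAST$ case --- chosen between the ``re-evaluate to true'' and ``re-evaluate to false'' edges according to which of their complementary guards the current clock values satisfy, which simultaneously fixes the membership of $\fml$ and resets the clocks of the rising propositions. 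The resulting $\trace'$ is infinite and non-Zeno since its delays are those of $\trace$, its projection onto $\valuate{\mathcal{B}}{\pval}$ is $\trace$ by construction, and the maintained invariant is preserved.

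The main obstacle I anticipate is the bookkeeping against the synchronised-product definition: one must verify that for every shape of label change an edge of $\mathcal{B}$ can induce on the observer's propositions --- pure drops, pure rises, mixed, or no change --- there is a matching product transition, using ordinary synchronised edges when the relevant labelling changes and ``$\mathcal{B}$-only'' edges (or, for $\COUNT$ observers, self-loops) when it does not, and that the guard of the chosen observer edge is then satisfied, which for $\LAST$ observers reduces to the complementarity of the two re-evaluation guards. The only other subtlety, the parameter-dependent invariants at the initial observer locations, is handled by the consistent choice of initial location described above; for the counting and modulo-counting observers the argument is strictly simpler, since all invariants and guards are $\top$ and the successor counter valuation is uniquely determined, so every step of $\trace$ is matched by a synchronised edge.
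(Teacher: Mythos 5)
Your proposal is correct and follows essentially the same route as the paper's proof: reduce to a single observer, then argue that from any observer location every possible change of the watched propositions is matched by an enabled edge (using the complementarity of the two re-evaluation guards for the $\LAST$ observer, and the triviality of all guards for the counting observers). You additionally make explicit the inductive path construction and the choice of initial observer location consistent with the parameter-dependent invariant $0 \compOp \lterm$, details the paper's sketch leaves implicit.
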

\begin{proof}
 The result of the general case is immediate from the result of the base cases (\ie{} $\fullFml$ is one of
$\LAST(\action^1_{\vrun}) - \LAST(\action^2_{\vrun}) \compOp \lterm$,
$\sum_{\action \in \Actions} \alpha_{\action} \COUNT(\action_{\vrun}) \compOp d$, and
$(\sum_{\action \in \Actions} \alpha'_{\action} \COUNT(\action_{\vrun}) \bmod N) \compOp d$), and we focus on them.

 {%
 \newcommand{\riseActions}{\Actions_{\mathrm{rise}}}%
 \newcommand{\currentGuard}{(\action^{1} - \action^{2} \compOp \lterm)[\riseActions \coloneq 0]}%
 \newcommand{\isInitialized}{b}
 For $\fullFml = \LAST(\action^{1}_{\vrun}) - \LAST(\action^{2}_{\vrun}) \compOp \lterm$,
 since $\Label$ is the identity function,
 it suffices to show that
 for any $\loc = (\props, \isInitialized) \in \Loc = \powerset{\{\action^1, \action^2, \fml\}} \times \{\top, \bot\}$ and
 $\props'_{\Actions} \subseteq \{\action^1, \action^2\}$ with $\props \cap \Actions \neq \props'_{\Actions}$ and
 for any clock valuation $\clockval \colon \{\action^{1}, \action^{2}\} \to \setRnn$,
 there is another clock valuation $\clockval'$ and $\loc' = (\props', \isInitialized') \in \Loc$
 satisfying $\props'_{\Actions} = \props' \cap \Actions$ and
 $(\loc, \clockval) \flecheRel (\loc', \clockval')$.
 Let $\props_{\Actions} = \props \cap \Actions$.
 If $\props'_{\Actions} \subsetneq \props_{\Actions}$,
 we have $(\loc, \top, \emptyset, \loc') \in \Edges$, and
 $(\loc, \clockval) \flecheRel (\loc', \clockval')$ holds
 for $\clockval' = \clockval$ and
 $\loc' = (\props'_{\Actions} \cup (\props \cap \{\fullFml\}), \top)$.
 If $\props'_{\Actions} \not\subsetneq \props_{\Actions}$,
 there are transitions from $\loc$ with guards
 $\currentGuard$ and $\neg\currentGuard$ to appropriate $\loc'$.
 Therefore,
 $(\loc, \clockval) \flecheRel (\loc', \clockval')$ is satisfied
 for appropriately chosen $\loc'$ and $\clockval'$.
 }

 Assume
 $\fullFml = \sum_{\action \in \Actions} \alpha_{\action} \COUNT(\action_{\vrun}) \compOp d$. %
 For any location $(p, \approxcountval) \in \powerset{\Actions} \times \{0,1,\dots, d, d+1\}^{\Actions}$ of $\observerof{\fullFml}$,
 we have $\Label((p, \approxcountval)) \cap \Actions = p$.
 Moreover, since for any transition of $\observerof{\fullFml}$, the guard is $\top$,
 it suffices to show that
 for any $p, p' \in \powerset{\Actions}$ and $\approxcountval \in \{0,1,\dots, d, d+1\}^{\Actions}$ with $p \neq p'$,
 there is $\approxcountval' \in \{0,1,\dots, d, d+1\}^{\Actions}$ such that
 $((p, \approxcountval), \top, \emptyset, (p', \approxcountval')) \in \Edges$, which is straightforward from the definition.
 The proof is similar for $\fullFml = (\sum_{\action \in \Actions} \alpha'_{\action} \COUNT(\action_{\vrun}) \bmod N) \compOp d$.
\end{proof}

Proof of \cref{theorem:correctness_observers} is as follows.

\recallResult{theorem:correctness_observers}{\correctnessObserversStatement}

\begin{proof}
 [Proof \LongVersion{of \cref{theorem:correctness_observers}}\ShortVersion{(sketch)}]
 Since the other cases are similar, we only show \ShortVersion{that}\LongVersion{$\A \models_{\pval} \fml_{\exists} \iff \A \productOp \observerof{\fullFml} \models_{\pval} \rmextfrom{\fml_{\exists}}$}\ShortVersion{ $\A \models_{\pval} \fml_{\exists}$ \changed{implies} $\A \productOp \observerof{\fullFml} \models_{\pval} \rmextfrom{\fml_{\exists}}$}, where
 $\fml_{\exists} = \HEUntilFml[\compOp \paramOrInt]{\vrun}{\fml_1}{\fml_2}$.
 Suppose $\A \models_{\pval} \fml_{\exists}$ holds.
 By the semantics of ExtPTCTL,
 for some extension $(\runs^{1}, \pathOrder[\runs^{1}])$ of $(\emptyruns, \pathOrder[\emptyruns])$
 satisfying $\domain(\runs) = \{\vrun\}$ and
 $\runs^{1}(\vrun) \in \Traces(\valuate{\A}{\pval})$,
 there is $(\runs^{2}, \pathOrder[\runs^{2}])$ satisfying
 $(\runs^{1}, \pathOrder[\runs^{1}]) \suffixEqOp (\runs^{2}, \pathOrder[\runs^{2}])$,
 $\duration(\runs^{1} - \runs^{2}) \compOp \valuate{\paramOrInt}{\pval}$,
 $(\runs^{2}, \pathOrder[\runs^{2}]), \Init(\runs^{2}(\vrun)),\UpdateCountval{\countval}{\runs^{1}}{\runs^{2}}, \UpdateRecordval{\recordval}{\runs^{1}}{\runs^{2}} \models_{\pval,\A} \fml_2$, and
 for any $(\runs^{3}, \pathOrder[\runs^{3}])$ satisfying $(\runs^{1}, \pathOrder[\runs^{1}]) \suffixEqOp (\runs^{3}, \pathOrder[\runs^{3}]) \suffixOp (\runs^{2}, \pathOrder[\runs^{2}])$, $(\runs^{3}, \pathOrder[\runs^{3}]), \Init(\runs^{3}(\vrun)), \UpdateCountval{\countval}{\runs^{1}}{\runs^{3}}, \UpdateRecordval{\recordval}{\runs^{1}}{\runs^{3}} \models_{\pval,\A} \fml_1$ holds.
 \LongVersion{By \cref{lemma:completeness_observers}}\ShortVersion{Since the observer $\observerof{\fullFml}$ is complete}, there is a path $\trace^{1}$ of $\A \productOp \observerof{\fullFml}$ satisfying $\project{\trace^{1}}{\valuate{\A}{\pval}} = \runs^{1}(\vrun)$.
 Moreover, by taking a suitable prefix of $\trace^{1}$, for the above $\runs^{2}$,
 there is a path $\trace^{2}$ of $\A \productOp \observerof{\fullFml}$ satisfying
 $\project{\trace^{2}}{\valuate{\A}{\pval}} = \runs^{2}(\vrun)$.
 For any $\trace^{3}$ satisfying $\trace^{1} \suffixEqOp \trace^{3} \suffixOp \trace^{2}$,
 since we have $\project{\trace^{1}}{\valuate{\A}{\pval}} = \runs^{1}(\vrun)$ and $\project{\trace^{2}}{\valuate{\A}{\pval}} = \runs^{2}(\vrun)$,
 by taking a suitable prefix of $\runs^{1}(\vrun)$, %
 there is a path assignment $(\runs^{3}, \pathOrder[\runs^{3}])$ satisfying
 $(\runs^{1}, \pathOrder[\runs^{1}]) \suffixEqOp (\runs^{3}, \pathOrder[\runs^{3}]) \suffixOp (\runs^{2}, \pathOrder[\runs^{2}])$ and
 $\project{\trace^{3}}{\valuate{\A}{\pval}} = \runs^{3}(\vrun)$.
 Overall, by \cref{lemma:correctness_observers},
 for some extension $(\runs^{1}_{\observerof{}}, \pathOrder[\runs^{1}_{\observerof{}}])$ of $(\emptyruns, \pathOrder[\emptyruns])$
 satisfying $\domain(\runs) = \{\vrun\}$ and
 $\runs^{1}_{\observerof{}}(\vrun) \in \Traces(\valuate{\A \productOp \observerof{\fullFml}}{\pval})$,
 there is $(\runs^{2}_{\observerof{}}, \pathOrder[\runs^{2}_{\observerof{}}])$ satisfying
 $(\runs^{1}_{\observerof{}}, \pathOrder[\runs^{1}_{\observerof{}}]) \suffixEqOp (\runs^{2}_{\observerof{}}, \pathOrder[\runs^{2}_{\observerof{}}])$,
 $\duration(\runs^{1}_{\observerof{}} - \runs^{2}_{\observerof{}}) \compOp \valuate{\paramOrInt}{\pval}$,
 $(\runs^{2}_{\observerof{}}, \pathOrder[\runs^{2}_{\observerof{}}]), \Init(\runs^{2}_{\observerof{}}(\vrun)),\UpdateCountval{\countval}{\runs^{1}_{\observerof{}}}{\runs^{2}_{\observerof{}}}, \UpdateRecordval{\recordval}{\runs^{1}_{\observerof{}}}{\runs^{2}_{\observerof{}}} \models_{\pval,\productOp \observerof{\fullFml}} \rmextfrom{\fml_2}$, and
 for any $(\runs^{3}_{\observerof{}}, \pathOrder[\runs^{3}_{\observerof{}}])$ satisfying $(\runs^{1}_{\observerof{}}, \pathOrder[\runs^{1}_{\observerof{}}]) \suffixEqOp (\runs^{3}_{\observerof{}}, \pathOrder[\runs^{3}_{\observerof{}}]) \suffixOp (\runs^{2}_{\observerof{}}, \pathOrder[\runs^{2}_{\observerof{}}])$, we have $(\runs^{3}_{\observerof{}}, \pathOrder[\runs^{3}_{\observerof{}}]), \Init(\runs^{3}_{\observerof{}}(\vrun)), \UpdateCountval{\countval}{\runs^{1}_{\observerof{}}}{\runs^{3}_{\observerof{}}}, \UpdateRecordval{\recordval}{\runs^{1}_{\observerof{}}}{\runs^{3}_{\observerof{}}} \models_{\pval,\A \productOp \observerof{\fullFml}} \rmextfrom{\fml_1}$.
 Therefore, we have $\A \productOp \observerof{\fullFml} \models_{\pval} \rmextfrom{\fml_{\exists}}$.
 \LongVersion{

 Suppose $\A \productOp \observerof{\fullFml} \models_{\pval} \rmextfrom{\fml_{\exists}}$ holds.
 By the semantics of PTCTL,
 for some extension $(\runs^{1}_{\observerof{}}, \pathOrder[\runs^{1}_{\observerof{}}])$ of $(\emptyruns, \pathOrder[\emptyruns])$
 satisfying $\domain(\runs) = \{\vrun\}$ and
 $\runs^{1}_{\observerof{}}(\vrun) \in \Traces(\valuate{\A \productOp \observerof{\fullFml}}{\pval})$,
 there is $(\runs^{2}_{\observerof{}}, \pathOrder[\runs^{2}_{\observerof{}}])$ satisfying
 $(\runs^{1}_{\observerof{}}, \pathOrder[\runs^{1}_{\observerof{}}]) \suffixEqOp (\runs^{2}_{\observerof{}}, \pathOrder[\runs^{2}_{\observerof{}}])$,
 $\duration(\runs^{1}_{\observerof{}} - \runs^{2}_{\observerof{}}) \compOp \valuate{\paramOrInt}{\pval}$,
 $(\runs^{2}_{\observerof{}}, \pathOrder[\runs^{2}_{\observerof{}}]), \Init(\runs^{2}_{\observerof{}}(\vrun)),\UpdateCountval{\countval}{\runs^{1}_{\observerof{}}}{\runs^{2}_{\observerof{}}}, \UpdateRecordval{\recordval}{\runs^{1}_{\observerof{}}}{\runs^{2}_{\observerof{}}} \models_{\pval,\productOp \observerof{\fullFml}} \rmextfrom{\fml_2}$, and
 for any $(\runs^{3}_{\observerof{}}, \pathOrder[\runs^{3}_{\observerof{}}])$ satisfying
 $(\runs^{1}_{\observerof{}}, \pathOrder[\runs^{1}_{\observerof{}}]) \suffixEqOp (\runs^{3}_{\observerof{}}, \pathOrder[\runs^{3}_{\observerof{}}]) \suffixOp (\runs^{2}_{\observerof{}}, \pathOrder[\runs^{2}_{\observerof{}}])$, $(\runs^{3}_{\observerof{}}, \pathOrder[\runs^{3}_{\observerof{}}]), \Init(\runs^{3}_{\observerof{}}(\vrun)), \UpdateCountval{\countval}{\runs^{1}_{\observerof{}}}{\runs^{3}_{\observerof{}}}, \UpdateRecordval{\recordval}{\runs^{1}_{\observerof{}}}{\runs^{3}_{\observerof{}}} \models_{\pval,\A \productOp \observerof{\fullFml}} \rmextfrom{\fml_1}$ holds.
 By \cref{lemma:correctness_observers},
 the satisfaction of the new propositions in
 $\runs^{1}_{\observerof{}}(\vrun)$ and
 $\runs^{2}_{\observerof{}}(\vrun)$ coincide with the satisfaction of the extended predicates in
 $\project{\runs^{1}_{\observerof{}}(\vrun)}{\valuate{\A}{\pval}}$ and
 $\project{\runs^{2}_{\observerof{}}(\vrun)}{\valuate{\A}{\pval}}$, respectively.
 \LongVersion{By \cref{lemma:completeness_observers}}\ShortVersion{Since the observer $\observerof{\fullFml}$ is complete},
 for any $\trace^3$ satisfying
 $\project{\runs^{1}_{\observerof{}}(\vrun)}{\valuate{\A}{\pval}} \suffixEqOp \trace^3 \suffixEqOp \project{\runs^{2}_{\observerof{}}(\vrun)}{\valuate{\A}{\pval}}$,
 there is an path assignment $(\runs^{3}_{\observerof{}}, \pathOrder[\runs^{3}_{\observerof{}}])$ satisfying
 $(\runs^{1}_{\observerof{}}, \pathOrder[\runs^{1}_{\observerof{}}]) \suffixEqOp (\runs^{3}_{\observerof{}}, \pathOrder[\runs^{3}_{\observerof{}}]) \suffixOp (\runs^{2}_{\observerof{}}, \pathOrder[\runs^{2}_{\observerof{}}])$ and
 $\project{\runs^{3}_{\observerof{}}}{\valuate{\A}{\pval}} = \trace^3$.
 Overall, we have $\A \models_{\pval} \fml_{\exists}$.}
\end{proof}
\section{Omitted definition}
\subsection{Syntax of \NFExtHyperPTCTLBDR{}}\label{appendix:NFExtHyperPTCTLBDR}
\begin{definition}[Syntax of \NFExtHyperPTCTLBDR{}]\label{definition:class-BDR08}
 For atomic propositions $\Actions$ and parameters $\Param$,
 the syntax of \NFExtHyperPTCTLBDR{} formulas of the Boolean level~$\Boolean$, the temporal level~$\fml$, and the top level~$\fullFml$ are defined as follows, where
 $\action \in \Actions$,
 $\paramOrInt \in \Param \cup \setN$,
 $n \in \setN$,
 $\param \in \Param$, and
 $\lterm$ is a linear term over~$\Param$:
 \begin{align*}
  \Boolean \Coloneq & \top \mid \action_{\vrun} \mid \COUNT(\action_{\vrun}) \compOp n
  \\
  & \mid \LAST(\action_{\vrun}) - \LAST(\action_{\vrun}) \compOp n \mid \neg \Boolean \mid \Boolean \lor \Boolean
  \\
  \fml \Coloneq & \HEUntilFml[\compOp \paramOrInt]{\vrun_1,\vrun_2,\dots,\vrun_n}{\Boolean}{\Boolean}
  \\
  & \mid \HAUntilFml[\compOp \paramOrInt]{\vrun_1,\vrun_2,\dots,\vrun_n}{\Boolean}{\Boolean}
  \\
 \fullFml \Coloneq & \fml \mid \param \compOp \ltermNN \mid \neg \fullFml \mid \fullFml \lor \fullFml \mid \PExists \param\, \fullFml
 \end{align*}
\end{definition}
}

\end{document}

